\DeclareMathOperator*{\argmax}{arg\,max}
\newcommand{\nosemic}{\renewcommand{\@endalgocfline}{\relax}}
\newcommand{\dosemic}{\renewcommand{\@endalgocfline}{\algocf@endline}}
\let\oldnl\nl
\newcommand{\nonl}{\renewcommand{\nl}{\let\nl\oldnl}}
\definecolor{LightCyan}{rgb}{0.88,1,1}
\definecolor{LightCyan2}{rgb}{0.48,0.69,0.99}
\definecolor{LightRed}{rgb}{0.99,0.44,0.43}
\definecolor{blueStrong}{rgb}{0.1,0.16,0.31}
\definecolor{linecolor}{rgb}{0.07,0.33,0.66}
\definecolor{bleu}{rgb}{0.07,0.33,0.76}
\definecolor{bleuvictoria}{rgb}{0.03,0.35,0.61}
\definecolor{bleudepths}{rgb}{0.15,0.39,0.32}
\definecolor{bleumalibu}{rgb}{0.12,0.25,0.96}
\definecolor{rouge}{rgb}{0.83,0.08,0}
\definecolor{racingred}{rgb}{0.74,0.09,0.17}
\definecolor{vert}{rgb}{0.12,0.74,0.13}
\definecolor{gold}{rgb}{1,0.84,0}
\definecolor{gold2}{rgb}{0.99,0.94,0.7}
\definecolor{aurometalsaurus}{rgb}{0.43, 0.5, 0.5}
\definecolor{cadetgrey}{rgb}{0.57, 0.64, 0.69}
\definecolor{lemon}{rgb}{0.94, 0.992, 0.37}
\definecolor{coffee}{rgb}{0.435, 0.305, 0.215}
\definecolor{liver}{rgb}{0.325, 0.294, 0.309}
\definecolor{rust}{rgb}{0.717, 0.155, 0.055}
\definecolor{darkkhaki}{rgb}{0.717, 0.155, 0.055}
\definecolor{tawny}{rgb}{0.74,0.71,0.42}
\definecolor{darkkhaki}{rgb}{0.817, 0.34, 0}
\definecolor{cocoabrown}{rgb}{0.208, 0.157, 0.118}
\definecolor{lilas}{rgb}{0.71, 0.4, 0.82}
\newtheorem{property}{Property}
\newtheorem{thm}{Theorem}
\newtheorem{theorem}{Theorem}
\newtheorem{observation}[theorem]{Observation}
\title{A Multiagent Path Search Algorithm for Large-Scale\\ Coalition Structure Generation}
\author {
    Redha Taguelmimt\textsuperscript{\rm 1},
    Samir Aknine\textsuperscript{\rm 2},
    Djamila Boukredera\textsuperscript{\rm 3},
    Narayan Changder\textsuperscript{\rm 4},\\
    Tuomas Sandholm\textsuperscript{\rm 5,6,7,8}
}
\begin{document}

\maketitle

\begin{abstract}
Coalition structure generation (CSG), i.e. the problem of optimally partitioning a set of agents into coalitions to maximize social welfare, is a fundamental computational problem in multiagent systems. This problem is important for many applications where small run times are necessary, including transportation and disaster response. 
In this paper, we develop SALDAE, a multiagent path finding algorithm for CSG that operates on a graph of coalition structures. Our algorithm utilizes a variety of heuristics and strategies to perform the search and guide it. It is an anytime algorithm that can handle large problems with hundreds and thousands of agents. We show empirically on nine standard value distributions, including disaster response and electric vehicle allocation benchmarks, that our algorithm enables a rapid finding of high-quality solutions and compares favorably with other state-of-the-art methods.
\end{abstract}

\section{Introduction}

\textit{Coalition formation} is a major problem in artificial intelligence that is central to many practical applications. Coalitions of delivery companies can, for instance, be formed to reduce transportation costs by sharing deliveries~\cite{sandholm1997coalitions}. In disaster response, hundreds of human responders can be quickly organized into teams to coordinate their evacuation and rescue actions~\cite{wu2020monte}. 
Central to coalition formation is the \textit{coalition structure generation} problem. It consists of identifying the optimal partitioning of a set of agents--- that is, an optimal set of coalitions among agents that maximizes the sum of the values of the coalitions (an optimal coalition structure).

Several algorithms have been proposed for this problem, including optimal and approximate solutions. Optimal solutions require a huge execution time and a large storage space due to the exponentiality of the input and the solution space. This limits their applicability to large-scale problems. Indeed, due to this, exact algorithms can only handle small numbers of agents (around 30)~\cite{wu2020monte}. Furthermore, we do not expect an algorithm that always finds an optimal solution to this problem for large-scale settings with hundreds of agents, as there is no proven guarantee that an algorithm can find an optimal solution without enumerating all $2^{n}$ coalition values. 
To meet the need of addressing this limitation, 
scalable solutions that scale to hundreds and thousands of agents have been suggested. Along this line, approaches such as CSG-UCT~\cite{wu2020monte}, 
PICS~\cite{10098066} and C-Link~\cite{farinelli2013c} have been proposed. However, to our knowledge, PICS, and CSG-UCT produce the best results among existing state-of-the-art algorithms for solving large-scale problem instances.

Given the complexity of this problem, we propose SALDAE (Scalable Algorithm with Large-Scale Distributed Agent Exploration)---a scalable and anytime algorithm for solving the \textit{coalition structure generation} problem. Specifically, we develop a variant of multiagent path finding for coalition structure generation by gradually building a search graph of coalition structures based on three expansion steps. To the best of our knowledge, this is the first algorithm for this problem using any variant of path finding. This algorithm is scalable and can be run with large problem instances with thousands of agents, which is hard to achieve with existing state-of-the-art exact algorithms. 
Moreover, it 
can return an anytime solution when time is limited. To summarize, our main contributions are: 

\begin{itemize}
    \item We present SALDAE, a new algorithm inspired by multiagent path finding concepts to search the coalition structure graph. It is anytime and scales to thousands of agents.
    \item To improve the search process, we propose various strategies inspired by MAPF (Multiagent path finding) techniques for connecting the best solutions found at a given step during the execution in an effort to find even better ones. We also explore different techniques for selecting child nodes and resolving conflicts between search agents, adapting ideas from MAPF to the CSG domain. Our results show that the use of these MAPF-inspired heuristics can significantly reduce the number of search steps required to find high-quality solutions.
    \item We empirically show that SALDAE outperforms the state-of-the-art algorithms for solving both small and large problems when generating anytime solutions.
\end{itemize}

\section{Problem Formulation}

In CSG, we are given a set of $n$ agents, represented by $\mathcal{A} = \{a_{1}, a_{2}, . . . , a_{n}\}$, and a characteristic function $v$ that assigns a real value to each coalition, indicating the efficiency of the coalition. A coalition $\mathcal{C}$ is any non-empty subset of $\mathcal{A}$, and the size of $\mathcal{C}$ is denoted by $|\mathcal{C}|$.
A coalition structure $\mathcal{CS}$ is a partition of the set of agents $\mathcal{A}$ into disjoint coalitions, formally defined as a collection of coalitions 
$\mathcal{CS} =\{\mathcal{C}_{1}, \mathcal{C}_{2},...,\mathcal{C}_{k}\}$, where $k= |\mathcal{CS}|$, and the following constraints are satisfied:  $\bigcup_{j=1}^{k} \mathcal{C}_{i} = \mathcal{A}$ (all agents are included in the coalition structure) and for all $i,j \in \{1,2,...,k\}$ where $i \neq j$, $\mathcal{C}_{i} \cap \mathcal{C}_{j} = \emptyset$ (each agent is included in exactly one coalition).
\par
Let $\Pi(\mathcal{A})$ denote the set of all coalition structures. The value of a coalition structure $\mathcal{CS}$ is assessed as the sum of the values of the disjoint coalitions that comprise it: $v (\mathcal{CS}) = \sum_{\mathcal{C} \in \mathcal{CS}} v(\mathcal{C})$. 
The CSG problem aims at finding the optimal solution, which is the most valuable coalition structure $\mathcal{CS}^{*} \in \Pi(\mathcal{A})$, i.e. $\mathcal{CS}^{*}=  $argmax$_{\mathcal{CS} \in \Pi(\mathcal{A})} v(\mathcal{CS})$, for a given set of agents $\mathcal{A}$. However, when time is limited having a good-enough quality solution within a reasonable time is more valuable.

The coalition structure graph, first introduced by ~\cite{sandholm1999coalition}, is a way to represent the search space as a graph composed of nodes representing the coalition structures. For a given set of $n$ agents, these nodes are organized into $n$ levels, where each level consists of nodes representing coalition structures that contain exactly $i$ coalitions ($i \in \{1,..,n\}$). Each edge of this graph connects two nodes belonging to two consecutive levels, such that each coalition structure at level $i$ can be obtained by dividing a coalition from a coalition structure at level $i - 1$ into two coalitions.

\section{Related Work}

Many approaches have been proposed to solve the CSG problem either optimally or approximately, including dynamic programming algorithms, anytime algorithms, heuristic algorithms, and scalable solutions. 
Dynamic programming approaches, such as those proposed in~\cite{yeh1986dynamic,rahwan2008improved,michalak2016hybrid}, guarantee to find the optimal coalition structure but must be run to completion to do so. 
Anytime algorithms~\cite{sandholm1999coalition,dang2004generating,rahwan2009anytime}, on the other hand, allow for premature termination while providing intermediate solutions during execution. 
The hybrid algorithms that combine dynamic programming algorithms with anytime algorithms~\cite{michalak2016hybrid,changder2020odss,Changder_Aknine_Ramchurn_Dutta_2021,ijcai2024p27,ijcai2023p35}  are the fastest exact algorithms. 

Heuristic algorithms, such as those proposed in~\cite{sen2000searching,keinanen2009simulated,di2010coalition}, prioritize speed and do not guarantee to find an optimal solution. These algorithms are useful when the number of agents increases and the problem becomes too hard to solve optimally. For instance, the simulated annealing method~\cite{keinanen2009simulated}, a stochastic local search algorithm, explores different neighborhoods of coalition structures by splitting, merging, or shifting agents. It starts with a random coalition structure and moves to a new one in its neighborhood at each iteration, with the movement accepted with a probability that depends on the difference in utility and a decreasing temperature parameter. 

Very few scalable solutions to CSG exist. For example, the Monte Carlo tree search method proposed in~\cite{wu2020monte} finds solutions by sampling the coalition structure graph and partially expanding a search tree that corresponds to a partial search space that has been explored. The hierarchical clustering approach proposed in~\cite{farinelli2013c} builds a coalition structure by merging coalitions using a similarity criterion based on the gain that the system achieves if two coalitions merge. The search algorithms FACS and PICS proposed in~\cite{9643288,10098066} 
generate coalition structures based on code permutations applied to selected initial vectors of a different search space representation. To the best of our knowledge, PICS~\cite{10098066} and CSG-UCT~\cite{wu2020monte} are the best performing of the prior algorithms. 
Our proposed algorithm evaluates possible splits of coalitions and possible merges of coalition pairs. Unlike some other approaches, such as the C-Link method~\cite{farinelli2013c}, which cannot split coalitions once they are merged, our algorithm allows for coalitions to be split and merged multiple times. This makes our algorithm less likely to get trapped in local maxima. 


Multiagent path finding (MAPF)~\cite{stern2019multi2} is another important problem in multiagent systems, where agents must navigate through a given environment to reach their goals while avoiding collisions with each other. This problem of planning paths for multiple agents is also known to be NP-hard~\cite{Yu_LaValle_2013} and has been studied extensively in the literature. It is inspired by real-world applications such as warehouse
logistics~\cite{Ma2017LifelongMP}, autonomous aircraft-towing vehicles~\cite{Morris2016PlanningSA}, airport operations~\cite{Li2019SchedulingAA}, and video games~\cite{li2020moving}. 
One of the most popular algorithms for MAPF is the Conflict-Based Search (CBS) algorithm~\cite{sharon2015conflict}. CBS is a complete algorithm that guarantees to find the optimal solution if one exists. Other algorithms~\cite{Gange_Harabor_Stuckey_2021,Barer2014SuboptimalVO,Li_Tinka_Kiesel_Durham_Kumar_Koenig_2021,li2021eecbs,Li_Gange_Harabor_Stuckey_Ma_Koenig_2020,Li2020EECBSAB} based on CBS and other methods have been developed to solve this problem in optimal, suboptimal, or bounded suboptimal ways. 
In the next section, we explain how we draw inspiration from MAPF concepts to enhance the search for solutions to CSG and propose a new algorithm that adapts ideas from MAPF for solving the CSG problem.

In this paper, we propose a path search algorithm for finding optimal coalition structures in coalition structure generation. The algorithm operates on a graph where each node represents a coalition structure, and a search agent explores this graph aiming to reach the highest-valued solution. The algorithm starts from a designated start node and iteratively explores neighboring nodes, guided by the values of coalition structures.

The key components of this approach that will be detailed in the following sections are:

\begin{itemize}
    \item The search space is represented as a graph, with each node corresponding to a coalition structure. Transitions between nodes occur through coalition splitting or merging.
    \item A search agent explores the graph by moving from one node to another, selecting nodes based on their coalition structure values.
    \item Upon finding better solutions, the algorithm creates paths between previous and new solutions to explore potential improvements along the path.
    \item To optimize search efficiency, the algorithm employs memory management techniques, maintaining lists of nodes in memory and dynamically adjusting them based on solution quality.
    \item Multiple search agents are employed to accelerate finding high-quality solutions, with conflict resolution mechanisms to prevent redundant exploration.
\end{itemize}

\section{Path Search Algorithm for CSG}

In this paper, we consider a path finding variant where each node is a solution to the CSG problem, i.e. a coalition structure. The goal is thus to find the optimal one or at least approach its value. This variant is defined by a graph and a search agent that begins at a start node and can move to an adjacent node at each step. The search agent maintains a list of nodes, sorted according to the values of the corresponding coalition structures. 
A path in this context is a sequence of nodes that are adjacent to each other, starting at the start node. 
The decision to pursue one path over another is based on the coalition structure value of the reached node. Hence, the cost of a path is the value of the coalition structure of its last expanded node. 

Multiagent Path Finding~\cite{stern2019multi,stern2019multi2} also has many variants. Our goal is to find the optimal solution using multiple search agents. Inspired by concepts from MAPF, our variant is defined by a graph and a set of $m$ search agents $\{s_{1},..., s_{m}\}$. Each search agent $s_{i}$ has a designated start node and can move to an adjacent node at each step. Conflicts can occur when two search agents consider the same coalition structure for evaluation. 
While the paths taken by the search agents are important for finding high-quality solutions quickly, they are not the solution themselves. 
The optimal solution is the highest-valued coalition structure found in the nodes.

Our path search algorithm for the optimal coalition structure 
uses a search graph (see Figure \ref{CSGSALDAE}). 
This graph is built gradually, starting from a designated start node that represents a starting coalition structure, which can be the top node (that represents the coalition structure containing the singleton coalitions), the bottom node (i.e. the coalition structure composed of the grand coalition that contains all the agents) or any other node. 
Each parent node in the search graph is connected to a number of child nodes that can be generated by either splitting a coalition or joining two coalitions in the parent coalition structure. 
Given a coalition structure $\mathcal{CS} =\{\mathcal{C}_{1}, \mathcal{C}_{2},...,\mathcal{C}_{k}\}$: 
\begin{itemize}
    \item splitting a coalition $\mathcal{C}_{i}$ into $\mathcal{C}_{j}$ and $\mathcal{C}_{k}$ (i.e. $\mathcal{C}_{i} = \mathcal{C}_{j} \bigcup \mathcal{C}_{k}$ with $\mathcal{C}_{j},\mathcal{C}_{k} \neq \emptyset$ and $\mathcal{C}_{j} \cap \mathcal{C}_{k} = \emptyset$) in $\mathcal{CS}$ results in a new coalition structure $(\mathcal{CS} \setminus \{\mathcal{C}_{i}\}) \bigcup \{\mathcal{C}_{j},\mathcal{C}_{k}\}$. 
    \item merging two coalitions $\mathcal{C}_{i}$ and $\mathcal{C}_{j}$ in $\mathcal{CS}$, with $i \neq j$, results in a new coalition structure $(\mathcal{CS} \setminus \{\mathcal{C}_{i},\mathcal{C}_{j}\}) \bigcup \{\mathcal{C}_{i} \bigcup \mathcal{C}_{j}\}$.
\end{itemize}

Each node in the search graph represents a potential solution. For the sake of clarity, we will consider the bottom node as the start node in the remainder of this section. The search graph is constructed iteratively by adding new nodes generated according to the following steps, which constitute a type of greedy algorithm. 
Throughout this process, the terms "node", "coalition structure", and "solution" may be used interchangeably.

\begin{enumerate}
\item \textit{Step 1: Generation}: 
In this step, child coalition structures are generated from the current start node by either splitting a coalition or joining two coalitions. For the first iteration, if the bottom node is the start node, child nodes can only be generated by splitting the grand coalition. Each newly generated child node is added to the search graph. 
Figure \ref{CSGSALDAE} shows an example with 4 agents. The numbers in Figure \ref{CSGSALDAE} represent the agents. For example, $\{1,2,3,4\}$ represents the coalition structure $\{a_{1},a_{2},a_{3},a_{4}\}$.

\item \textit{Step 2: Selection}: 
This step defines the start node selection procedure. When a set of child nodes is generated, they all form a set of candidate coalition structures to be the start node for the next iteration. Given this, the algorithm selects the most promising node to consider as the start node $\mathcal{SN}$ for the next iteration. The start node is chosen as the highest-valued child coalition structure, 
i.e. 
$\mathcal{SN} = \argmax_{\mathcal{CS} \in Child} V(\mathcal{CS})$, where $Child$ is the current child nodes generated in the graph, not just the child nodes generated for one node. Figure \ref{CSGSALDAE} illustrates this procedure. For this example, we consider the node $\{\{a_{1},a_{4}\},\{a_{2},a_{3}\}\}$ as the highest valued child coalition structure. All the nodes in level 2 were candidates for this selection (see Figure \ref{CSGSALDAE}).

\item \textit{Step 3: Comparison to incumbent}: 
This step involves evaluating the selected coalition structure to determine if it is better than the current best solution found in previous iterations. If it is, the best solution is updated with the new one. The value of a coalition structure is determined by summing the values of the coalitions that comprise it. After evaluating the coalition structure, another iteration begins with the new start node, which is the tested coalition structure. 
Figure \ref{CSGSALDAE} shows a 4-agent illustration example.

\end{enumerate}

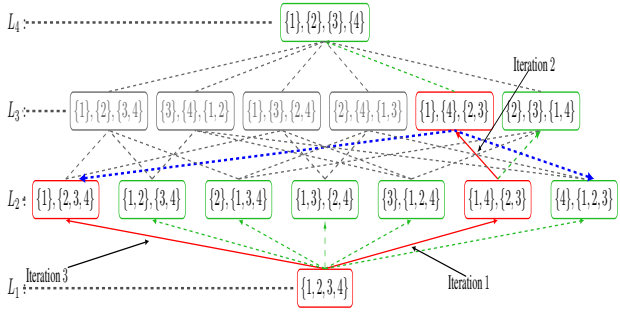
\begin{figure}[t]
\centering
\normalsize
\resizebox{0.97\columnwidth}{4.05cm}{%
\begin{tikzpicture}
\large

\node[] (a)at(-3.2,-4.5){$L_{1}: $};
\draw[>=latex,black!70,dashed,very thick] (6.4,-4.5) -- (-3.0,-4.5);
\node[] (a)at(-3.2,-3){$L_{2}: $};
\draw[>=latex,black!70,dashed,very thick] (-2.9,-3) -- (-3.0,-3);
\node[] (a)at(-3.2,-1.5){$L_{3}: $};
\draw[>=latex,black!70,dashed,very thick] (-1.5,-1.5) -- (-3.0,-1.5);
\node[] (a)at(-3.2,0){$L_{4}: $};
\draw[>=latex,black!70,dashed,very thick] (5.65,0) -- (-3.0,0);

\node[draw,rectangle,rounded corners=3pt,vert] (a)at(7.5,0){\color{black}$\{1\},\{2\},\{3\},\{4\}$};
\node[draw,rectangle,rounded corners=3pt,black!60] (b)at(0,-1.5){$\{1\},\{2\},\{3,4\}$};
\node[draw,rectangle,rounded corners=3pt,black!60] (c)at(3,-1.5){$\{3\},\{4\},\{1,2\}$};
\node[draw,rectangle,rounded corners=3pt,black!60] (d)at(6,-1.5){$\{1\},\{3\},\{2,4\}$};
\node[draw,rectangle,rounded corners=3pt,black!60] (e)at(9,-1.5){$\{2\},\{4\},\{1,3\}$};
\node[draw,rectangle,rounded corners=3pt,red] (f)at(12,-1.5){\color{black}$\{1\},\{4\},\{2,3\}$};
\node[draw,rectangle,rounded corners=3pt,vert] (g)at(15,-1.5){\color{black}$\{2\},\{3\},\{1,4\}$};

\node[draw,rectangle,rounded corners=3pt,red] (h)at(-1.5,-3){\color{black}$\{1\},\{2,3,4\}$};
\node[draw,rectangle,rounded corners=3pt,vert] (i)at(1.5,-3){\color{black}$\{1,2\},\{3,4\}$};
\node[draw,rectangle,rounded corners=3pt,vert] (j)at(4.5,-3){\color{black}$\{2\},\{1,3,4\}$};
\node[draw,rectangle,rounded corners=3pt,vert] (k)at(7.5,-3){\color{black}$\{1,3\},\{2,4\}$};
\node[draw,rectangle,rounded corners=3pt,vert] (l)at(10.5,-3){\color{black}$\{3\},\{1,2,4\}$};
\node[draw,rectangle,rounded corners=3pt,red] (m)at(13.5,-3){\color{black}$\{1,4\},\{2,3\}$};
\node[draw,rectangle,rounded corners=3pt,vert] (n)at(16.5,-3){\color{black}$\{4\},\{1,2,3\}$};
\node[draw,rectangle,rounded corners=3pt,red] (o)at(7.5,-4.5){\color{black}$\{1,2,3,4\}$};

\draw[-,>=latex,black!60,dashed] (0,-1.17) -- (7.5,-0.32);
\draw[-,>=latex,black!60,dashed] (3,-1.17) -- (7.5,-0.32);
\draw[-,>=latex,black!60,dashed] (6,-1.17) -- (7.5,-0.32);
\draw[-,>=latex,black!60,dashed] (9,-1.17) -- (7.5,-0.32);
\draw[->,>=latex,vert,dashed] (12,-1.17) -- (7.5,-0.32);
\draw[-,>=latex,black!60,dashed] (15,-1.17) -- (7.5,-0.32);

\draw[-,>=latex,black!60,dashed] (-1.5,-2.65) -- (0,-1.83);
\draw[-,>=latex,black!60,dashed] (-1.5,-2.65) -- (6,-1.83);
\draw[<-,>=latex,blue,dashed,very thick] (-1.05,-2.65) -- (12,-1.83);
\draw[-,>=latex,black!60,thick,dashed] (1.5,-2.65) -- (0,-1.83);
\draw[-,>=latex,black!60,thick,dashed] (1.5,-2.65) -- (3,-1.83);
\draw[-,>=latex,black!60,dashed] (4.5,-2.65) -- (0,-1.83);
\draw[-,>=latex,black!60,dashed] (4.5,-2.65) -- (9,-1.83);
\draw[-,>=latex,black!60,dashed] (4.5,-2.65) -- (15,-1.83);
\draw[-,>=latex,black!60,thick,dashed] (7.5,-2.65) -- (6,-1.83);
\draw[-,>=latex,black!60,thick,dashed] (7.5,-2.65) -- (9,-1.83);
\draw[-,>=latex,black!60,dashed] (10.5,-2.65) -- (3,-1.83);
\draw[-,>=latex,black!60,dashed] (10.5,-2.65) -- (6,-1.83);
\draw[-,>=latex,black!60,dashed] (10.5,-2.65) -- (15,-1.83);
\draw[->,>=latex,red,thick] (13.5,-2.65) -- (12,-1.83);
\draw[->,>=latex,vert,thick,dashed] (13.5,-2.65) -- (15,-1.83);
\draw[-,>=latex,black!60,dashed] (16.5,-2.65) -- (3,-1.83);
\draw[-,>=latex,black!60,dashed] (16.5,-2.65) -- (9,-1.83);
\draw[<-,>=latex,blue,dashed,very thick] (16.85,-2.65) -- (12,-1.83);

\draw[->,>=latex,red] (7.5,-4.17) -- (-1.5,-3.35);
\draw[->,>=latex,vert,dashed] (7.5,-4.17) -- (1.5,-3.35);
\draw[->,>=latex,vert,dashed] (7.5,-4.17) -- (4.5,-3.35);
\draw[->,>=latex,vert,dashed] (7.5,-4.17) -- (7.5,-3.35);
\draw[->,>=latex,vert,dashed] (7.5,-4.17) -- (10.5,-3.35);
\draw[->,>=latex,red] (7.5,-4.17) -- (13.5,-3.35);
\draw[->,>=latex,vert,dashed] (7.5,-4.17) -- (16.5,-3.35);

\draw[->,>=latex,black] (12.4,-4.3) -- (10.5,-3.8);
\node[red] (o)at(12.5,-4.4){\normalsize \color{black}Iteration $1$};

\draw[->,>=latex,black] (14.56,-0.82) -- (12.788,-2.2);
\node[red] (o)at(14.77,-0.7){\normalsize \color{black}Iteration $2$};

\draw[->,>=latex,black] (-1.43,-4.15) -- (1.4,-3.7);
\node[red] (o)at(-2.15,-4.28){\normalsize \color{black}Iteration $3$};

            \end{tikzpicture}%
        }
\caption{An illustration of the three phases of our algorithm. The numbers represent the agents. For example, $1$ represents agent $a_{1}$. In the first iteration, the start node is the bottom node. The child nodes are represented by the nodes that are directly connected to the bottom node. All of these child nodes are candidates to become the new start node. However, the coalition structure $\{\{a_{1},a_{4}\},\{a_{2},a_{3}\}\}$ in level 2 has, we assume, the highest value and hence it is selected 
and becomes 
the new start node. 
The newly generated child nodes of this coalition structure are $\{\{a_{1}\},\{a_{4}\},\{a_{2},a_{3}\}\}$ and $\{\{a_{2}\},\{a_{3}\},\{a_{1},a_{4}\}\}$. We assume that $\{\{a_{1}\},\{a_{4}\},\{a_{2},a_{3}\}\}$ is the highest valued coalition structure between the child nodes and hence it becomes the new start node. For the third iteration, there are child nodes that can be generated by merging coalitions through the blue edges, which are $\{\{a_{1}\},\{a_{2},a_{3},a_{4}\}\}$ and $\{\{a_{4}\},\{a_{1},a_{2},a_{3}\}\}$, but these are already generated. Hence, the only new child node generated is $\{\{a_{1}\},\{a_{2}\},\{a_{3}\},\{a_{4}\}\}$, which results from splitting a coalition. The algorithm then selects the coalition structure $\{\{a_{1}\},\{a_{2},a_{3},a_{4}\}\}$, which is, we assume, the highest valued one, and so on. Of course, after each selection, the best coalition structure is updated.} \label{CSGSALDAE}
\vspace{15pt}
\end{figure}

While these steps provide the foundational framework of the algorithm, additional optimizations and strategies are subsequently introduced to enhance performance, offering further refinements beyond the initial three-step process.

\subsection{Exploring Solutions through Bridging Paths for Enhanced Solution Quality}

The algorithm maintains a list of nodes and expands them based on the 3 steps. Hence, the algorithm moves from one node to another, seeking to improve the solution. 
In addition to these steps, each time a better solution is found, the algorithm creates a path of nodes between the previous best solution $\mathcal{S}_{l}$ and the new one $\mathcal{S}_{n}$. This is done to explore the possibility of finding even better solutions along the path between the two solutions. The motivation for this is that the distribution of coalitions and agents in the two current best solutions is related to the quality of the solutions. Hence, a slight change in these coalition structures could lead to even better solutions.

To illustrate this, consider a search graph with 8 agents. If the two current best solutions are $\mathcal{S}_{l} = \{\{a_{2},a_{7}\},\{a_{1},a_{4}\},\{a_{3},a_{5},a_{6},a_{8}\}\}$ (located on level 3) and $\mathcal{S}_{n} = \{\{a_{2}\},\{a_{3}\},\{a_{7}\},\{a_{1},a_{4}\},\{a_{5},a_{6},a_{8}\}\}$ (located on level~5), we can split the first coalition of $\mathcal{S}_{l}$ to obtain the coalition structure $\mathcal{CS}_{1} = \{\{a_{2}\},\{a_{7}\},\{a_{1},a_{4}\},\{a_{3},a_{5},a_{6},a_{8}\}\}$. Then, we can also split the fourth coalition $\{a_{3},a_{5},a_{6},a_{8}\}$ of $\mathcal{CS}_{1}$ into two coalitions $\{a_{3}\}$ and $\{a_{5},a_{6},a_{8}\}$ to obtain the coalition structure $\mathcal{S}_{n}$. These two splits create a path between the two current best solutions. The solutions along this path ($\mathcal{CS}_{1}$), which may not have been processed yet, may be better than both $\mathcal{S}_{l}$ and $\mathcal{S}_{n}$. 
It is worth noting that the path may contain multiple solutions, depending on the distance between $\mathcal{S}_{l}$ and $\mathcal{S}_{n}$. However, it is not always possible to construct a path between two solutions by only splitting or merging coalitions. In some cases, agents may be in completely different coalitions than in the last best solution. Therefore, a path between the two current best solutions may be found by combining splits and merges of coalitions.

In what follows, we refer to the node that contains the grand coalition as the bottom node and the node that contains the singleton coalitions as the top node. The following properties hold.

\begin{observation}
Given $n$ agents, let $\mathcal{N}$ be a node of level $l$. Then, it holds that:

\begin{itemize}
    \item The bottom node can be reached from $\mathcal{N}$ with $l-1$ merges and $\mathcal{N}$ can be reached from the bottom node with $l-1$ splits. 
    \item The top node can be reached from $\mathcal{N}$ with $n-l$ splits and $\mathcal{N}$ can be reached from the top node with $n-l$ merges. 
\end{itemize}
\end{observation}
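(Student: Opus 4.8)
The plan is to reduce everything to a single monotone invariant: the \emph{level} of a node equals the number of coalitions in the corresponding coalition structure, and, by the definitions of the split and merge operations given above, every split increases the number of coalitions by exactly one (hence raises the level by one) while every merge decreases the number of coalitions by exactly one (hence lowers the level by one). Once this is fixed, each clause of the observation becomes a matter of counting how many level-changing moves are needed to travel between two nodes of known levels, together with a short argument that such a sequence of moves actually exists and lands on the required node.

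First I would treat the bottom node. Since $\mathcal{N}$ sits at level $l$ it contains $l$ coalitions, while the bottom node contains a single coalition. Because a merge reduces the coalition count by exactly one, any sequence of merges taking $\mathcal{N}$ to the bottom node must consist of at least $l-1$ merges; this is the lower bound. For feasibility I would argue constructively: whenever the current coalition structure has at least two coalitions the merge operation is well defined, so starting from $\mathcal{N}$ one may repeatedly merge any two coalitions, performing $l-1$ merges and arriving at a structure with exactly one coalition. Since merging preserves the union of all agents, this final structure is the grand coalition, i.e.\ the bottom node, which establishes the ``$l-1$ merges'' half.

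For the converse half---reaching $\mathcal{N}$ from the bottom node in $l-1$ splits---I would invoke the fact that splits and merges are mutually inverse: if merging $\mathcal{C}_{i}$ and $\mathcal{C}_{j}$ turns one structure into another, then the split separating $\mathcal{C}_{i} \cup \mathcal{C}_{j}$ back into $\mathcal{C}_{i}$ and $\mathcal{C}_{j}$ undoes it. Reversing the $l-1$ merges produced above therefore yields a sequence of exactly $l-1$ splits transforming the bottom node into $\mathcal{N}$; this reversal simultaneously guarantees feasibility and pins down the exact target node, and the matching lower bound follows again from the level invariant. The second bullet is entirely symmetric: the top node lies at level $n$, so the level gap between $\mathcal{N}$ and the top node is $n-l$. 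Exactly $n-l$ splits reach the top node from $\mathcal{N}$ (splitting non-singleton coalitions until all coalitions are singletons, which indeed requires $\sum_{i}(|\mathcal{C}_{i}|-1)=n-l$ splits), and by the same reversal exactly $n-l$ merges reach $\mathcal{N}$ from the top node.

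The only point needing genuine care---the ``main obstacle,'' insofar as there is one---is the claim that one reaches the \emph{specific} node $\mathcal{N}$ rather than merely some node of the correct level. The constructive merge/split loop on its own controls only the level, so I rely on the inverse-operation argument to recover exactly $\mathcal{N}$: reachability of $\mathcal{N}$ from the bottom node is certified by reversing a concrete merge sequence that begins at $\mathcal{N}$, and likewise for the top node. Everything else is bookkeeping with the level invariant, so no intricate calculation is required.
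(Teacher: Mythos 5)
Your proposal is correct and follows essentially the same route as the paper's proof: both rest on the invariant that a split raises the level by exactly one and a merge lowers it by exactly one, and then exhibit explicit sequences of $l-1$ (resp.\ $n-l$) operations between $\mathcal{N}$ and the bottom (resp.\ top) node. Your reversal argument for the ``return'' directions is in fact a tidier justification than the paper's, whose proof constructs the split sequence from the bottom node directly and leaves implicit how one guarantees arrival at the specific node $\mathcal{N}$ rather than an arbitrary node of level $l$.
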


Recall that a path goes through several coalition structures by splitting and merging coalitions. We refer to the number of splits and merges to reach one coalition structure from another by the size of the path. 

\begin{observation}
Let $\mathcal{CS}_{i}$ and $\mathcal{CS}_{j}$ be two coalition structures, where $\mathcal{CS}_{i} \neq \mathcal{CS}_{j}$. Then, there is always a path between $\mathcal{CS}_{i}$ and $\mathcal{CS}_{j}$ of size at most $n-1$.
\end{observation}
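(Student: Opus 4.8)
The plan is to connect $\mathcal{CS}_i$ and $\mathcal{CS}_j$ by routing through one of the two extreme nodes of the graph---the bottom node or the top node---and to show that at least one of these two detours is short enough. First I would record the levels of the two coalition structures: say $\mathcal{CS}_i$ sits at level $l_i$ and $\mathcal{CS}_j$ at level $l_j$, so that $1 \le l_i, l_j \le n$. The first observation above already supplies, for any node, the exact number of splits or merges needed to travel to and from each extreme node, so both candidate routes can be built directly from it.

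Next I would assemble the two candidate paths and compare their sizes. Routing through the bottom node, I merge $\mathcal{CS}_i$ down to the grand coalition in $l_i-1$ merges and then split the grand coalition up to $\mathcal{CS}_j$ in $l_j-1$ splits, for a total size of $(l_i-1)+(l_j-1)=l_i+l_j-2$. Routing through the top node, I split $\mathcal{CS}_i$ up to the singletons in $n-l_i$ splits and then merge the singletons down to $\mathcal{CS}_j$ in $n-l_j$ merges, for a total size of $(n-l_i)+(n-l_j)=2n-l_i-l_j$. The crucial point is that these two sizes sum to $(l_i+l_j-2)+(2n-l_i-l_j)=2n-2$ regardless of $l_i$ and $l_j$, so the smaller of the two is at most $(2n-2)/2=n-1$. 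Choosing whichever extreme node yields the shorter detour therefore produces a path of size at most $n-1$, which is exactly the claim.

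The step I expect to be the main obstacle is recognizing the right pivots, since no single fixed pivot works. Always routing through the bottom node gives a bound of $l_i+l_j-2$, which can be as large as $2n-2$; always routing through the top node, or through the common refinement, fails symmetrically (for instance, two level-$2$ structures whose meet is the top node require $2n-4$ splits followed by merges, exceeding $n-1$ for $n>3$). The key realization is that the two extreme routes are \emph{complementary}---their lengths always total $2n-2$---so a simple averaging argument forces one of them below $n-1$. The only remaining care is bookkeeping: I would check that each concatenation is a legitimate path, i.e.\ a sequence of coalition structures in which consecutive entries differ by a single split or merge. This is immediate, because the descending half uses only merges and the ascending half only splits, the two halves meeting precisely at the chosen extreme node. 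Since $\mathcal{CS}_i \neq \mathcal{CS}_j$, the constructed path is nontrivial, which completes the argument.
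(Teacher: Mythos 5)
Your proposal is correct and follows essentially the same route as the paper's proof: both construct the two candidate paths through the bottom node (size $l_i+l_j-2$) and the top node (size $2n-l_i-l_j$) using Observation~1, and both conclude that at least one has size at most $n-1$. The only difference is cosmetic---the paper argues by two cases (if one route exceeds $n-1$, the other falls below it), while you note that the two sizes always sum to $2n-2$ and average, which is a marginally cleaner way to finish the same argument.
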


The proofs of Observation 1 and Observation 2 are given in the appendix. Given the previous best solution $\mathcal{CS}_{l}$ and the new best solution $\mathcal{CS}_{n}$, to reach the coalition structure $\mathcal{CS}_{n}$ from $\mathcal{CS}_{l}$, we propose in this paper three strategies: SPLIT-THEN-MERGE, MERGE-THEN-SPLIT, and APPROACH-THEN-SWAP.

\subsubsection{SPLIT-THEN-MERGE}

One alternative to reach $\mathcal{CS}_{n}$ from $\mathcal{CS}_{l}$ is based on Observations~1 and~2. Starting from $\mathcal{CS}_{l}$, the algorithm splits the coalitions one by one until reaching the top node. Once the top node is reached, the algorithm merges the coalitions until the desired node $\mathcal{CS}_{n}$ is reached. To ensure that $\mathcal{CS}_{n}$ is reached, the algorithm avoids merging coalitions whose agents are not part of the same coalition in $\mathcal{CS}_{n}$. 
An illustration of this strategy can be seen in Figure \ref{threeStrategies} as the orange path.

\subsubsection{MERGE-THEN-SPLIT}

This strategy involves a descending phase followed by an ascending phase. 
During the descending phase, the algorithm starts from the last best solution $\mathcal{CS}_{l}$ and merges coalitions to form the grand coalition at the bottom node. Then, during the ascending phase, the algorithm splits the coalitions to reach the target solution $\mathcal{CS}_{n}$. 
During this phase, the algorithm does not separate agents that are in the same coalition in $\mathcal{CS}_{n}$, as this would prevent the algorithm from reaching its target solution. An illustration of this strategy can be seen in Figure \ref{threeStrategies}, where it is represented by the purple path.

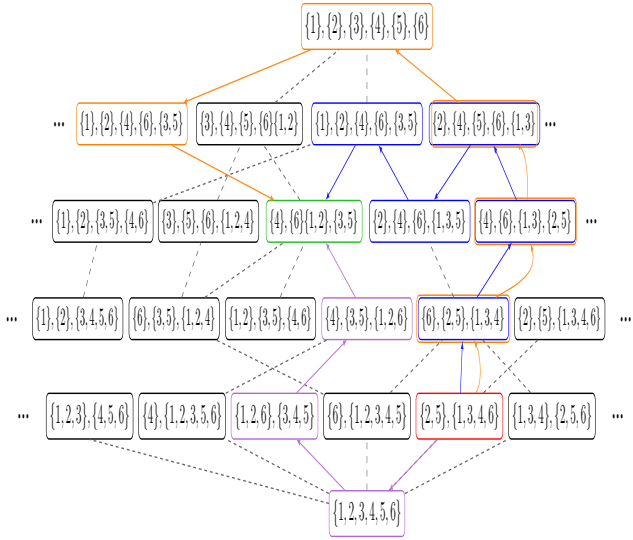
\begin{figure}[t]
\centering
\normalsize
\resizebox{1.0\columnwidth}{7.1cm}{%
\begin{tikzpicture}
\Large
\node[draw,rectangle,rounded corners=3pt,orange] (a0)at(7.5,0){\color{black}$\{1\},\{2\},\{3\},\{4\},\{5\},\{6\}$};

\large
\node[] (k)at(-4.15,-1.5){\huge \color{black}...};
\node[draw,rectangle,rounded corners=3pt,orange] (b1)at(-1.4,-1.5){\color{black}$\{1\},\{2\},\{4\},\{6\},\{3,5\}$};
\node[draw,rectangle,rounded corners=3pt,black] (c1)at(3.05,-1.5){\color{black}$\{3\},\{4\},\{5\},\{6\}\{1,2\}$};
\node[draw,rectangle,rounded corners=3pt,blue] (d1)at(7.5,-1.5){\color{black}$\{1\},\{2\},\{4\},\{6\},\{3,5\}$};
\node[draw,rectangle,rounded corners=3pt,blue] (e1)at(11.95,-1.5){\color{black}$\{2\},\{4\},\{5\},\{6\},\{1,3\}$};
\node[draw,rectangle,rounded corners=3pt,orange,minimum height=0.712cm,minimum width=3.99cm] (xx1)at(11.95,-1.5){};
\node[] (k)at(14.45,-1.5){\huge \color{black}...};

\large
\node[] (k)at(-5.0,-3){\huge \color{black}...};
\node[draw,rectangle,rounded corners=3pt,black] (a2)at(-2.5,-3){\color{black}$\{1\},\{2\},\{3,5\},\{4,6\}$};
\node[draw,rectangle,rounded corners=3pt,black] (b2)at(1.5,-3){\color{black}$\{3\},\{5\},\{6\},\{1,2,4\}$};
\node[draw,rectangle,rounded corners=3pt,vert] (c2)at(5.5,-3){\color{black}$\{4\},\{6\}\{1,2\},\{3,5\}$};
\node[draw,rectangle,rounded corners=3pt,blue] (d2)at(9.5,-3){\color{black}$\{2\},\{4\},\{6\},\{1,3,5\}$};
\node[draw,rectangle,rounded corners=3pt,blue] (e2)at(13.5,-3){\color{black}$\{4\},\{6\},\{1,3\},\{2,5\}$};
\node[draw,rectangle,rounded corners=3pt,orange,minimum height=0.712cm,minimum width=3.862cm] (xx2)at(13.5,-3){};
\node[] (k)at(16.0,-3){\huge \color{black}...};

\large
\node[] (k)at(-5.95,-4.5){\huge \color{black}...};
\node[draw,rectangle,rounded corners=3pt,black] (a3)at(-3.45,-4.5){\color{black}$\{1\},\{2\},\{3,4,5,6\}$};
\node[draw,rectangle,rounded corners=3pt,black] (b3)at(0.2,-4.5){\color{black}$\{6\},\{3,5\},\{1,2,4\}$};
\node[draw,rectangle,rounded corners=3pt,black] (c3)at(3.85,-4.5){\color{black}$\{1,2\},\{3,5\},\{4,6\}$};
\node[draw,rectangle,rounded corners=3pt,lilas] (d3)at(7.5,-4.5){\color{black}$\{4\},\{3,5\},\{1,2,6\}$};
\node[draw,rectangle,rounded corners=3pt,blue] (e3)at(11.15,-4.5){\color{black}$\{6\},\{2,5\},\{1,3,4\}$};
\node[draw,rectangle,rounded corners=3pt,orange,minimum height=0.725cm,minimum width=3.53cm] (xx3)at(11.15,-4.5){};
\node[draw,rectangle,rounded corners=3pt,black] (f3)at(14.8,-4.5){\color{black}$\{2\},\{5\},\{1,3,4,6\}$};
\node[] (k)at(17.3,-4.5){\huge \color{black}...};

\Large
\node[] (k)at(-5.5,-6){\huge \color{black}...};
\node[draw,rectangle,rounded corners=3pt,black] (a4)at(-3.0,-6){\color{black}$\{1,2,3\},\{4,5,6\}$};
\node[draw,rectangle,rounded corners=3pt,black] (b4)at(0.5,-6){\color{black}$\{4\},\{1,2,3,5,6\}$};
\node[draw,rectangle,rounded corners=3pt,lilas] (c4)at(4.0,-6){\color{black}$\{1,2,6\},\{3,4,5\}$};
\node[draw,rectangle,rounded corners=3pt,black] (d4)at(7.5,-6){\color{black}$\{6\},\{1,2,3,4,5\}$};
\node[draw,rectangle,rounded corners=3pt,red] (e4)at(11.0,-6){\color{black}$\{2,5\},\{1,3,4,6\}$};
\node[draw,rectangle,rounded corners=3pt,black] (f4)at(14.5,-6){\color{black}$\{1,3,4\},\{2,5,6\}$};
\node[] (k)at(17.0,-6){\huge \color{black}...};

\Large
\node[draw,rectangle,rounded corners=3pt,lilas] (a5)at(7.5,-7.5){\color{black}$\{1,2,3,4,5,6\}$};

\draw[->,>=latex,blue!90] (e4) -- (e3);
\draw[->,>=latex,blue!90] (e3) -- (e2);
\draw[->,>=latex,blue!90] (e2) -- (e1);

\draw[-,>=latex,black!60,dashed] (e4) -- (a5);

\draw[-,>=latex,black!60,dashed] (a5) -- (a4.south);
\draw[-,>=latex,black!60,dashed] (a5) -- (b4);
\draw[-,>=latex,black!60,dashed] (a5) -- (d4);
\draw[-,>=latex,black!60,dashed] (a5) -- (f4);

\draw[-,>=latex,black!60,dashed] (b4) -- (d3);
\draw[-,>=latex,black!60,dashed] (d4) -- (b3);
\draw[-,>=latex,black!60,dashed] (d4) -- (e3);
\draw[-,>=latex,black!60,dashed] (e4) -- (f3);
\draw[-,>=latex,black!60,dashed] (f4) -- (e3);

\draw[-,>=latex,black!60,dashed] (a3) -- (a2);
\draw[-,>=latex,black!60,dashed] (b3) -- (b2);
\draw[-,>=latex,black!60,dashed] (b3) -- (c2);
\draw[-,>=latex,black!60,dashed] (c3) -- (c2);
\draw[-,>=latex,black!60,dashed] (e3) -- (d2);

\draw[-,>=latex,black!60,dashed] (a2) -- (d1);
\draw[-,>=latex,black!60,dashed] (b2) -- (c1);
\draw[-,>=latex,black!60,dashed] (c2) -- (c1);
\draw[->,>=latex,blue!90] (d1) -- (c2);
\draw[->,>=latex,blue!90] (d2) -- (d1);
\draw[->,>=latex,blue!90] (e1) -- (d2);

\draw[-,>=latex,black!60,dashed] (c1) -- (a0);
\draw[-,>=latex,black!60,dashed] (d1) -- (a0);

\draw[->,>=latex,orange!90] (e4) to[out=30,in=-40]  (e3);
\draw[->,>=latex,orange!90] (e3) to[out=15,in=-60] (e2);
\draw[->,>=latex,orange!90] (e2) to[out=80,in=-15] (e1);
\draw[->,>=latex,orange!90] (e1) -- (a0);
\draw[->,>=latex,orange!90] (a0) -- (b1);
\draw[->,>=latex,orange!90] (b1) -- (c2);

\draw[->,>=latex,lilas!90] (e4) -- (a5);
\draw[->,>=latex,lilas!90] (a5) -- (c4);
\draw[->,>=latex,lilas!90] (c4) -- (d3);
\draw[->,>=latex,lilas!90] (d3) -- (c2);


            \end{tikzpicture}%
        }
\caption{An illustration of the 3 strategies with a partial graph of 6 agents. The numbers represent the agents. For example, $1$ represents agent $a_{1}$. The last best solution is the red node and the new best one is the green node. The nodes of the path between the two current best solutions of the SPLIT-THEN-MERGE, MERGE-THEN-SPLIT and APPROACH-THEN-SWAP strategies are represented by the orange, purple, and blue nodes, respectively.} \label{threeStrategies}
\end{figure}

\subsubsection{APPROACH-THEN-SWAP}

This strategy consists of two phases: an approach phase and a swap phase. In the approach phase, the algorithm generates an intermediate solution that is at the same level as $\mathcal{CS}_{n}$. 
To reach this intermediate solution, the algorithm performs a series of splits or a series of merges on the coalitions of $\mathcal{CS}_{l}$. If the level of $\mathcal{CS}_{n}$ is higher than that of $\mathcal{CS}_{l}$, the algorithm performs $l_{2} - l_{1}$ splits, where $l_{1}$ and $l_{2}$ are the levels of $\mathcal{CS}_{l}$ and $\mathcal{CS}_{n}$, respectively. If the level of $\mathcal{CS}_{n}$ is lower than that of $\mathcal{CS}_{l}$, the algorithm performs $l_{1} - l_{2}$ merges. If the two coalition structures are at the same level, no splitting or merging is performed.

Once the intermediate solution $\mathcal{CS}_{i}$ that has the same number of coalitions and the same number of agents in each coalition as $\mathcal{CS}_{n}$ is reached, the swap phase begins. This phase involves swapping agents between coalitions in order to reach $\mathcal{CS}_{n}$. 
Given a coalition structure $\mathcal{CS} =\{\mathcal{C}_{1}, \mathcal{C}_{2},...,\mathcal{C}_{k}\}$, swapping an agent $a_{i}$ of a coalition $\mathcal{C}_{i}$ with an agent $a_{j}$ of a coalition $\mathcal{C}_{j}$ 
results in a new coalition structure $(\mathcal{CS} \setminus \{\mathcal{C}_{i}, \mathcal{C}_{j}\}) \bigcup \{\mathcal{C}'_{i},\mathcal{C}'_{j}\}$, where $\mathcal{C}'_{i} = (\mathcal{C}_{i}\setminus\{a_{i}\}) \bigcup \{a_{j}\}$ and $\mathcal{C}'_{j} = (\mathcal{C}_{j}\setminus\{a_{j}\}) \bigcup \{a_{i}\}$. This rule is equivalent to two series of split and merge. 
To perform a swap, the algorithm splits $\mathcal{C}_{i}$ into $\mathcal{C}_{i} \setminus \{a_{i}\}$ and $\{a_{i}\}$. Then, merges $\{a_{i}\}$ with $\mathcal{C}_{j}$ to obtain $\mathcal{C}_{j} \bigcup \{a_{i}\}$. Again, splits $\mathcal{C}_{j} \bigcup \{a_{i}\}$ into $(\mathcal{C}_{j} \bigcup \{a_{i}\}) \setminus \{a_{j}\}$ and $\{a_{j}\}$. Finally, merges $\mathcal{C}_{i} \setminus \{a_{i}\}$ with $\{a_{j}\}$ to obtain $(\mathcal{C}_{i} \setminus \{a_{i}\}) \bigcup \{a_{j}\}$. For example, swapping $a_{2}$ with $a_{3}$ in the coalition structure $\{\{a_{4}\},\{a_{1},a_{2}\},\{a_{3},a_{5}\}\}$ is done as shown in Figure \ref{approachthenswap}.

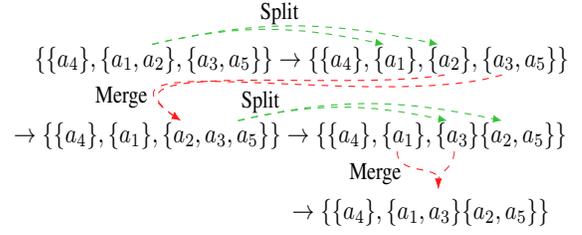
\begin{figure}[h!]
\centering
\normalsize
\resizebox{0.9\columnwidth}{3.2cm}{%
\begin{tikzpicture}

\node[] (a)at(-0.35,0.6){\small Split};

\node[] (e3)at(0.0,0.0){\color{black}$\{\{a_{4}\},\{a_{1},a_{2}\},\{a_{3},a_{5}\}\} \rightarrow \{\{a_{4}\},\{a_{1}\},\{a_{2}\},\{a_{3},a_{5}\}\}$};

\draw[->,>=latex,vert!90,dashed] (-2.35,0.2) to[out=11,in=170] (2.15,0.2);
\draw[->,>=latex,vert!90,dashed] (-2.35,0.2) to[out=11,in=170] (1.30,0.2);

\draw[->,>=latex,red!90,dashed] (3.15,-0.2) to[out=194,in=150] (-1.90,-0.8);
\draw[->,>=latex,red!90,dashed] (2.25,-0.2) to[out=188,in=150] (-1.90,-0.8);

\node[] (a)at(-2.83,-0.49){\small Merge};

\node[] (e3)at(-0.18,-1.0){\color{black}$\rightarrow \{\{a_{4}\},\{a_{1}\},\{a_{2},a_{3},a_{5}\}\} \rightarrow \{\{a_{4}\},\{a_{1}\},\{a_{3}\}\{a_{2},a_{5}\}\}$};

\node[] (a)at(-0.65,-0.56){\small Split};

\draw[->,>=latex,vert!90,dashed] (-1.0,-0.8) to[out=11,in=170] (2.29,-0.8);
\draw[->,>=latex,vert!90,dashed] (-1.0,-0.8) to[out=11,in=170] (3.15,-0.8);

\draw[->,>=latex,red!90,dashed] (1.5,-1.2) to[out=270,in=90] (2.15,-1.7);
\draw[->,>=latex,red!90,dashed] (2.4,-1.2) to[out=270,in=90] (2.15,-1.7);

\node[] (a)at(1.16,-1.49){\small Merge};

\node[] (e3)at(1.87,-2.0){\color{black}$\rightarrow \{\{a_{4}\},\{a_{1},a_{3}\}\{a_{2},a_{5}\}\}$};

\end{tikzpicture}%
        }
\caption{An illustration of the APPROACH-THEN-SWAP strategy.} \label{approachthenswap}
\end{figure}

All the intermediate coalition structures that the path generates are evaluated with the aim to find better solutions. Figure \ref{threeStrategies} shows an illustration of this strategy (see the blue path).




\subsection{Optimizing the Search for Global Optima through Memory Management}

SALDAE maintains three lists of nodes in memory: OPEN, SUBSTITUTE, and RESERVE. OPEN is sorted in descending order according to the values of the coalition structures represented by the nodes it contains. It is constructed using the three steps of the search process (see Section~4). 
SUBSTITUTE contains the nodes visited by the algorithm during the path search between the current best solutions. These nodes are not expanded unless they improve the current solution. Two hyperparameters, $\theta$ and $\omega$, are introduced to control the memory usage of the algorithm. $\theta$ determines the number of child nodes to keep in memory for each expanded node, while $\omega$ is the lower bound on the minimum value required for a node to be maintained in the list OPEN. Let $\mathcal{CS}^{+}$ be the current best solution. The list OPEN contains nodes $\mathcal{CS}$ such that $v(\mathcal{CS}) \geq \omega \times v(\mathcal{CS}^{+})$, while RESERVE contains nodes $\mathcal{CS}$ with $v(\mathcal{CS}) < \omega \times v(\mathcal{CS}^{+})$. The list OPEN is used by SALDAE for the search and is replaced by either the SUBSTITUTE or RESERVE list if it becomes empty.

\subsubsection{Replacing OPEN with SUBSTITUTE}

Since the list OPEN contains the nodes where $v(\mathcal{CS}) \geq \omega \times v(\mathcal{CS}^{+})$, it is possible that after several iterations OPEN becomes empty. This can happen if the nodes expanded by SALDAE do not generate child nodes that meet the condition ($v(\mathcal{CS}) \geq \omega \times v(\mathcal{CS}^{+})$), i.e. the child nodes are directly added to the list RESERVE. 
In this case, the algorithm replaces OPEN with SUBSTITUTE and continues the search. This allows SALDAE to explore different areas of the solution space that may be more promising, instead of staying in a part of the space that has not produced better solutions or at least solutions that would have remained in OPEN.

\subsubsection{Replacing OPEN with RESERVE}

In SALDAE, SUBSTITUTE is constructed after finding a new best solution. However, when OPEN is replaced with SUBSTITUTE, the list SUBSTITUTE remains empty, until a better solution is found. If before finding a better solution OPEN becomes empty, SALDAE replaces OPEN with the list RESERVE and continues the search. 
This way, SALDAE prioritizes searching in parts of the solution space that are more promising, but can also return to the list RESERVE if it does not find more promising parts.

By using these three lists, SALDAE can effectively guide its search to focus on more promising areas of the solution space, while still being able to revisit areas that may have been overlooked.

\subsection{Multiagent Search}

To speed up finding high-quality solutions, SALDAE employs several search agents $\{s_{1},...,s_{m}\}$. Each agent $s_{i}$ has a start node and maintains three lists of nodes OPEN$_{i}$, RESERVE$_{i}$, and SUBSTITUTE$_{i}$. SALDAE affects one agent for each of the bottom and top nodes, and random nodes to other agents. Then, each agent moves from a node to an adjacent node searching for better solutions.

When a search agent $s_{i}$ expands a node and generates child nodes, it checks for conflicts among the lists of the search agents. We say that two agents have a conflict iff they consider the same coalition structure for evaluation. If there are no conflicts, the agent $s_{i}$ adds the generated child node to the list OPEN$_{i}$ or RESERVE$_{i}$. 
Otherwise, $s_{i}$ resolves the conflict by selecting the search agent that will keep the generated child node, using the following techniques. We introduce the following techniques to resolve the conflicts.

\subsubsection{Bypassing Conflicts}

If a child node is found in an OPEN$_{j}$ or RESERVE$_{j}$ list of another search agent $s_{j}$, the search agent $s_{i}$ is prohibited from using the conflicting child node and $s_{j}$ keeps it. This ensures that the first search agent to consider the node is the one that keeps it. This 
is different from the bypassing conflicts in CBS, where SALDAE aims to bypass the conflicts by generating coalition structures that avoid leading to conflicts in the first place. The goal of doing this in SALDAE is to guarantee that the search agents do not search the same coalition structures at the same time and explore different areas of the solution space. 

\subsubsection{Managing Conflicts}

Since the OPEN and RESERVE lists are not the same for each agent, the search agent $s_{i}$ compares 
the ranking of the child node in its own list with the ranking in the list of the conflicting search agent $s_{j}$. 
If the search agent $s_{i}$ allows the expansion of the conflicting child node first, it will remove the node from the list of the other search agent $s_{j}$ and add it to its own list. 
This indicates that the treatment of the conflicting child node is removed from the search agent $s_{j}$. Otherwise, the search agent $s_{i}$ is prohibited from using the conflicting child node and $s_{j}$ keeps it. 
This ensures that the search agent that allows the expansion of the conflicting child node first gets to keep it.

Algorithm \ref{SALDAE} shows the pseudocode of SALDAE. Lines 8 and 12 are discussed in the section 5 of the paper. SALDAE runs in parallel $m$ search agents $s_{i}$. ExecutePathStrategy function constructs a path between the last best solutions and adds the visited nodes to the list SUBSTITUTE$_{i}$. AddChildNodesToOPENorRESERVE function distributes the nodes between the lists OPEN$_{i}$ and RESERVE$_{i}$ according to their coalition structure values:

\begin{enumerate}
    \item if $v(\mathcal{N}) \geq \omega \times v(\mathcal{CS}^{+})$ then add $\mathcal{N}$ to OPEN$_{i}$;
    \item else add $\mathcal{N}$ to RESERVE$_{i}$.
\end{enumerate}

\begin{algorithm}[t]
\KwIn{A number of agents $n$ and the values $v(\mathcal{C})$ of the coalitions $\mathcal{C}$. 
}
\KwOut{The highest-valued coalition structure found $CS^{*}$ and its value.}
\DontPrintSemicolon
Generate root node R\;
$\mathcal{CS}^{*} \leftarrow $ R\;
$\mathcal{V}^{*} \leftarrow v($R$)$\;
Generate start nodes R$_{i}$ for the search agents $s_{i}$\;
 \nonl $\triangleright$ Begin parallel\;
 \nonl $\triangleright$ SALDAE runs in parallel $m$ search agents $s_{i}$\;
\If{$v($R$_{i}) > \mathcal{V}^{*}$}{
$\mathcal{CS}^{*} \leftarrow $ R$_{i}$\;
$\mathcal{V}^{*} \leftarrow v($R$_{i})$\;
}
Childs $\leftarrow$ ComputeChildNodes(R$_{i}$)\;
AddChildNodesToOPENorRESERVE(Childs)\;
\While{OPEN$_{i}$ is not empty}{
$\mathcal{N} \leftarrow$ OPEN$_{i}$.pop() $\triangleright$ \textit{this selects the highest-valued node}\;
Childs $\leftarrow$ ComputeChildNodes($\mathcal{N}$)\;
AddChildNodesToOPENorRESERVE(Childs)\;
\If{$v(\mathcal{N}) > \mathcal{V}^{*}$}{

ExecutePathStrategy($\mathcal{CS}^{*}$,$\mathcal{N}$)\;

\If{$v(\mathcal{N}) > \mathcal{V}^{*}$}{
$\mathcal{CS}^{*} \leftarrow \mathcal{N}$\;
$\mathcal{V}^{*} \leftarrow v(\mathcal{N})$\;
}
}
RemoveNodeFromOPEN$_{i}$($\mathcal{N}$)\;
\If{OPEN$_{i}$ is empty}{
\If{SUBSTITUTE$_{i}$ is empty}{
Replace OPEN$_{i}$ by RESERVE$_{i}$\;
}
\Else{
Replace OPEN$_{i}$ by SUBSTITUTE$_{i}$\;
}
}
}
\nonl $\triangleright$ End parallel\;

Return $\mathcal{CS}^{*}$, $\mathcal{V}^{*}$\;
 \caption{SALDAE algorithm} \label{SALDAE}
\end{algorithm}

\section{Selecting Child Nodes}

Due to the large number of possible child nodes in the coalition structure graph, it is infeasible to generate all of them, especially for large numbers of agents, this makes the search space intractable and highlights the importance of using more efficient selection strategies. Here, we present and introduce two child node selection methods.

\subsection{Quantity-Based Selection}

A straightforward idea for generating child nodes that can reduce computational burden and memory requirements is to select a number of child nodes and then keep the best ones in OPEN$_{i}$ and RESERVE$_{i}$. 
Let $\mathcal{N}_{c}$ and $\mathcal{N}_{a}$ be the number of child nodes to generate for the list OPEN$_{i}$ and the number of child nodes to actually add to the list OPEN$_{i}$, with $\mathcal{N}_{a} < \mathcal{N}_{c}$. 
In the first step of generating child nodes, we generate $\mathcal{N}_{c}$ child nodes, when they exist, whose values are greater than or equal to a threshold value, $\omega \times v(\mathcal{CS}^{+})$. From these generated child nodes, we select the $\mathcal{N}_{a}$ nodes with the highest values and add them to the list OPEN$_{i}$. Any generated nodes that have values less than $\omega \times v(\mathcal{CS}^{+})$ are added to the list RESERVE$_{i}$.

\begin{figure}[t]
\begin{center}
\small
\resizebox{0.48\columnwidth}{3.9cm}{%
     \begin{subfigure}{0.25\textwidth}
         \centering
         \begin{tikzpicture}
	\begin{axis}[	grid= major ,
            title=(a) Normal ,
			width=0.8\textwidth ,
            xlabel = {Number of agents} ,
			ylabel = {Solution quality (\%)} ,
            width=5cm,height=5cm,
            xtick={4,6,8,10,12,14,16,18,20},
            xticklabels={4,6,8,10,12,14,16,18,20},
            ytick={70,80,90,95,99,100},
            yticklabels={70,80,90,95,99,100},
            ymin=90,
            ymax=102,
            label style={font=\Large},
			title style={font=\Large},
			tick label style={font=\footnotesize},
            legend entries={SALDAE, CSG-UCT, PICS},
			legend style={at={(0,0.65)},anchor=north west,opacity=0.6,text opacity = 1}]
			\addplot+[only marks,color = blue,mark=square*,mark options={fill=cyan}] coordinates {(4,100) (5,100) (6,100) (7,100) (8,100) (9,100) (10,100) (11,100) (12,100) (13,100) (14,99.91) (15,99.97) (16,99.55) (17,99.47) (18,99.56) (19,99.44) (20,99.28) };//
			\addplot+[only marks,color = vert,mark=*,mark options={fill=vert}] coordinates {(4,100) (5,100) (6,100) (7,100) (8,99.7) (9,99.4) (10,99.2) (11,98.9) (12,98.8) (13,98.2) (14,97.81) (15,97.27) (16,96.23) (17,96.77) (18,96.46) (19,96.54) (20,96.31) };
            \addplot+[only marks,color = red,mark=triangle*,mark options={fill=red},ultra thin] coordinates {(4,100) (5,100) (6,100) (7,100) (8,100) (9,100) (10,100) (11,100) (12,100) (13,100) (14,99.9) (15,99.7) (16,99.2) (17,99.17) (18,98.76) (19,98.59) (20,98.68) };
            
	\end{axis}
    \end{tikzpicture}\\
         \label{fig:three sin x}
     \end{subfigure}
     }
     \hfill
     \resizebox{0.48\columnwidth}{3.9cm}{%
     \begin{subfigure}{0.25\textwidth}
         \centering
         \begin{tikzpicture}
	\begin{axis}[	grid= major ,
            title=(b) Zipf ,
			width=0.6\textwidth ,
            xlabel = {Number of agents} ,
            width=5cm,height=5cm,
            xtick={4,6,8,10,12,14,16,18,20},
            xticklabels={4,6,8,10,12,14,16,18,20},
            ytick={70,80,90,95,99,100},
            yticklabels={70,80,90,95,99,100},
            ymin=90,
            ymax=102,
            label style={font=\Large},
			title style={font=\Large},
			tick label style={font=\footnotesize},
            legend entries={SALDAE, CSG-UCT, PICS},
			legend style={at={(0,0.65)},anchor=north west,opacity=0.6,text opacity = 1}]
			\addplot+[only marks,color = blue,mark=square*,mark options={fill=cyan}] coordinates {(4,100) (5,100) (6,100) (7,100) (8,100) (9,100) (10,100) (11,100) (12,100) (13,100) (14,99.91) (15,99.97) (16,99.33) (17,99.27) (18,99.26) (19,99.14) (20,99.11) };//
			\addplot+[only marks,color = vert,mark=*,mark options={fill=vert}] coordinates {(4,100) (5,100) (6,100) (7,100) (8,99.7) (9,99.4) (10,99.2) (11,98.5) (12,98.3) (13,97.8) (14,97.31) (15,96.57) (16,96.03) (17,95.77) (18,95.26) (19,95.14) (20,95.11) };
            \addplot+[only marks,color = red,mark=triangle*,mark options={fill=red},ultra thin] coordinates {(4,100) (5,100) (6,100) (7,100) (8,100) (9,100) (10,99.8) (11,99.7) (12,99.3) (13,99.1) (14,99.31) (15,98.07) (16,97.93) (17,97.77) (18,97.26) (19,97.14) (20,97.11) };
	\end{axis}
    \end{tikzpicture}\\
         \label{fig:three sin x}
     \end{subfigure}
     }
     \hfill
     \resizebox{0.48\columnwidth}{3.9cm}{%
     \begin{subfigure}{0.25\textwidth}
         \centering
         \begin{tikzpicture}
	\begin{axis}[	grid= major ,
            title=(e) Gamma ,
			width=0.6\textwidth ,
            xlabel = {Number of agents} ,
			ylabel = {Solution quality (\%)} ,
            width=5cm,height=5cm,
            xtick={4,6,8,10,12,14,16,18,20},
            xticklabels={4,6,8,10,12,14,16,18,20},
            ytick={70,80,90,95,99,100},
            yticklabels={70,80,90,95,99,100},
            ymin=60,
            ymax=102,
            label style={font=\Large},
			title style={font=\Large},
			tick label style={font=\footnotesize},
            legend entries={SALDAE, CSG-UCT, PICS},
			legend style={at={(0,0.65)},anchor=north west,opacity=0.6,text opacity = 1}]
			\addplot+[only marks,color = blue,mark=square*,mark options={fill=cyan}] coordinates {(4,100) (5,100) (6,100) (7,100) (8,100) (9,100) (10,100) (11,100) (12,100) (13,100) (14,100) (15,99.99) (16,99.93) (17,99.97) (18,99.96) (19,99.94) (20,100) };//
			\addplot+[only marks,color = vert,mark=*,mark options={fill=vert}] coordinates {(4,100) (5,100) (6,100) (7,99.55) (8,97.1) (9,94.2) (10,91.96) (11,92.91) (12,89.86) (13,83.65) (14,84) (15,78.79) (16,73.63) (17,74.97) (18,71.16) (19,69.94) (20,65.1) };
            \addplot+[only marks,color = red,mark=triangle*,mark options={fill=red},ultra thin] coordinates {(4,100) (5,100) (6,100) (7,100) (8,100) (9,100) (10,99.96) (11,99.91) (12,99.86) (13,99.65) (14,99) (15,98.79) (16,97.63) (17,96.97) (18,96.16) (19,96.94) (20,96) };
	\end{axis}
    \end{tikzpicture}\\
         \label{fig:three sin x}
     \end{subfigure}
     }
     \hfill
     \resizebox{0.48\columnwidth}{3.9cm}{%
     \begin{subfigure}{0.25\textwidth}
         \centering
         \begin{tikzpicture}
	\begin{axis}[	grid= major ,
            title=(f) Exponential ,
			width=0.6\textwidth ,
            xlabel = {Number of agents} ,
            width=5cm,height=5cm,
            xtick={4,6,8,10,12,14,16,18,20},
            xticklabels={4,6,8,10,12,14,16,18,20},
            ytick={70,80,90,95,99,100},
            yticklabels={70,80,90,95,99,100},
            ymin=65,
            ymax=102,
            label style={font=\Large},
			title style={font=\Large},
			tick label style={font=\footnotesize},
            legend entries={SALDAE, CSG-UCT, PICS},
			legend style={at={(0,0.65)},anchor=north west,opacity=0.6,text opacity = 1}]
			\addplot+[only marks,color = blue,mark=square*,mark options={fill=cyan}] coordinates {(4,100) (5,100) (6,100) (7,100) (8,100) (9,100) (10,100) (11,100) (12,100) (13,99.8) (14,99.41) (15,99.18) (16,98.73) (17,98.87)
			(18,98.66) (19,96.64) (20,97.8) };//
			\addplot+[only marks,color = vert,mark=*,mark options={fill=vert}] coordinates {(4,100) (5,100) (6,100) (7,99.55) (8,97.1) (9,94.2) (10,91.96) (11,92.91) (12,89.86) (13,83.65) (14,84) (15,78.79) (16,73.63) (17,74.97) (18,71.16) (19,71.94) (20,70.8) };
            \addplot+[only marks,color = red,mark=triangle*,mark options={fill=red},ultra thin] coordinates {(4,100) (5,100) (6,100) (7,100) (8,99) (9,97) (10,94) (11,94) (12,93) (13,90) (14,89) (15,88.7) (16,87) (17,86.4) (18,86) (19,84.5) (20,84) };
	\end{axis}
    \end{tikzpicture}\\
         \label{fig:three sin x}
     \end{subfigure}
}

             \caption{Solution quality of SALDAE, PICS and CSG-UCT for sets of agents between 4 and 20. 
             }
        \label{solutionQuality}
\normalsize
\end{center}
\end{figure}
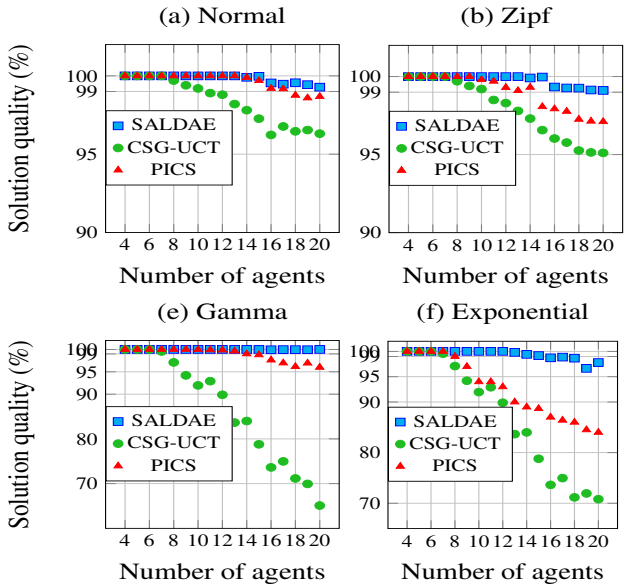

\subsection{Value-Based Selection}

The second idea is to select $\mathcal{N}_{a}$ child nodes to add to the list OPEN$_{i}$ and use an exit condition to stop the generation of child nodes based on the values of the child nodes. 
We start by fixing an initial threshold parameter $\gamma = v(\mathcal{CS}^{+})$, which allows us to specify the acceptance of the generated child nodes. 
At each iteration, we generate $\mathcal{N}_{a}$ random child nodes and add them to a temporary list. If one of the generated child nodes, $\mathcal{CS}$, has a value $v(\mathcal{CS}) > \gamma$, we add the $\mathcal{N}_{a}$ best nodes from the temporary list to OPEN$_{i}$ or RESERVE$_{i}$. Otherwise, we reduce the threshold of acceptance by half the difference between the current threshold and the best value of the generated child nodes, i.e., 
$\gamma = \gamma - (\frac{\gamma - v(\mathcal{CS}^{\#})}{2})$, where $v(\mathcal{CS}^{\#})$ is the value of the highest valued generated child node. This way, we give a better chance of satisfying the condition after each iteration. 
We then loop continuously until the exit condition is satisfied.

The pseudo-code of SALDAE is provided in Algorithm 1 in the appendix, where we also demonstrate that our algorithm has the potential to find the optimal solution if given sufficient time and is able to produce solutions at any point during its execution. Furthermore, the algorithm has desirable properties such as no redundancy in storage and computation, which contribute to its accuracy and speed.

\section{Empirical Evaluation}

We experimentally compare the SALDAE algorithm against representative state-of-the-art CSG algorithms for small and large-scale problems. We compare the solution quality (for small-scale problems) and the gain rate~\cite{9643288} (for large-scale problems). We ran the PICS~\cite{10098066} algorithm with a number of processes set to 20 and we ran CSG-UCT~\cite{wu2020monte} with a number of iterations set to $10^{2}$, as suggested by the authors of the papers. We implemented our algorithm in Java and in the comparisons we used the codes provided by the authors of PICS and CSG-UCT, which are also written in Java. The algorithms were run on an Intel Xeon 2.30GHz E5-2650 CPU with 256GB of RAM. 

To generate the problem instances, we considered the following value distributions: Agent-based Uniform~\cite{rahwan2012hybrid}, Agent-based Normal, Beta, Exponential, Gamma~\cite{michalak2016hybrid}, Normal~\cite{rahwan2007near}, Uniform~\cite{larson2000anytime}, Pascal and Zipf~\cite{changder2020odss}. The result for each value distribution was produced by computing the average result from 50 generated problem instances per value distribution. The best strategies selection for SALDAE and the hyperparameters are explained in the appendix. 

\subsection{Small-Scale Benchmarks}

In this subsection, we investigate how our algorithm compares to the state-of-the-art algorithms in solving small-scale problems with small numbers of agents. We run the algorithms on the nine value distributions and computed the solution quality achieved by the algorithms. Note that the algorithms behave differently depending on the value distributions. The solutions obtained by the algorithms are compared to the solutions provided by ODP-IP~\cite{michalak2016hybrid}, which always yields the optimal solutions. Figure \ref{solutionQuality} shows the results. In these experiments, we set the number of search agents of SALDAE to 10 and we stopped the algorithms when they finish or at the time when ODP-IP finds the optimal solution in case they take more time to finish than ODP-IP. Moreover, the number of cores used for each algorithm was matched to the number of processes it utilized. Specifically, SALDAE was run on 10 cores, PICS on 20 cores, and CSG-UCT on 1 core. 
To ensure fairness in the comparison with CSG-UCT, we added results in the appendix of SALDAE using only one search agent and one core, demonstrating that even with limited resources, SALDAE still outperforms the other algorithm. 

Figure \ref{solutionQuality} clearly shows that our algorithm provides higher quality solutions than the other algorithms, which demonstrates its effectiveness. For example, with the Exponential distribution, SALDAE produces up to $28\%$ higher solution quality than CSG-UCT and up to $15\%$ higher solution quality than PICS. A notable exception is the Agent-based Normal, for which there is almost a tie (see the appendix). 

In all value distributions, our algorithm provides optimal solutions more frequently than the other algorithms. For example with the Exponential distribution, SALDAE provides optimal solutions in 91\% of the cases. With the same instances of Exponential, PICS and CSG-UCT produce the optimal solution in 7\% and 5\% of the Exponential instances, respectively 
(see Figure \ref{optimalrate}). A more detailed experimental results are shown in the appendix. In summary, our algorithm consistently produces higher quality solutions than the PICS and CSG-UCT algorithms, while at the same time providing optimal solutions more frequently than the other algorithms.

\begin{figure}[t]
\begin{center}
\small
\resizebox{0.995\columnwidth}{4.9cm}{%
         \centering
         \begin{tikzpicture}
	\begin{axis}[	grid= major ,
            title=(g) Success rate of the algorithms ,
			width=1.0\textwidth ,
            xlabel = {Distribution} ,
			ylabel = {Success rate ($\%$)} ,
            width=9cm,height=5.3cm,
			ytick={0,20,40,70,80,90,100},
            yticklabels={0,20,40,70,80,90,100},
            xtick={1,2,3,4,5,6,7,8},
            xticklabels={Normal,Uniform,Beta,Gamma,A-b N,A-b U,Pascal,Zipf},
            xticklabel style={rotate=15},
            label style={font=\large},
			title style={font=\large},
			legend entries={SALDAE,PICS,CSG-UCT},
			legend style={at={(0,0.65)},anchor=north west,opacity=0.6,text opacity = 1}]
            \addplot+[only marks,color = blue,mark=square*,mark options={fill=cyan}] coordinates {(1,26) (2,79) (3,74.3) (4,81.4) (5,29.1) (6,25.6) (7,100) (8,64.3) 
            };
            \addplot+[only marks,color = vert,mark=*,mark options={fill=vert}] coordinates {(1,4.9) (2,7.6) (3,25.1) (4,9.6) (5,6.6) (6,4.4) 
            (7,87) (8,13) 
            };
            \addplot+[only marks,color = red,mark=triangle*,mark options={fill=red}] coordinates {(1,3.8) (2,5.4) (3,25.6) (4,4) (5,7.9) (6,4.8) 
            (7,1) (8,4.9) 
            };
	\end{axis}
    \end{tikzpicture}
         \label{tabresultPICS}
}

             \caption{Success rate of the SALDAE, PICS and CSG-UCT algorithms on 2000 executions per distribution.}
        \label{optimalrate}
\normalsize
\end{center}
\end{figure}
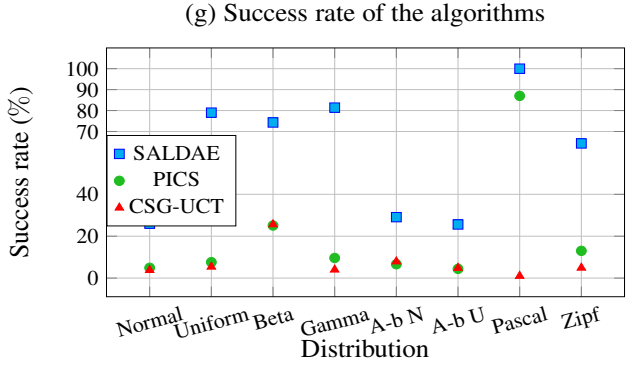

\subsection{Large-Scale Benchmarks}

The fastest optimal algorithm, ODP-IP, only scales to 30 - 40 agents. 
Our algorithm is able to handle large problems with hundreds and thousands of agents. The algorithms PICS and CSG-UCT could also handle large-scale problems. However, for these settings, it is infeasible to guarantee to find an optimal solution due to the exponentiality of the solution space. Hence, we cannot compute the solution quality by comparing the solutions obtained to the optimal solutions. This is why we compare the algorithms using the gain rate, which measures the improvement of the solution achieved by the algorithms relative to the value of the singleton coalition structure. 
The \textit{gain rate} is computed as {\small$\frac{\frac{v(\mathcal{CS})}{v(\mathcal{CS}_{s})}}{\max_{i}(\frac{v(\mathcal{CS}^{+}_{i})}{v(\mathcal{CS}_{s})})}$}, where $i \in \{\text{SALDAE, PICS, CSG-UCT}\}$ and  $v(\mathcal{CS}_{s})$ is the value of the singleton coalition structure, which represents a partition into $n$ coalitions, each containing a single agent, and $v(\mathcal{CS}^{+}_{i})$ is the value of the best solutions provided by the algorithm $i$. We also considered a Disaster Response distribution introduced in~\cite{wu2020monte}, 
in which hundreds of human responders must be quickly organized into teams to coordinate their evacuation and rescue actions. 

Figure \ref{largeScaleResults} shows the results of our large-scale benchmarks. 
We compared SALDAE to PICS, FACS \cite{9643288}, and CSG-UCT. The result of each experiment was produced by evaluating all the algorithms on instances of the different value distributions. To make sure that the algorithms competed on a similar search time, we used the same time limit for all the algorithms. First, we can see a clear general trend that SALDAE and PICS outperform the other algorithms. A notable exception is the Pascal distribution, for which there is almost a tie for less than 100 agents. We can also see that the SALDAE algorithm outperforms the PICS algorithm in a majority of the tests, most notably on problem instances of the Gamma, Exponential, Normal, Disaster Response, and Electric Vehicles Allocation distributions. However, PICS performed on par in Beta and Uniform distributions and 
in some tests of Agent-based Normal distributions (see the appendix). SALDAE’s superior performance comes from the different heuristics that guide the search. For example, the bridging path strategies allow it to search other areas of the solution space and find better solutions quicker. Pathfinding between best solutions aims to explore promising regions of the search space. It provides a form of strategic neighborhood search and diversification to the greedy algorithm’s approach. This helps SALDAE explore better solutions faster than alternative methods.

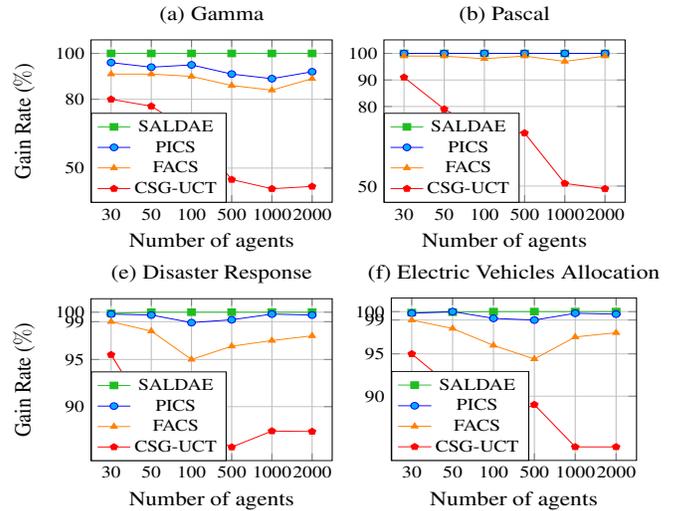
\begin{figure}[t]
\begin{center}
\normalsize
\resizebox{0.46\columnwidth}{3.4cm}{%
     \begin{subfigure}{0.295\textwidth}
         \centering
         \begin{tikzpicture}
	\begin{axis}[	grid= major ,
            title=(a) Gamma ,
			width=0.5\textwidth ,
            xlabel = {Number of agents} ,
			ylabel = {Gain Rate ($\%$)} ,
            width=6cm,height=5cm,
			ytick={50,80,100},
            yticklabels={50,80,100},
            xtick={1,2,3,4,5,6},
            xticklabels={30,50,100,500,1000,2000},
            label style={font=\Large},
			title style={font=\Large},
            legend entries={SALDAE, PICS, FACS, CSG-UCT},
            legend style={at={(0,0.55)},anchor=north west,opacity=0.6,text opacity = 1}]
            \addplot[color = vert,mark=square*,mark options={fill=vert}] coordinates {(1,100) (2,100) (3,100) (4,100) (5,100) (6,100) };
            \addplot[color = blue,mark=*,mark options={fill=cyan}] coordinates {(1,96) (2,94) (3,95) (4,91) (5,89) (6,92) };
            \addplot[color = orange,mark=triangle*,mark options={fill=orange}] coordinates {(1,91) (2,91) (3,90) (4,86) (5,84) (6,89) };
            \addplot[mark=pentagon*,mark options={fill=red},color=red] coordinates {(1,80) (2,77) (3,66) (4,45) (5,41) (6,42) };
	\end{axis}
    \end{tikzpicture}\\
         \label{fig:three sin x}
     \end{subfigure}
}
\hfill
\:\:\:
\hfill
\resizebox{0.46\columnwidth}{3.4cm}{%
     \begin{subfigure}{0.295\textwidth}
         \centering
         \begin{tikzpicture}
	\begin{axis}[	grid= major ,
            title=(b) Pascal ,
			width=0.8\textwidth ,
            xlabel = {Number of agents} ,
            width=6cm,height=5cm,
			ytick={50,80,90,100},
            yticklabels={50,80,90,100},
            xtick={1,2,3,4,5,6},
            xticklabels={30,50,100,500,1000,2000},
            label style={font=\Large},
			title style={font=\Large},
            legend entries={SALDAE, PICS, FACS, CSG-UCT},
            legend style={at={(0,0.55)},anchor=north west,opacity=0.6,text opacity = 1}]
            \addplot[color = vert,mark=square*,mark options={fill=vert}] coordinates {(1,100) (2,100) (3,100) (4,100) (5,100) (6,100) };
            \addplot[color = blue,mark=*,mark options={fill=cyan}] coordinates {(1,100) (2,100) (3,100) (4,100) (5,100) (6,100) 
            };
            \addplot[color = orange,mark=triangle*,mark options={fill=orange}] coordinates {(1,99) (2,99) (3,98) (4,99) (5,97) (6,99) 
            };
            \addplot[mark=pentagon*,mark options={fill=red},color=red] coordinates {(1,91) (2,79) (3,71) (4,70) (5,51) (6,49)
            };
	\end{axis}
    \end{tikzpicture}\\
         \label{fig:three sin x}
     \end{subfigure}
}
     \:
     \resizebox{0.46\columnwidth}{3.4cm}{%
     \begin{subfigure}{0.295\textwidth}
         \centering
         \begin{tikzpicture}
	\begin{axis}[	grid= major ,
            title=(e) Disaster Response ,
			width=0.8\textwidth ,
            xlabel = {Number of agents} ,
			ylabel = {Gain Rate ($\%$)} ,
            width=6cm,height=5cm,
			ytick={90,95,99,100},
            yticklabels={90,95,99,100},
            xtick={1,2,3,4,5,6},
            xticklabels={30,50,100,500,1000,2000},
            label style={font=\Large},
			title style={font=\Large},
            legend entries={SALDAE, PICS, FACS, CSG-UCT},
            legend style={at={(0,0.55)},anchor=north west,opacity=0.4,text opacity = 1}]
            \addplot[color = vert,mark=square*,mark options={fill=vert}] coordinates {(1,99.89) (2,100) (3,100) (4,100) (5,100) (6,100) };
            \addplot[color = blue,mark=*,mark options={fill=cyan}] coordinates {(1,99.8) (2,99.7) (3,98.9) (4,99.2) (5,99.8) (6,99.7) 
            };
            \addplot[color = orange,mark=triangle*,mark options={fill=orange}] coordinates {(1,99) (2,98) (3,95) (4,96.4) (5,97) (6,97.5) 
            };
            \addplot[mark=pentagon*,mark options={fill=red},color=red] coordinates {(1,95.50) (2,88.65) (3,87.42) (4,85.72) (5,87.42) (6,87.37) };
	\end{axis}
    \end{tikzpicture}\\
         \label{fig:three sin x}
     \end{subfigure}
}
\hfill
\:
\resizebox{0.46\columnwidth}{3.4cm}{%
     \begin{subfigure}{0.295\textwidth}
         \centering
         \begin{tikzpicture}
	\begin{axis}[	grid= major ,
            title=(f) Electric Vehicles Allocation ,
			width=0.8\textwidth ,
            xlabel = {Number of agents} ,
            width=6cm,height=5cm,
			ytick={90,95,99,100},
            yticklabels={90,95,99,100},
            xtick={1,2,3,4,5,6},
            xticklabels={30,50,100,500,1000,2000},
            label style={font=\Large},
			title style={font=\Large},
            legend entries={SALDAE, PICS, FACS, CSG-UCT},
            legend style={at={(0,0.55)},anchor=north west,opacity=0.4,text opacity = 1}]
            \addplot[color = vert,mark=square*,mark options={fill=vert}] coordinates {(1,99.89) (2,99.93) (3,100) (4,100) (5,100) (6,100) };
            \addplot[color = blue,mark=*,mark options={fill=cyan}] coordinates {(1,99.8) (2,100) (3,99.2) (4,99.0) (5,99.8) (6,99.7) 
            };
            \addplot[color = orange,mark=triangle*,mark options={fill=orange}] coordinates {(1,99) (2,98) (3,96) (4,94.4) (5,97) (6,97.5) 
            };
            \addplot[mark=pentagon*,mark options={fill=red},color=red] coordinates {(1,95) (2,91) (3,87) (4,89) (5,84) (6,84) };
	\end{axis}
    \end{tikzpicture}\\
         \label{fig:three sin x}
     \end{subfigure}
     }
             \caption{Gain rate of SALDAE, PICS, FACS and CSG-UCT when run with large numbers of agents.
             }
        \label{largeScaleResults}
\normalsize
\end{center}
\end{figure}


\section{Conclusion}

We presented a multiagent path search inspired algorithm for coalition structure generation. In more detail, we developed an algorithm that utilizes multiple search agents to incrementally explore a search graph using various heuristics to guide the search process. 
Furthermore, we introduced different strategies for connecting the best solutions in order to improve upon them, as well as strategies for resolving conflicts between the search agents. 
The resulting algorithm is anytime and can handle large-scale problems. 
We ran experiments on a variety of different value distributions and our results demonstrate that our algorithm can perform better than existing well-established state-of-the-art algorithms in solving the coalition structure generation problem, often achieving significantly higher solution quality and gain rate.

Future work could explore the adaptation and application of other MAPF algorithms to address
this problem. Another potential avenue for applying MAPF in CSG involves using a different
approach where search agents not only have access to start nodes, as in the current version, but
also to goal nodes. Currently, we lack a clear definition of goal nodes beyond the one representing
the optimal solution, and we are unaware of its position in the graph. However, we have the
flexibility to define additional goal nodes, either randomly or through a predefined function, and
then the algorithm can be used to find paths between these start and goal nodes.

\section*{Acknowledgments}
Tuomas Sandholm's research is supported by the Vannevar Bush Faculty Fellowship ONR N00014-23-1-2876, National Science Foundation grant RI-2312342, ARO award W911NF2210266, and NIH award A240108S001.

\bibliography{aaai25}

\begin{thebibliography}{35}
\providecommand{\natexlab}[1]{#1}

\bibitem[{Barer et~al.(2014)Barer, Sharon, Stern, and Felner}]{Barer2014SuboptimalVO}
Barer, M.; Sharon, G.; Stern, R.; and Felner, A. 2014.
\newblock Suboptimal Variants of the Conflict-Based Search Algorithm for the Multi-Agent Pathfinding Problem.
\newblock In \emph{Symposium on Combinatorial Search}.

\bibitem[{Changder et~al.(2020)Changder, Aknine, Ramchurn, and Dutta}]{changder2020odss}
Changder, N.; Aknine, S.; Ramchurn, S.~D.; and Dutta, A. 2020.
\newblock ODSS: Efficient Hybridization for Optimal Coalition Structure Generation.
\newblock In \emph{Proc. of AAAI}, 7079--7086.

\bibitem[{Changder et~al.(2021)Changder, Aknine, Ramchurn, and Dutta}]{Changder_Aknine_Ramchurn_Dutta_2021}
Changder, N.; Aknine, S.; Ramchurn, S.~D.; and Dutta, A. 2021.
\newblock BOSS: A Bi-directional Search Technique for Optimal Coalition Structure Generation with Minimal Overlapping (Student Abstract).
\newblock In \emph{Proc. of AAAI}, volume~35, 15765--15766.

\bibitem[{Dang and Jennings(2004)}]{dang2004generating}
Dang, V.~D.; and Jennings, N.~R. 2004.
\newblock Generating coalition structures with finite bound from the optimal guarantees.
\newblock In \emph{Proc. of the Third International Joint Conference on Autonomous Agents and Multiagent Systems-Volume 2}, 564--571. IEEE Computer Society.

\bibitem[{Di~Mauro et~al.(2010)Di~Mauro, Basile, Ferilli, and Esposito}]{di2010coalition}
Di~Mauro, N.; Basile, T.~M.; Ferilli, S.; and Esposito, F. 2010.
\newblock Coalition structure generation with grasp.
\newblock In \emph{International Conference on Artificial Intelligence: Methodology, Systems, and Applications}, 111--120. Springer.

\bibitem[{Farinelli et~al.(2013)Farinelli, Bicego, Ramchurn, and Zuchelli}]{farinelli2013c}
Farinelli, A.; Bicego, M.; Ramchurn, S.; and Zuchelli, M. 2013.
\newblock C-link: A hierarchical clustering approach to large-scale near-optimal coalition formation.
\newblock In \emph{Proc. of IJCAI}, 407--413.

\bibitem[{Gange, Harabor, and Stuckey(2021)}]{Gange_Harabor_Stuckey_2021}
Gange, G.; Harabor, D.; and Stuckey, P.~J. 2021.
\newblock Lazy CBS: Implicit Conflict-Based Search Using Lazy Clause Generation.
\newblock \emph{Proceedings of the International Conference on Automated Planning and Scheduling}, 29(1): 155--162.

\bibitem[{Kein{\"a}nen(2009)}]{keinanen2009simulated}
Kein{\"a}nen, H. 2009.
\newblock Simulated annealing for multi-agent coalition formation.
\newblock In \emph{KES International Symposium on Agent and Multi-Agent Systems: Technologies and Applications}, 30--39. Springer.

\bibitem[{Larson and Sandholm(2000)}]{larson2000anytime}
Larson, K.~S.; and Sandholm, T.~W. 2000.
\newblock Anytime coalition structure generation: an average case study.
\newblock \emph{Journal of Experimental \& Theoretical Artificial Intelligence}, 12(1): 23--42.

\bibitem[{Li et~al.(2020{\natexlab{a}})Li, Gange, Harabor, Stuckey, Ma, and Koenig}]{Li_Gange_Harabor_Stuckey_Ma_Koenig_2020}
Li, J.; Gange, G.; Harabor, D.; Stuckey, P.~J.; Ma, H.; and Koenig, S. 2020{\natexlab{a}}.
\newblock New Techniques for Pairwise Symmetry Breaking in Multi-Agent Path Finding.
\newblock \emph{Proceedings of the International Conference on Automated Planning and Scheduling}, 30(1): 193--201.

\bibitem[{Li, Ruml, and Koenig(2020)}]{Li2020EECBSAB}
Li, J.; Ruml, W.; and Koenig, S. 2020.
\newblock EECBS: A Bounded-Suboptimal Search for Multi-Agent Path Finding.
\newblock \emph{ArXiv}, abs/2010.01367.

\bibitem[{Li, Ruml, and Koenig(2021)}]{li2021eecbs}
Li, J.; Ruml, W.; and Koenig, S. 2021.
\newblock EECBS: A bounded-suboptimal search for multi-agent path finding.
\newblock In \emph{Proceedings of the AAAI Conference on Artificial Intelligence}, volume~35, 12353--12362.

\bibitem[{Li et~al.(2020{\natexlab{b}})Li, Sun, Ma, Felner, Kumar, and Koenig}]{li2020moving}
Li, J.; Sun, K.; Ma, H.; Felner, A.; Kumar, T.; and Koenig, S. 2020{\natexlab{b}}.
\newblock Moving agents in formation in congested environments.
\newblock In \emph{Proceedings of the International Symposium on Combinatorial Search}, volume~11, 131--132.

\bibitem[{Li et~al.(2021)Li, Tinka, Kiesel, Durham, Kumar, and Koenig}]{Li_Tinka_Kiesel_Durham_Kumar_Koenig_2021}
Li, J.; Tinka, A.; Kiesel, S.; Durham, J.~W.; Kumar, T. K.~S.; and Koenig, S. 2021.
\newblock Lifelong Multi-Agent Path Finding in Large-Scale Warehouses.
\newblock \emph{Proceedings of the AAAI Conference on Artificial Intelligence}, 35(13): 11272--11281.

\bibitem[{Li et~al.(2019)Li, Zhang, Gong, Liang, Liu, Tong, Yi, Morris, Pasareanu, and Koenig}]{Li2019SchedulingAA}
Li, J.; Zhang, H.; Gong, M.; Liang, Z.; Liu, W.; Tong, Z.; Yi, L.; Morris, R.; Pasareanu, C.~S.; and Koenig, S. 2019.
\newblock Scheduling and Airport Taxiway Path Planning Under Uncertainty.

\bibitem[{Ma et~al.(2017)Ma, Li, Kumar, and Koenig}]{Ma2017LifelongMP}
Ma, H.; Li, J.; Kumar, T. K.~S.; and Koenig, S. 2017.
\newblock Lifelong Multi-Agent Path Finding for Online Pickup and Delivery Tasks.
\newblock In \emph{Adaptive Agents and Multi-Agent Systems}.

\bibitem[{Michalak et~al.(2016)Michalak, Rahwan, Elkind, Wooldridge, and Jennings}]{michalak2016hybrid}
Michalak, T.; Rahwan, T.; Elkind, E.; Wooldridge, M.; and Jennings, N.~R. 2016.
\newblock A hybrid exact algorithm for complete set partitioning.
\newblock \emph{Artificial Intelligence}, 230: 14--50.

\bibitem[{Morris et~al.(2016)Morris, Pasareanu, Luckow, Malik, Ma, Kumar, and Koenig}]{Morris2016PlanningSA}
Morris, R.; Pasareanu, C.~S.; Luckow, K.~S.; Malik, W.~A.; Ma, H.; Kumar, T. K.~S.; and Koenig, S. 2016.
\newblock Planning, Scheduling and Monitoring for Airport Surface Operations.
\newblock In \emph{AAAI Workshop: Planning for Hybrid Systems}.

\bibitem[{Rahwan and Jennings(2008)}]{rahwan2008improved}
Rahwan, T.; and Jennings, N.~R. 2008.
\newblock An improved dynamic programming algorithm for coalition structure generation.
\newblock In \emph{Proc. of the 7th international joint conference on Autonomous agents and multiagent systems-Volume 3}, 1417--1420. International Foundation for Autonomous Agents and Multiagent Systems.

\bibitem[{Rahwan, Michalak, and Jennings(2012)}]{rahwan2012hybrid}
Rahwan, T.; Michalak, T.; and Jennings, N.~R. 2012.
\newblock A hybrid algorithm for coalition structure generation.
\newblock In \emph{Proc. of AAAI}, 1443--1449.

\bibitem[{Rahwan et~al.(2007)Rahwan, Ramchurn, Dang, and Jennings}]{rahwan2007near}
Rahwan, T.; Ramchurn, S.~D.; Dang, V.~D.; and Jennings, N.~R. 2007.
\newblock Near-Optimal Anytime Coalition Structure Generation.
\newblock In \emph{Proc. of IJCAI}, volume~7, 2365--2371.

\bibitem[{Rahwan et~al.(2009)Rahwan, Ramchurn, Jennings, and Giovannucci}]{rahwan2009anytime}
Rahwan, T.; Ramchurn, S.~D.; Jennings, N.~R.; and Giovannucci, A. 2009.
\newblock An anytime algorithm for optimal coalition structure generation.
\newblock \emph{Journal of artificial intelligence research}, 34: 521--567.

\bibitem[{Sandholm et~al.(1999)Sandholm, Larson, Andersson, Shehory, and Tohm{\'e}}]{sandholm1999coalition}
Sandholm, T.; Larson, K.; Andersson, M.; Shehory, O.; and Tohm{\'e}, F. 1999.
\newblock Coalition structure generation with worst case guarantees.
\newblock \emph{Artificial Intelligence}, 111(1): 209--238.

\bibitem[{Sandholm and Lesser(1997)}]{sandholm1997coalitions}
Sandholm, T.; and Lesser, V.~R. 1997.
\newblock Coalitions among computationally bounded agents.
\newblock \emph{Artificial intelligence}, 94(1): 99--138.

\bibitem[{Sen and Dutta(2000)}]{sen2000searching}
Sen, S.; and Dutta, P.~S. 2000.
\newblock Searching for optimal coalition structures.
\newblock In \emph{Proceedings Fourth International Conference on MultiAgent Systems}, 287--292. IEEE.

\bibitem[{Sharon et~al.(2015)Sharon, Stern, Felner, and Sturtevant}]{sharon2015conflict}
Sharon, G.; Stern, R.; Felner, A.; and Sturtevant, N.~R. 2015.
\newblock Conflict-based search for optimal multi-agent pathfinding.
\newblock \emph{Artificial Intelligence}, 219: 40--66.

\bibitem[{Stern(2019)}]{stern2019multi2}
Stern, R. 2019.
\newblock Multi-agent path finding--an overview.
\newblock \emph{Artificial Intelligence: 5th RAAI Summer School, Dolgoprudny, Russia, July 4--7, 2019, Tutorial Lectures}, 96--115.

\bibitem[{Stern et~al.(2019)Stern, Sturtevant, Felner, Koenig, Ma, Walker, Li, Atzmon, Cohen, Kumar et~al.}]{stern2019multi}
Stern, R.; Sturtevant, N.; Felner, A.; Koenig, S.; Ma, H.; Walker, T.; Li, J.; Atzmon, D.; Cohen, L.; Kumar, T.; et~al. 2019.
\newblock Multi-agent pathfinding: Definitions, variants, and benchmarks.
\newblock In \emph{Proceedings of the International Symposium on Combinatorial Search}, volume~10, 151--158.

\bibitem[{Taguelmimt et~al.(2021)Taguelmimt, Aknine, Boukredera, and Changder}]{9643288}
Taguelmimt, R.; Aknine, S.; Boukredera, D.; and Changder, N. 2021.
\newblock Code-based Algorithm for Coalition Structure Generation.
\newblock In \emph{2021 IEEE 33rd International Conference on Tools with Artificial Intelligence (ICTAI)}, 1075--1082.

\bibitem[{Taguelmimt et~al.(2022)Taguelmimt, Aknine, Boukredera, and Changder}]{10098066}
Taguelmimt, R.; Aknine, S.; Boukredera, D.; and Changder, N. 2022.
\newblock PICS: Parallel Index-based Search Algorithm for Coalition Structure Generation.
\newblock In \emph{2022 IEEE 34th International Conference on Tools with Artificial Intelligence (ICTAI)}, 739--746.

\bibitem[{Taguelmimt et~al.(2023)Taguelmimt, Aknine, Boukredera, Changder, and Sandholm}]{ijcai2023p35}
Taguelmimt, R.; Aknine, S.; Boukredera, D.; Changder, N.; and Sandholm, T. 2023.
\newblock Optimal Anytime Coalition Structure Generation Utilizing Compact Solution Space Representation.
\newblock In Elkind, E., ed., \emph{Proceedings of the Thirty-Second International Joint Conference on Artificial Intelligence, {IJCAI-23}}, 309--316. International Joint Conferences on Artificial Intelligence Organization.
\newblock Main Track.

\bibitem[{Taguelmimt et~al.(2024)Taguelmimt, Aknine, Boukredera, Changder, and Sandholm}]{ijcai2024p27}
Taguelmimt, R.; Aknine, S.; Boukredera, D.; Changder, N.; and Sandholm, T. 2024.
\newblock Faster Optimal Coalition Structure Generation via Offline Coalition Selection and Graph-Based Search.
\newblock In Larson, K., ed., \emph{Proceedings of the Thirty-Third International Joint Conference on Artificial Intelligence, {IJCAI-24}}, 238--248. International Joint Conferences on Artificial Intelligence Organization.
\newblock Main Track.

\bibitem[{Wu and Ramchurn(2020)}]{wu2020monte}
Wu, F.; and Ramchurn, S.~D. 2020.
\newblock Monte-Carlo Tree Search for Scalable Coalition Formation.
\newblock In \emph{Proc. of IJCAI}, 407--413.

\bibitem[{Yeh(1986)}]{yeh1986dynamic}
Yeh, D.~Y. 1986.
\newblock A dynamic programming approach to the complete set partitioning problem.
\newblock \emph{BIT Numerical Mathematics}, 26(4): 467--474.

\bibitem[{Yu and LaValle(2013)}]{Yu_LaValle_2013}
Yu, J.; and LaValle, S. 2013.
\newblock Structure and Intractability of Optimal Multi-Robot Path Planning on Graphs.
\newblock \emph{Proceedings of the AAAI Conference on Artificial Intelligence}, 27(1): 1443--1449.

\end{thebibliography}

\newpage

\onecolumn

\appendix

%



\section{Analysis of SALDAE}

\begin{thm}
Our algorithm is anytime.
\end{thm}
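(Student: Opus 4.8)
The plan is to verify the two defining properties of an anytime algorithm: that a feasible solution is available at every moment of execution, and that the quality of the reported solution never degrades as computation proceeds.

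First I would establish \emph{availability}. By the initialization of Algorithm~\ref{SALDAE} (\textsf{Generate root node R}; $\mathcal{CS}^{*} \leftarrow$ R; $\mathcal{V}^{*} \leftarrow v(\mathrm{R})$), the incumbent pair $(\mathcal{CS}^{*}, \mathcal{V}^{*})$ is set to a valid coalition structure together with its value before the main loop begins. I would then note that every node created afterward is itself a genuine coalition structure: each is obtained from an existing coalition structure by a split or a merge, and by the definitions in Section~4 both operations map a partition of $\mathcal{A}$ to another partition of $\mathcal{A}$. Hence at any instant the variable $\mathcal{CS}^{*}$ holds a feasible solution, so the algorithm may be interrupted at any point and return $(\mathcal{CS}^{*}, \mathcal{V}^{*})$.

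Next I would establish \emph{monotonic improvement}. The only statements that overwrite $\mathcal{CS}^{*}$ and $\mathcal{V}^{*}$ lie inside conditionals guarded by $v(\mathcal{N}) > \mathcal{V}^{*}$, together with the analogous start-node guard $v(\mathrm{R}_{i}) > \mathcal{V}^{*}$ for the per-agent initialization. Therefore each reassignment replaces the incumbent only by a strictly higher-valued coalition structure, so $\mathcal{V}^{*}$ is non-decreasing throughout execution. Combined with availability, this yields the anytime property: the solution reported at any stopping time is feasible, and its value is at least as large as that reported at any earlier time.

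The part that requires care is the concurrency, since SALDAE runs $m$ search agents in parallel and each may update the shared incumbent. I would argue that, provided the compare-and-update of $(\mathcal{CS}^{*}, \mathcal{V}^{*})$ is treated as atomic, the guard $v(\mathcal{N}) > \mathcal{V}^{*}$ still guarantees that the stored value only ever increases, regardless of how the agents' updates interleave; no interleaving can lower $\mathcal{V}^{*}$. I expect this concurrency argument to be the only genuine subtlety, as the availability and monotonicity claims otherwise follow directly by inspection of the pseudocode and of the split/merge definitions. Note that convergence to the optimum given unbounded time is a separate guarantee, handled elsewhere; here it suffices to show feasibility-at-all-times plus non-degrading quality.
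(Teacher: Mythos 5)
Your proof is correct and follows essentially the same route as the paper's: the paper likewise argues that the algorithm starts from an initial coalition structure, only ever replaces the incumbent with a strictly better one, and can therefore return the current best solution whenever it is stopped. Your version is more careful than the paper's (which does not explicitly address feasibility of generated nodes or atomicity of the incumbent update under parallel search agents), but these are refinements of the same argument rather than a different approach.
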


\begin{proof}
To compute the coalition structures, the algorithm starts with an initial coalition structure (a node) and then improves it over time. Notice that, the algorithm evaluates the solutions by comparing their values to that of the currently best solution. Hence, it always maintains the value of the currently best solution. When execution is stopped, it can still return the currently best solution that is better than the previous best ones. Therefore, the algorithm is anytime.
\end{proof}

\begin{property}
    The SALDAE algorithm has the following properties.
    \begin{itemize}
        \item \textbf{P1} Limited redundancy in storage: Each node is stored by at most one search agent.
        \item \textbf{P2} No redundancy in computation: Each node is expanded by one search agent.
    \end{itemize}
\end{property}

P1 is desirable for accuracy. Less redundancy in storage enables the algorithm to store more unique nodes. P2 is desirable for speed.

\textbf{Optimal solution generation discussion. } 
The proposed algorithm has the potential to return the optimal solution if given a sufficient amount of run time. 
Our algorithm operates by iteratively expanding a search graph of coalition structures. At each iteration, the algorithm generates new child nodes by considering different ways of splitting or merging coalitions from the coalition structures that have been visited so far. Our algorithm adds after each iteration new child nodes to the search graph. If the algorithm is given an unlimited amount of time with an unlimited amount of memory space, it can eventually 
add the optimal coalition structure to the graph and produce the optimal solution. 
This is because every possible coalition structure can be reached by a sequence of splits or merges from an existing coalition structure. 
However, in practice, it may not be feasible to run the algorithm for an unlimited amount of time, especially for large-scale problems. Nonetheless, for large-scale problems, it is intractable to guarantee to find the optimal solution in a limited run time.


\begin{proof}[Proof of Observation 1]
A merger is an operation that combines two coalitions into a single coalition. A split is an operation that takes a single coalition and divides it into two coalitions. The level of a node indicates the number of coalitions in the partition. For example, a node at level 1 represents a partition into a single coalition containing all the agents, while a node at level $n$ represents a partition into $n$ coalitions, each containing a single agent.

Now, we can prove the two parts of the lemma separately.

For the first part, we want to show that the bottom node, which represents a partition into a single coalition containing all the agents, can be reached from $\mathcal{N}$ with $l-1$ mergers and that $\mathcal{N}$ can be reached from the bottom node with $l-1$ splits.

To reach the bottom node from $\mathcal{N}$, we can perform $l-1$ mergers, starting with the two coalitions containing the smallest number of agents and continuing until all the agents are in a single coalition. This requires $l-1$ mergers, as there are $l-1$ pairs of coalitions to merge.

To reach $\mathcal{N}$ from the bottom node, we can perform $l-1$ splits, starting with the single coalition containing all the agents and dividing it into two coalitions. We can then continue to divide the remaining coalition into two coalitions until we have $l$ coalitions. This requires $l-1$ splits, as there are $l-1$ coalitions to split.

For the second part of the lemma, we want to show that the top node, which represents a partition into $n$ coalitions, each containing a single agent, can be reached from $\mathcal{N}$ with $n-l$ splits and that $\mathcal{N}$ can be reached from the top node with $n-l$ mergers.

To reach the top node from $\mathcal{N}$, we can perform $n-l$ splits on coalitions that contain at least 2 agents until all the agents are in a single coalition. This requires $n-l$ splits, as there are $n-l$ coalitions to split.

To reach $\mathcal{N}$ from the top node, we can perform $n-l$ splits, starting with two coalitions containing a single agent and continuing until all agents are in the correct coalition. This requires $n-l$ splits, as there are $n-l$ pairs of coalitions to merge.

Therefore, the lemma is proved.

\end{proof}

\begin{proof}[Proof of Observation 2]
Let $l_{1}$ and $l_{2}$ be the levels of $\mathcal{CS}_{1}$ and $\mathcal{CS}_{2}$, respectively. We will show that for every $l_{1},l_{2} \in \{1,...,n\}$ there exists a path of size at most $n-1$. We split the proof into two cases. In all cases, starting from $\mathcal{CS}_{1}$, a path needs to visit either the top node through consecutive splits or the bottom node through consecutive merges, before heading to $\mathcal{CS}_{2}$. We will indicate which one to choose for each case. 
The shortest path between $\mathcal{CS}_{1}$ and the top node is of size $n - l_{1}$ and the shortest path between $\mathcal{CS}_{2}$ and the top node is of size $n - l_{2}$ by Lemma~1. Also, The shortest path between $\mathcal{CS}_{1}$ and the bottom node is of size $l_{1}-1$ and the shortest path between $\mathcal{CS}_{2}$ and the bottom node is of size $l_{2}-1$ by Lemma~1. The shortest path between $\mathcal{CS}_{1}$ and $\mathcal{CS}_{2}$ that visits the top node is of size $n-l_{1} + n-l_{2}$. The shortest path between $\mathcal{CS}_{1}$ and $\mathcal{CS}_{2}$ that visits the bottom node is of size $l_{1}-1 + l_{2}-1 = l_{1} + l_{2}-2$. 

\begin{itemize}
    \item \textbf{Case 1:} $n-l_{1} + n-l_{2} > n-1$. 
    In this case, the size of the path between $\mathcal{CS}_{1}$ and $\mathcal{CS}_{2}$ that visits the top node is greater than $n-1$. We will now prove that in this case, the shortest path between $\mathcal{CS}_{1}$ and $\mathcal{CS}_{2}$ that visits the bottom node is of size at most $n-1$. We supposed that $n-l_{1} + n-l_{2} > n-1$. Hence, $n+n>n-1 + l_{1} + l_{2}$. $l_{1} + l_{2} < n+1$, and thus, $l_{1} + l_{2} - 2 < n-1$. This means that when the size of the path that visits the top node is greater than $n-1$, the size of the path that visits the bottom node is lower than $n-1$, so Lemma~2 holds.
    \item \textbf{Case 2:} $l_{1} + l_{2} - 2 > n-1$ (i.e. the size of the path that visits the bottom node is greater than $n-1$). By the same logic as the previous case, we will now prove that in this case, the shortest path between $\mathcal{CS}_{1}$ and $\mathcal{CS}_{2}$ that visits the bottom node is of size at most $n-1$. We supposed that $l_{1} + l_{2} - 2 > n-1$. Hence, $n+l_{1} + l_{2} - 2 > n+n-1$. $n+n-1-l_{1}-l_{2} < n-2$. Thus, $n-l_{1}+n-l_{2}<n-1$. This means that when the size of the path that visits the bottom node is greater than $n-1$, the size of the path that visits the top node is lower than $n-1$, so Lemma~2 holds. 
\end{itemize}

As a result, there is always a path of size at most $n-1$ between two coalition structures.





\end{proof}


Below is a theorem that holds for the SPLIT-THEN-MERGE and MERGE-THEN-SPLIT strategies. The proof is omitted as it follows the proof of Lemma~2.

\begin{thm}
The SPLIT-THEN-MERGE and MERGE-THEN-SPLIT strategies guarantee to return a shortest path that visits either the top or the bottom node and the size of this path is at most $n-1$.
\end{thm}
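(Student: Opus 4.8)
The plan is to analyze each strategy separately, compute the exact size of the path it constructs using Observation~1, argue that each path is the shortest one subject to its own routing constraint (through the top node for SPLIT-THEN-MERGE, through the bottom node for MERGE-THEN-SPLIT), and finally invoke the arithmetic dichotomy already established in the proof of Observation~2 to conclude that at least one of the two sizes is at most $n-1$. Throughout I would write $l_1$ and $l_2$ for the levels of $\mathcal{CS}_l$ and $\mathcal{CS}_n$.

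First I would count the operations of each strategy. For SPLIT-THEN-MERGE, the ascending phase goes from $\mathcal{CS}_l$ to the top node, which by Observation~1 requires $n-l_1$ splits, and the descending phase goes from the top node to $\mathcal{CS}_n$, which by Observation~1 requires $n-l_2$ merges; hence its path has size $(n-l_1)+(n-l_2)=2n-l_1-l_2$. Symmetrically, MERGE-THEN-SPLIT first performs $l_1-1$ merges from $\mathcal{CS}_l$ down to the bottom node and then $l_2-1$ splits from the bottom node up to $\mathcal{CS}_n$, giving a path of size $(l_1-1)+(l_2-1)=l_1+l_2-2$. A small but necessary check here is that the constraints built into the two strategies---never merging two coalitions whose agents lie in different coalitions of $\mathcal{CS}_n$, and never splitting apart agents that share a coalition in $\mathcal{CS}_n$---do not force any wasted moves. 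I would verify this by noting that, from the top node, forming a target coalition of size $s$ costs exactly $s-1$ admissible merges, so the total is $\sum_{C\in\mathcal{CS}_n}(|C|-1)=n-l_2$, matching the count above, and dually for the split phase of MERGE-THEN-SPLIT.

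Next I would establish the shortest-path claim. Any path from $\mathcal{CS}_l$ to $\mathcal{CS}_n$ that is required to pass through the top node decomposes into a sub-path $\mathcal{CS}_l\to\text{top}$ and a sub-path $\text{top}\to\mathcal{CS}_n$; by Observation~1 these sub-paths have minimum sizes $n-l_1$ and $n-l_2$, so every top-visiting path has size at least $2n-l_1-l_2$, a bound that SPLIT-THEN-MERGE attains. The identical argument with the bottom node shows MERGE-THEN-SPLIT realizes the minimum size $l_1+l_2-2$ among all bottom-visiting paths.

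The final step is purely arithmetic and is exactly the dichotomy used in the proof of Observation~2: if $2n-l_1-l_2>n-1$ then $l_1+l_2<n+1$, hence $l_1+l_2-2<n-1$, and conversely if $l_1+l_2-2>n-1$ then $2n-l_1-l_2<n-1$. Thus at least one of the two strategies produces a shortest (top- or bottom-visiting) path of size at most $n-1$, which is precisely the claim. I expect the only genuine obstacle to be the verification in the second paragraph that the admissibility constraints never inflate the move count beyond the Observation~1 optimum; once that is confirmed, the remainder reduces to the counting and the dichotomy already in hand.
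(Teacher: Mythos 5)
Your proposal is correct and follows exactly the route the paper intends: the paper omits the proof with the remark that it ``follows the proof of Lemma~2,'' i.e.\ the top-versus-bottom dichotomy on $2n-l_1-l_2$ and $l_1+l_2-2$ that you reproduce, combined with the operation counts from Observation~1. Your additional check that the admissibility constraints (not merging agents separated in $\mathcal{CS}_n$, not splitting agents grouped in $\mathcal{CS}_n$) do not inflate the move count beyond $\sum_{C\in\mathcal{CS}_n}(|C|-1)=n-l_2$ is a detail the paper leaves implicit, and it is worth keeping.
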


The APPROACH-THEN-SWAP strategy also guarantees that a path is found. However, the size of this path can be large. 

\section{Additional Figures}

Figures \ref{CSG1} and \ref{CSG2} show illustration examples of steps 1 and 2 of SALDAE. 

Figures \ref{splitthenmerge}, \ref{mergethensplit}, and \ref{approachthenswap} show illustration examples of the strategies of connecting the best solutions used by SALDAE.

\begin{figure}[h!]
\centering
\normalsize
\resizebox{0.95\columnwidth}{8.1cm}{%
\begin{tikzpicture}
\large
\node[draw,rectangle,rounded corners=3pt,black!30] (a)at(7.5,0){$\{1\},\{2\},\{3\},\{4\}$};
\node[draw,rectangle,rounded corners=3pt,black!30] (b)at(0,-1.5){$\{1\},\{2\},\{3,4\}$};
\node[draw,rectangle,rounded corners=3pt,black!30] (c)at(3,-1.5){$\{3\},\{4\},\{1,2\}$};
\node[draw,rectangle,rounded corners=3pt,black!30] (d)at(6,-1.5){$\{1\},\{3\},\{2,4\}$};
\node[draw,rectangle,rounded corners=3pt,black!30] (e)at(9,-1.5){$\{2\},\{4\},\{1,3\}$};
\node[draw,rectangle,rounded corners=3pt,black!30] (f)at(12,-1.5){$\{1\},\{4\},\{2,3\}$};
\node[draw,rectangle,rounded corners=3pt,black!30] (g)at(15,-1.5){$\{2\},\{3\},\{1,4\}$};

\node[draw,rectangle,rounded corners=3pt,vert] (h)at(-1.5,-3){\color{black}$\{1\},\{2,3,4\}$};
\node[draw,rectangle,rounded corners=3pt,vert] (i)at(1.5,-3){\color{black}$\{1,2\},\{3,4\}$};
\node[draw,rectangle,rounded corners=3pt,vert] (j)at(4.5,-3){\color{black}$\{2\},\{1,3,4\}$};
\node[draw,rectangle,rounded corners=3pt,vert] (k)at(7.5,-3){\color{black}$\{1,3\},\{2,4\}$};
\node[draw,rectangle,rounded corners=3pt,vert] (l)at(10.5,-3){\color{black}$\{3\},\{1,2,4\}$};
\node[draw,rectangle,rounded corners=3pt,vert] (m)at(13.5,-3){\color{black}$\{1,4\},\{2,3\}$};
\node[draw,rectangle,rounded corners=3pt,vert] (n)at(16.5,-3){\color{black}$\{4\},\{1,2,3\}$};
\node[draw,rectangle,rounded corners=3pt,red] (o)at(7.5,-4.5){\color{black}$\{1,2,3,4\}$};

\draw[-,>=latex,blue!30,dashed] (0,-1.17) -- (7.5,-0.32);
\draw[-,>=latex,blue!30,dashed] (3,-1.17) -- (7.5,-0.32);
\draw[-,>=latex,blue!30,dashed] (6,-1.17) -- (7.5,-0.32);
\draw[-,>=latex,blue!30,dashed] (9,-1.17) -- (7.5,-0.32);
\draw[->,>=latex,bleu!30,dashed] (12,-1.17) -- (7.5,-0.32);
\draw[-,>=latex,blue!30,dashed] (15,-1.17) -- (7.5,-0.32);

\draw[-,>=latex,black!30,dashed] (-1.5,-2.65) -- (0,-1.83);
\draw[-,>=latex,black!30,dashed] (-1.5,-2.65) -- (6,-1.83);
\draw[-,>=latex,black!30,dashed] (-1.5,-2.65) -- (12,-1.83);
\draw[-,>=latex,teal!30,thick,dashed] (1.5,-2.65) -- (0,-1.83);
\draw[-,>=latex,teal!30,thick,dashed] (1.5,-2.65) -- (3,-1.83);
\draw[-,>=latex,black!30,dashed] (4.5,-2.65) -- (0,-1.83);
\draw[-,>=latex,black!30,dashed] (4.5,-2.65) -- (9,-1.83);
\draw[-,>=latex,black!30,dashed] (4.5,-2.65) -- (15,-1.83);
\draw[-,>=latex,teal!30,thick,dashed] (7.5,-2.65) -- (6,-1.83);
\draw[-,>=latex,teal!30,thick,dashed] (7.5,-2.65) -- (9,-1.83);
\draw[-,>=latex,black!30,dashed] (10.5,-2.65) -- (3,-1.83);
\draw[-,>=latex,black!30,dashed] (10.5,-2.65) -- (6,-1.83);
\draw[-,>=latex,black!30,dashed] (10.5,-2.65) -- (15,-1.83);
\draw[-,>=latex,teal!30,thick,dashed] (13.5,-2.65) -- (12,-1.83);
\draw[-,>=latex,teal!30,thick,dashed] (13.5,-2.65) -- (15,-1.83);
\draw[-,>=latex,black!30,dashed] (16.5,-2.65) -- (3,-1.83);
\draw[-,>=latex,black!30,dashed] (16.5,-2.65) -- (9,-1.83);
\draw[-,>=latex,black!30,dashed] (16.5,-2.65) -- (12,-1.83);

\draw[->,>=latex,vert,dashed] (7.5,-4.17) -- (-1.5,-3.35);
\draw[->,>=latex,vert,dashed] (7.5,-4.17) -- (1.5,-3.35);
\draw[->,>=latex,vert,dashed] (7.5,-4.17) -- (4.5,-3.35);
\draw[->,>=latex,vert,dashed] (7.5,-4.17) -- (7.5,-3.35);
\draw[->,>=latex,vert,dashed] (7.5,-4.17) -- (10.5,-3.35);
\draw[->,>=latex,vert,dashed] (7.5,-4.17) -- (13.5,-3.35);
\draw[->,>=latex,vert,dashed] (7.5,-4.17) -- (16.5,-3.35);
            \end{tikzpicture}%
        }
\caption{An illustration of the generation step with 4 agents. The start node is the red node. The child nodes are represented by the green nodes. All of these child nodes are candidates to become the new start node. At this stage, only a partial graph is built.} \label{CSG1}
\end{figure}
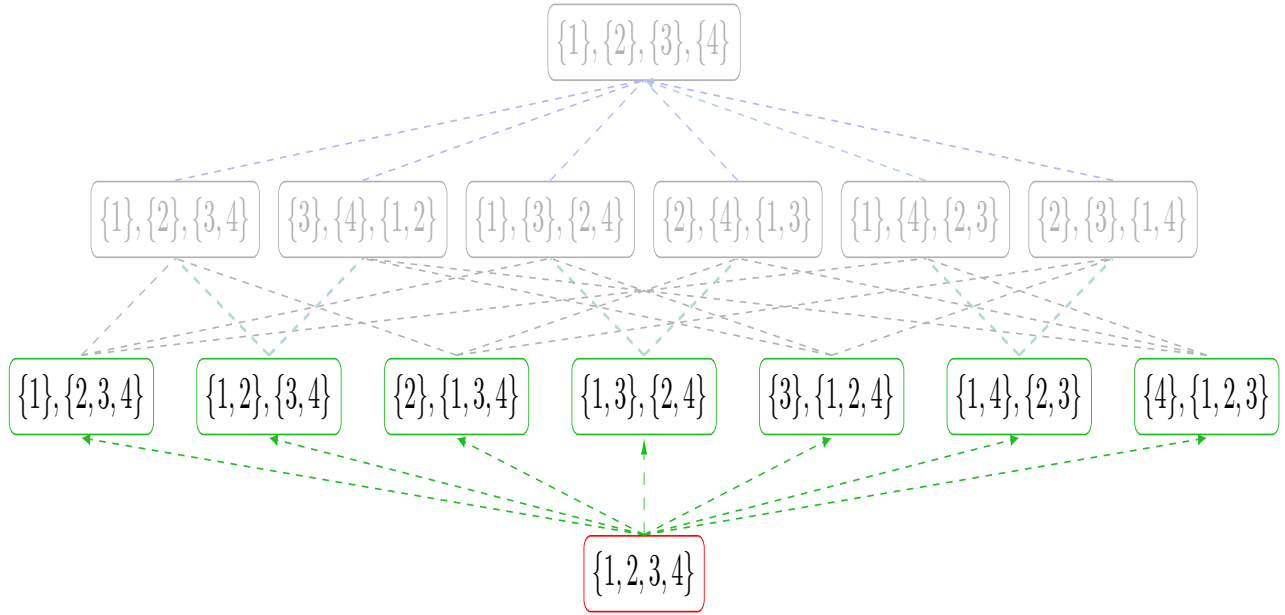

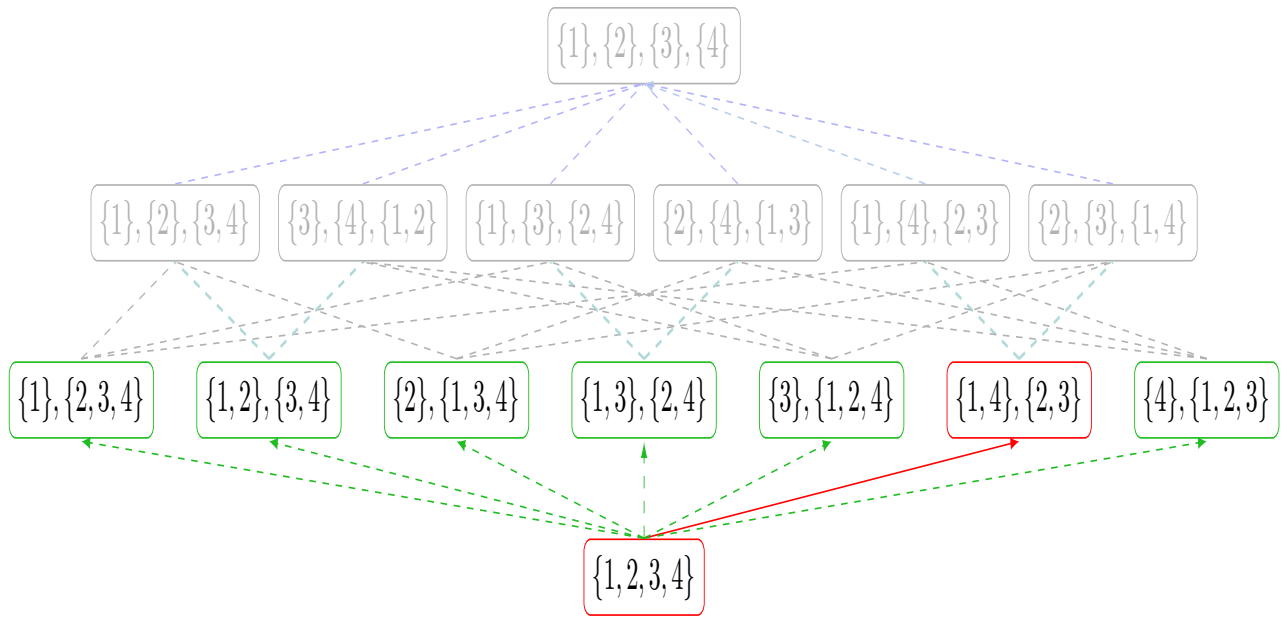
\begin{figure}[h!]
\centering
\normalsize
\resizebox{0.95\columnwidth}{8.1cm}{%
\begin{tikzpicture}
\large
\node[draw,rectangle,rounded corners=3pt,black!30] (a)at(7.5,0){$\{1\},\{2\},\{3\},\{4\}$};
\node[draw,rectangle,rounded corners=3pt,black!30] (b)at(0,-1.5){$\{1\},\{2\},\{3,4\}$};
\node[draw,rectangle,rounded corners=3pt,black!30] (c)at(3,-1.5){$\{3\},\{4\},\{1,2\}$};
\node[draw,rectangle,rounded corners=3pt,black!30] (d)at(6,-1.5){$\{1\},\{3\},\{2,4\}$};
\node[draw,rectangle,rounded corners=3pt,black!30] (e)at(9,-1.5){$\{2\},\{4\},\{1,3\}$};
\node[draw,rectangle,rounded corners=3pt,black!30] (f)at(12,-1.5){$\{1\},\{4\},\{2,3\}$};
\node[draw,rectangle,rounded corners=3pt,black!30] (g)at(15,-1.5){$\{2\},\{3\},\{1,4\}$};

\node[draw,rectangle,rounded corners=3pt,vert] (h)at(-1.5,-3){\color{black}$\{1\},\{2,3,4\}$};
\node[draw,rectangle,rounded corners=3pt,vert] (i)at(1.5,-3){\color{black}$\{1,2\},\{3,4\}$};
\node[draw,rectangle,rounded corners=3pt,vert] (j)at(4.5,-3){\color{black}$\{2\},\{1,3,4\}$};
\node[draw,rectangle,rounded corners=3pt,vert] (k)at(7.5,-3){\color{black}$\{1,3\},\{2,4\}$};
\node[draw,rectangle,rounded corners=3pt,vert] (l)at(10.5,-3){\color{black}$\{3\},\{1,2,4\}$};
\node[draw,rectangle,rounded corners=3pt,red] (m)at(13.5,-3){\color{black}$\{1,4\},\{2,3\}$};
\node[draw,rectangle,rounded corners=3pt,vert] (n)at(16.5,-3){\color{black}$\{4\},\{1,2,3\}$};
\node[draw,rectangle,rounded corners=3pt,red] (o)at(7.5,-4.5){\color{black}$\{1,2,3,4\}$};

\draw[-,>=latex,blue!30,dashed] (0,-1.17) -- (7.5,-0.32);
\draw[-,>=latex,blue!30,dashed] (3,-1.17) -- (7.5,-0.32);
\draw[-,>=latex,blue!30,dashed] (6,-1.17) -- (7.5,-0.32);
\draw[-,>=latex,blue!30,dashed] (9,-1.17) -- (7.5,-0.32);
\draw[->,>=latex,bleu!30,dashed] (12,-1.17) -- (7.5,-0.32);
\draw[-,>=latex,blue!30,dashed] (15,-1.17) -- (7.5,-0.32);

\draw[-,>=latex,black!30,dashed] (-1.5,-2.65) -- (0,-1.83);
\draw[-,>=latex,black!30,dashed] (-1.5,-2.65) -- (6,-1.83);
\draw[-,>=latex,black!30,dashed] (-1.5,-2.65) -- (12,-1.83);
\draw[-,>=latex,teal!30,thick,dashed] (1.5,-2.65) -- (0,-1.83);
\draw[-,>=latex,teal!30,thick,dashed] (1.5,-2.65) -- (3,-1.83);
\draw[-,>=latex,black!30,dashed] (4.5,-2.65) -- (0,-1.83);
\draw[-,>=latex,black!30,dashed] (4.5,-2.65) -- (9,-1.83);
\draw[-,>=latex,black!30,dashed] (4.5,-2.65) -- (15,-1.83);
\draw[-,>=latex,teal!30,thick,dashed] (7.5,-2.65) -- (6,-1.83);
\draw[-,>=latex,teal!30,thick,dashed] (7.5,-2.65) -- (9,-1.83);
\draw[-,>=latex,black!30,dashed] (10.5,-2.65) -- (3,-1.83);
\draw[-,>=latex,black!30,dashed] (10.5,-2.65) -- (6,-1.83);
\draw[-,>=latex,black!30,dashed] (10.5,-2.65) -- (15,-1.83);
\draw[-,>=latex,teal!30,thick,dashed] (13.5,-2.65) -- (12,-1.83);
\draw[-,>=latex,teal!30,thick,dashed] (13.5,-2.65) -- (15,-1.83);
\draw[-,>=latex,black!30,dashed] (16.5,-2.65) -- (3,-1.83);
\draw[-,>=latex,black!30,dashed] (16.5,-2.65) -- (9,-1.83);
\draw[-,>=latex,black!30,dashed] (16.5,-2.65) -- (12,-1.83);

\draw[->,>=latex,vert,dashed] (7.5,-4.17) -- (-1.5,-3.35);
\draw[->,>=latex,vert,dashed] (7.5,-4.17) -- (1.5,-3.35);
\draw[->,>=latex,vert,dashed] (7.5,-4.17) -- (4.5,-3.35);
\draw[->,>=latex,vert,dashed] (7.5,-4.17) -- (7.5,-3.35);
\draw[->,>=latex,vert,dashed] (7.5,-4.17) -- (10.5,-3.35);
\draw[->,>=latex,red] (7.5,-4.17) -- (13.5,-3.35);
\draw[->,>=latex,vert,dashed] (7.5,-4.17) -- (16.5,-3.35);
            \end{tikzpicture}%
        }
\caption{An illustration of the selection procedure. All the green nodes are candidate child nodes. However, the coalition structure $\{\{a_{1},a_{4}\},\{a_{2},a_{3}\}\}$ has, we assume, the highest value and hence it is selected.} \label{CSG2}
\end{figure}

\begin{figure}[h!]
\centering
\normalsize
\resizebox{0.99\columnwidth}{8.9cm}{%
\begin{tikzpicture}
\large
\node[draw,rectangle,rounded corners=3pt,blue] (a0)at(7.5,0){\color{black}$\{1\},\{2\},\{3\},\{4\},\{5\},\{6\}$};

\large
\node[] (k)at(-4.15,-1.5){\huge \color{black}...};
\node[draw,rectangle,rounded corners=3pt,blue] (b1)at(-1.4,-1.5){\color{black}$\{1\},\{2\},\{4\},\{6\},\{3,5\}$};
\node[draw,rectangle,rounded corners=3pt,black] (c1)at(3.05,-1.5){\color{black}$\{3\},\{4\},\{5\},\{6\}\{1,2\}$};
\node[draw,rectangle,rounded corners=3pt,black] (d1)at(7.5,-1.5){\color{black}$\{1\},\{2\},\{4\},\{6\},\{3,5\}$};
\node[draw,rectangle,rounded corners=3pt,blue] (e1)at(11.95,-1.5){\color{black}$\{2\},\{4\},\{5\},\{6\},\{1,3\}$};
\node[] (k)at(14.45,-1.5){\huge \color{black}...};

\large
\node[] (k)at(-5.0,-3){\huge \color{black}...};
\node[draw,rectangle,rounded corners=3pt,black] (a2)at(-2.5,-3){\color{black}$\{1\},\{2\},\{3,5\},\{4,6\}$};
\node[draw,rectangle,rounded corners=3pt,black] (b2)at(1.5,-3){\color{black}$\{3\},\{5\},\{6\},\{1,2,4\}$};
\node[draw,rectangle,rounded corners=3pt,vert] (c2)at(5.5,-3){\color{black}$\{4\},\{6\}\{1,2\},\{3,5\}$};
\node[draw,rectangle,rounded corners=3pt,black] (d2)at(9.5,-3){\color{black}$\{2\},\{4\},\{6\},\{1,3,5\}$};
\node[draw,rectangle,rounded corners=3pt,blue] (e2)at(13.5,-3){\color{black}$\{4\},\{6\},\{1,3\},\{2,5\}$};
\node[] (k)at(16.0,-3){\huge \color{black}...};

\large
\node[] (k)at(-5.95,-4.5){\huge \color{black}...};
\node[draw,rectangle,rounded corners=3pt,black] (a3)at(-3.45,-4.5){\color{black}$\{1\},\{2\},\{3,4,5,6\}$};
\node[draw,rectangle,rounded corners=3pt,black] (b3)at(0.2,-4.5){\color{black}$\{6\},\{3,5\},\{1,2,4\}$};
\node[draw,rectangle,rounded corners=3pt,black] (c3)at(3.85,-4.5){\color{black}$\{1,2\},\{3,5\},\{4,6\}$};
\node[draw,rectangle,rounded corners=3pt,black] (d3)at(7.5,-4.5){\color{black}$\{4\},\{3,5\},\{1,2,6\}$};
\node[draw,rectangle,rounded corners=3pt,blue] (e3)at(11.15,-4.5){\color{black}$\{6\},\{2,5\},\{1,3,4\}$};
\node[draw,rectangle,rounded corners=3pt,black] (f3)at(14.8,-4.5){\color{black}$\{2\},\{5\},\{1,3,4,6\}$};
\node[] (k)at(17.3,-4.5){\huge \color{black}...};

\Large
\node[] (k)at(-5.5,-6){\huge \color{black}...};
\node[draw,rectangle,rounded corners=3pt,black] (a4)at(-3.0,-6){\color{black}$\{1,2,3\},\{4,5,6\}$};
\node[draw,rectangle,rounded corners=3pt,black] (b4)at(0.5,-6){\color{black}$\{4\},\{1,2,3,5,6\}$};
\node[draw,rectangle,rounded corners=3pt,black] (c4)at(4.0,-6){\color{black}$\{1,2,6\},\{3,4,5\}$};
\node[draw,rectangle,rounded corners=3pt,black] (d4)at(7.5,-6){\color{black}$\{6\},\{1,2,3,4,5\}$};
\node[draw,rectangle,rounded corners=3pt,red] (e4)at(11.0,-6){\color{black}$\{2,5\},\{1,3,4,6\}$};
\node[draw,rectangle,rounded corners=3pt,black] (f4)at(14.5,-6){\color{black}$\{1,3,4\},\{2,5,6\}$};
\node[] (k)at(17.0,-6){\huge \color{black}...};

\Large
\node[draw,rectangle,rounded corners=3pt,black] (a5)at(7.5,-7.5){\color{black}$\{1,2,3,4,5,6\}$};

\draw[->,>=latex,blue!90] (e4) -- (e3);
\draw[->,>=latex,blue!90] (e3) -- (e2);
\draw[->,>=latex,blue!90] (e2) -- (e1);

\draw[-,>=latex,black!30,dashed] (e4) -- (a5);

\draw[-,>=latex,black!30,dashed] (a5) -- (a4.south);
\draw[-,>=latex,black!30,dashed] (a5) -- (b4);
\draw[-,>=latex,black!30,dashed] (a5) -- (d4);
\draw[-,>=latex,black!30,dashed] (a5) -- (f4);

\draw[-,>=latex,black!30,dashed] (b4) -- (d3);
\draw[-,>=latex,black!30,dashed] (d4) -- (b3);
\draw[-,>=latex,black!30,dashed] (d4) -- (e3);
\draw[-,>=latex,black!30,dashed] (e4) -- (f3);
\draw[-,>=latex,black!30,dashed] (f4) -- (e3);

\draw[-,>=latex,black!30,dashed] (a3) -- (a2);
\draw[-,>=latex,black!30,dashed] (b3) -- (b2);
\draw[-,>=latex,black!30,dashed] (b3) -- (c2);
\draw[-,>=latex,black!30,dashed] (c3) -- (c2);
\draw[-,>=latex,black!30,dashed] (e3) -- (d2);

\draw[-,>=latex,black!30,dashed] (a2) -- (d1);
\draw[-,>=latex,black!30,dashed] (b2) -- (c1);
\draw[-,>=latex,black!30,dashed] (c2) -- (c1);
\draw[-,>=latex,black!30,dashed] (d1) -- (c2);
\draw[-,>=latex,black!30,dashed] (d2) -- (d1);
\draw[-,>=latex,black!30,dashed] (e1) -- (d2);

\draw[-,>=latex,black!30,dashed] (c1) -- (a0);
\draw[-,>=latex,black!30,dashed] (d1) -- (a0);

\draw[->,>=latex,blue!90] (e1) -- (a0);
\draw[->,>=latex,blue!90] (a0) -- (b1);
\draw[->,>=latex,blue!90] (b1) -- (c2);

\draw[-,>=latex,black!30,dashed] (e4) -- (a5);
\draw[-,>=latex,black!30,dashed] (a5) -- (c4);
\draw[-,>=latex,black!30,dashed] (c4) -- (d3);
\draw[-,>=latex,black!30,dashed] (d3) -- (c2);


            \end{tikzpicture}%
        }
\caption{An illustration of the SPLIT-THEN-MERGE strategy with a partial graph of 6 agents. The last best solution is the red node and the new best one is the green node. The nodes of the path between the two best solutions that visit the top node are represented by the blue nodes. } \label{splitthenmerge}
\end{figure}
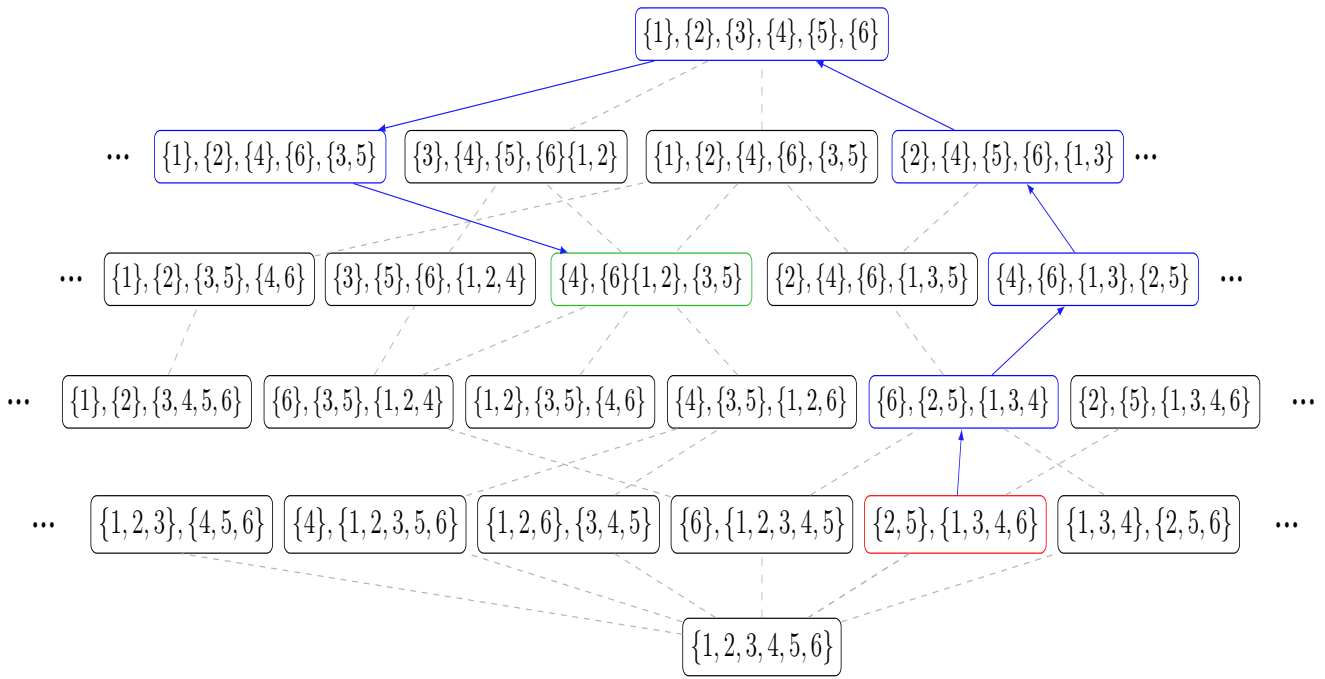

\begin{figure}[h!]
\centering
\normalsize
\resizebox{0.99\columnwidth}{8.9cm}{%
\begin{tikzpicture}
\large
\node[draw,rectangle,rounded corners=3pt,black] (a0)at(7.5,0){\color{black}$\{1\},\{2\},\{3\},\{4\},\{5\},\{6\}$};

\large
\node[] (k)at(-4.15,-1.5){\huge \color{black}...};
\node[draw,rectangle,rounded corners=3pt,black] (b1)at(-1.4,-1.5){\color{black}$\{1\},\{2\},\{4\},\{6\},\{3,5\}$};
\node[draw,rectangle,rounded corners=3pt,black] (c1)at(3.05,-1.5){\color{black}$\{3\},\{4\},\{5\},\{6\}\{1,2\}$};
\node[draw,rectangle,rounded corners=3pt,black] (d1)at(7.5,-1.5){\color{black}$\{1\},\{2\},\{4\},\{6\},\{3,5\}$};
\node[draw,rectangle,rounded corners=3pt,black] (e1)at(11.95,-1.5){\color{black}$\{2\},\{4\},\{5\},\{6\},\{1,3\}$};
\node[] (k)at(14.45,-1.5){\huge \color{black}...};

\large
\node[] (k)at(-5.0,-3){\huge \color{black}...};
\node[draw,rectangle,rounded corners=3pt,black] (a2)at(-2.5,-3){\color{black}$\{1\},\{2\},\{3,5\},\{4,6\}$};
\node[draw,rectangle,rounded corners=3pt,black] (b2)at(1.5,-3){\color{black}$\{3\},\{5\},\{6\},\{1,2,4\}$};
\node[draw,rectangle,rounded corners=3pt,vert] (c2)at(5.5,-3){\color{black}$\{4\},\{6\}\{1,2\},\{3,5\}$};
\node[draw,rectangle,rounded corners=3pt,black] (d2)at(9.5,-3){\color{black}$\{2\},\{4\},\{6\},\{1,3,5\}$};
\node[draw,rectangle,rounded corners=3pt,black] (e2)at(13.5,-3){\color{black}$\{4\},\{6\},\{1,3\},\{2,5\}$};
\node[] (k)at(16.0,-3){\huge \color{black}...};

\large
\node[] (k)at(-5.95,-4.5){\huge \color{black}...};
\node[draw,rectangle,rounded corners=3pt,black] (a3)at(-3.45,-4.5){\color{black}$\{1\},\{2\},\{3,4,5,6\}$};
\node[draw,rectangle,rounded corners=3pt,black] (b3)at(0.2,-4.5){\color{black}$\{6\},\{3,5\},\{1,2,4\}$};
\node[draw,rectangle,rounded corners=3pt,black] (c3)at(3.85,-4.5){\color{black}$\{1,2\},\{3,5\},\{4,6\}$};
\node[draw,rectangle,rounded corners=3pt,blue] (d3)at(7.5,-4.5){\color{black}$\{4\},\{3,5\},\{1,2,6\}$};
\node[draw,rectangle,rounded corners=3pt,black] (e3)at(11.15,-4.5){\color{black}$\{6\},\{2,5\},\{1,3,4\}$};
\node[draw,rectangle,rounded corners=3pt,black] (f3)at(14.8,-4.5){\color{black}$\{2\},\{5\},\{1,3,4,6\}$};
\node[] (k)at(17.3,-4.5){\huge \color{black}...};

\Large
\node[] (k)at(-5.5,-6){\huge \color{black}...};
\node[draw,rectangle,rounded corners=3pt,black] (a4)at(-3.0,-6){\color{black}$\{1,2,3\},\{4,5,6\}$};
\node[draw,rectangle,rounded corners=3pt,black] (b4)at(0.5,-6){\color{black}$\{4\},\{1,2,3,5,6\}$};
\node[draw,rectangle,rounded corners=3pt,blue] (c4)at(4.0,-6){\color{black}$\{1,2,6\},\{3,4,5\}$};
\node[draw,rectangle,rounded corners=3pt,black] (d4)at(7.5,-6){\color{black}$\{6\},\{1,2,3,4,5\}$};
\node[draw,rectangle,rounded corners=3pt,red] (e4)at(11.0,-6){\color{black}$\{2,5\},\{1,3,4,6\}$};
\node[draw,rectangle,rounded corners=3pt,black] (f4)at(14.5,-6){\color{black}$\{1,3,4\},\{2,5,6\}$};
\node[] (k)at(17.0,-6){\huge \color{black}...};

\Large
\node[draw,rectangle,rounded corners=3pt,blue] (a5)at(7.5,-7.5){\color{black}$\{1,2,3,4,5,6\}$};

\draw[-,>=latex,black!30,dashed] (e4) -- (e3);
\draw[-,>=latex,black!30,dashed] (e3) -- (e2);
\draw[-,>=latex,black!30,dashed] (e2) -- (e1);

\draw[-,>=latex,black!30,dashed] (e4) -- (a5);

\draw[-,>=latex,black!30,dashed] (a5) -- (a4.south);
\draw[-,>=latex,black!30,dashed] (a5) -- (b4);
\draw[-,>=latex,black!30,dashed] (a5) -- (d4);
\draw[-,>=latex,black!30,dashed] (a5) -- (f4);

\draw[-,>=latex,black!30,dashed] (b4) -- (d3);
\draw[-,>=latex,black!30,dashed] (d4) -- (b3);
\draw[-,>=latex,black!30,dashed] (d4) -- (e3);
\draw[-,>=latex,black!30,dashed] (e4) -- (f3);
\draw[-,>=latex,black!30,dashed] (f4) -- (e3);

\draw[-,>=latex,black!30,dashed] (a3) -- (a2);
\draw[-,>=latex,black!30,dashed] (b3) -- (b2);
\draw[-,>=latex,black!30,dashed] (b3) -- (c2);
\draw[-,>=latex,black!30,dashed] (c3) -- (c2);
\draw[-,>=latex,black!30,dashed] (e3) -- (d2);

\draw[-,>=latex,black!30,dashed] (a2) -- (d1);
\draw[-,>=latex,black!30,dashed] (b2) -- (c1);
\draw[-,>=latex,black!30,dashed] (c2) -- (c1);
\draw[-,>=latex,black!30,dashed] (d1) -- (c2);
\draw[-,>=latex,black!30,dashed] (d2) -- (d1);
\draw[-,>=latex,black!30,dashed] (e1) -- (d2);

\draw[-,>=latex,black!30,dashed] (c1) -- (a0);
\draw[-,>=latex,black!30,dashed] (d1) -- (a0);

\draw[-,>=latex,black!30,dashed] (e1) -- (a0);
\draw[-,>=latex,black!30,dashed] (a0) -- (b1);
\draw[-,>=latex,black!30,dashed] (b1) -- (c2);

\draw[->,>=latex,blue!90] (e4) -- (a5);
\draw[->,>=latex,blue!90] (a5) -- (c4);
\draw[->,>=latex,blue!90] (c4) -- (d3);
\draw[->,>=latex,blue!90] (d3) -- (c2);


            \end{tikzpicture}%
        }
\caption{An illustration of the MERGE-THEN-SPLIT strategy with a partial graph of 6 agents. The last best solution is the red node and the new best one is the green node. The nodes of the path between the two best solutions that visit the bottom node are represented by the blue nodes. } \label{mergethensplit}
\end{figure}

\begin{figure}[h!]
\centering
\normalsize
\resizebox{0.99\columnwidth}{8.9cm}{%
\begin{tikzpicture}
\large
\node[draw,rectangle,rounded corners=3pt,black] (a0)at(7.5,0){\color{black}$\{1\},\{2\},\{3\},\{4\},\{5\},\{6\}$};

\large
\node[] (k)at(-4.15,-1.5){\huge \color{black}...};
\node[draw,rectangle,rounded corners=3pt,black] (b1)at(-1.4,-1.5){\color{black}$\{1\},\{2\},\{4\},\{6\},\{3,5\}$};
\node[draw,rectangle,rounded corners=3pt,black] (c1)at(3.05,-1.5){\color{black}$\{3\},\{4\},\{5\},\{6\}\{1,2\}$};
\node[draw,rectangle,rounded corners=3pt,black] (d1)at(7.5,-1.5){\color{black}$\{1\},\{2\},\{4\},\{6\},\{3,5\}$};
\node[draw,rectangle,rounded corners=3pt,black] (e1)at(11.95,-1.5){\color{black}$\{2\},\{4\},\{5\},\{6\},\{1,3\}$};
\node[] (k)at(14.45,-1.5){\huge \color{black}...};

\large
\node[] (k)at(-5.0,-3){\huge \color{black}...};
\node[draw,rectangle,rounded corners=3pt,black] (a2)at(-2.5,-3){\color{black}$\{1\},\{2\},\{3,5\},\{4,6\}$};
\node[draw,rectangle,rounded corners=3pt,black] (b2)at(1.5,-3){\color{black}$\{3\},\{5\},\{6\},\{1,2,4\}$};
\node[draw,rectangle,rounded corners=3pt,vert] (c2)at(5.5,-3){\color{black}$\{4\},\{6\}\{1,2\},\{3,5\}$};
\node[draw,rectangle,rounded corners=3pt,black] (d2)at(9.5,-3){\color{black}$\{2\},\{4\},\{6\},\{1,3,5\}$};
\node[draw,rectangle,rounded corners=3pt,black] (e2)at(13.5,-3){\color{black}$\{4\},\{6\},\{1,3\},\{2,5\}$};
\node[] (k)at(16.0,-3){\huge \color{black}...};

\large
\node[] (k)at(-5.95,-4.5){\huge \color{black}...};
\node[draw,rectangle,rounded corners=3pt,black] (a3)at(-3.45,-4.5){\color{black}$\{1\},\{2\},\{3,4,5,6\}$};
\node[draw,rectangle,rounded corners=3pt,black] (b3)at(0.2,-4.5){\color{black}$\{6\},\{3,5\},\{1,2,4\}$};
\node[draw,rectangle,rounded corners=3pt,black] (c3)at(3.85,-4.5){\color{black}$\{1,2\},\{3,5\},\{4,6\}$};
\node[draw,rectangle,rounded corners=3pt,black] (d3)at(7.5,-4.5){\color{black}$\{4\},\{3,5\},\{1,2,6\}$};
\node[draw,rectangle,rounded corners=3pt,black] (e3)at(11.15,-4.5){\color{black}$\{6\},\{2,5\},\{1,3,4\}$};
\node[draw,rectangle,rounded corners=3pt,black] (f3)at(14.8,-4.5){\color{black}$\{2\},\{5\},\{1,3,4,6\}$};
\node[] (k)at(17.3,-4.5){\huge \color{black}...};

\Large
\node[] (k)at(-5.5,-6){\huge \color{black}...};
\node[draw,rectangle,rounded corners=3pt,black] (a4)at(-3.0,-6){\color{black}$\{1,2,3\},\{4,5,6\}$};
\node[draw,rectangle,rounded corners=3pt,black] (b4)at(0.5,-6){\color{black}$\{4\},\{1,2,3,5,6\}$};
\node[draw,rectangle,rounded corners=3pt,black] (c4)at(4.0,-6){\color{black}$\{1,2,6\},\{3,4,5\}$};
\node[draw,rectangle,rounded corners=3pt,black] (d4)at(7.5,-6){\color{black}$\{6\},\{1,2,3,4,5\}$};
\node[draw,rectangle,rounded corners=3pt,red] (e4)at(11.0,-6){\color{black}$\{2,5\},\{1,3,4,6\}$};
\node[draw,rectangle,rounded corners=3pt,black] (f4)at(14.5,-6){\color{black}$\{1,3,4\},\{2,5,6\}$};
\node[] (k)at(17.0,-6){\huge \color{black}...};

\Large
\node[draw,rectangle,rounded corners=3pt,black] (a5)at(7.5,-7.5){\color{black}$\{1,2,3,4,5,6\}$};

\draw[->,>=latex,blue!90] (e4) -- (e3);
\draw[->,>=latex,blue!90] (e3) -- (e2);
\draw[->,>=latex,blue!90] (e2) -- (e1);

\draw[-,>=latex,black!30,dashed] (e4) -- (a5);

\draw[-,>=latex,black!30,dashed] (a5) -- (a4.south);
\draw[-,>=latex,black!30,dashed] (a5) -- (b4);
\draw[-,>=latex,black!30,dashed] (a5) -- (d4);
\draw[-,>=latex,black!30,dashed] (a5) -- (f4);

\draw[-,>=latex,black!30,dashed] (b4) -- (d3);
\draw[-,>=latex,black!30,dashed] (d4) -- (b3);
\draw[-,>=latex,black!30,dashed] (d4) -- (e3);
\draw[-,>=latex,black!30,dashed] (e4) -- (f3);
\draw[-,>=latex,black!30,dashed] (f4) -- (e3);

\draw[-,>=latex,black!30,dashed] (a3) -- (a2);
\draw[-,>=latex,black!30,dashed] (b3) -- (b2);
\draw[-,>=latex,black!30,dashed] (b3) -- (c2);
\draw[-,>=latex,black!30,dashed] (c3) -- (c2);
\draw[-,>=latex,black!30,dashed] (e3) -- (d2);

\draw[-,>=latex,black!30,dashed] (a2) -- (d1);
\draw[-,>=latex,black!30,dashed] (b2) -- (c1);
\draw[-,>=latex,black!30,dashed] (c2) -- (c1);
\draw[->,>=latex,blue!90] (d1) -- (c2);
\draw[->,>=latex,blue!90] (d2) -- (d1);
\draw[->,>=latex,blue!90] (e1) -- (d2);

\draw[-,>=latex,black!30,dashed] (c1) -- (a0);
\draw[-,>=latex,black!30,dashed] (d1) -- (a0);

\draw[-,>=latex,black!30,dashed] (e1) -- (a0);
\draw[-,>=latex,black!30,dashed] (a0) -- (b1);
\draw[-,>=latex,black!30,dashed] (b1) -- (c2);

\draw[-,>=latex,black!30,dashed] (e4) -- (a5);
\draw[-,>=latex,black!30,dashed] (a5) -- (c4);
\draw[-,>=latex,black!30,dashed] (c4) -- (d3);
\draw[-,>=latex,black!30,dashed] (d3) -- (c2);


            \end{tikzpicture}%
        }
\caption{An illustration of the APPROACH-THEN-SWAP strategy with a partial graph of 6 agents. The last best solution is the red node and the new best one is the green node. The nodes of the path between the two best solutions are represented by the blue nodes. } \label{approachthenswap}
\end{figure}

\section{Hyperparameter Selection}

In order to optimize the performance of our algorithm, various hyperparameter values and strategies were tested and evaluated. We used four different methods for bridging paths and connecting solutions: SPLIT-THEN-MERGE, MERGE-THEN-SPLIT, APPROACH-THEN-SWAP, and a combination of all three. Additionally, we tested two strategies for resolving conflicts in multiagent search: bypassing conflicts and managing conflicts. For selecting child nodes, three strategies were considered: quantity-based selection, value-based selection, and random selection. Though generating child nodes randomly may seem simplistic, it has been shown to be an effective approach for many problems. Therefore, we implemented a random child node method that selects $\mathcal{N}_{a}$ child nodes and adds them to OPEN or RESERVE.

We also evaluated various values for the hyperparameters $\theta$ and $\omega$, including $\frac{n}{2}, n, 2n$, and $n^{2}$ for $\theta$, and 90\%, 99\%, 99.5\%, and 99.9\% for $\omega$. After testing multiple combinations of these strategies and hyperparameters, the best settings for the algorithm were determined to be: SPLIT-THEN-MERGE for bridging paths, managing conflicts for conflict resolution, value-based selection for selecting child nodes, $\theta = 2n$, and $\omega = 99.5\%$. These settings yielded the best results and optimal solutions more frequently.

As we can see in Table \ref{successratestrategies}, SALDAE with MERGE-THEN-SPLIT is able to find the optimal solution more often than with the other strategies. We conducted additional experiments varying the values of both $\theta$ and $\omega$ to find better results for these strategies. However, the behavior of the SPLIT-THEN-MERGE remained superior to the other strategies.

\begin{table*}[t]
\begin{center}
\renewcommand{\arraystretch}{1.5}
\small
\begin{tabular}{c|c|c|c}
\hline
\textbf{Distribution} &  \textbf{SPLIT-THEN-MERGE} & \textbf{MERGE-THEN-SPLIT} & \textbf{APPROACH-THEN-SWAP}\\
\hline
\textbf{Uniform} & 1580 (79\%) & 1402 (70.1\%) & 1509 (75.5\%) \\
\textbf{A-b Normal} & 582 (29.1\%) & 503 (25.2\%) & 561 (28.1\%) \\
\textbf{A-b Uniform} & 551 (27.5\%) & 532 (26.6\%) & 554 (27.7\%) \\
\textbf{Normal} & 519 (26\%) & 531 (26.6\%) & 521 (26.1\%) \\
\textbf{Beta} & 1485 (74.3\%) & 1392 (69.6\%) & 1462 (73.1\%) \\
\textbf{Exponential} & 1826 (91.3\%) & 1741 (87.1\%) & 1811 (90.6\%) \\
\textbf{Gamma} & 1628 (81.4\%) & 1623 (81.2\%) & 1673 (83.7\%) \\
\textbf{Pascal} & 2000 (100\%) & 2000 (100\%) & 2000 (100\%) \\
\textbf{Zipf} & 1286 (64.3\%) & 1263 (63.2\%) & 1287 (64.4\%) \\
\textbf{Cauchy} & 1714 (85.7\%) & 1722 (86.1\%) & 1710 (85.5\%) \\
\hline
\end{tabular}
\normalsize
\end{center}
\caption{Number of successes of the SALDAE algorithm on 2000 executions per distribution, when usinf the SPLIT-THEN-MERGE, MERGE-THEN-SPLIT, and APPROACH-THEN-SWAP strategies.}\label{successratestrategies}
\end{table*}


\section{Additional Results}

Table \ref{successrate} shows the number of times the algorithms obtain the optimal solution on 2000 executions per distribution. We ran the PICS and CSG-UCT algorithms to termination, and we stopped SALDAE at the time when ODP-IP found the optimal solution. The results show that SALDAE obtains the optimal solution in an average of $65.86\%$, whereas PICS and CSG-UCT find the optimal solution in an average of $17.36\%$ and $6.48\%$, respectively. For the Pascal distribution, SALDAE always obtains the optimal solution. For the Uniform, Normal, Agent-based Normal, and Agent-based Uniform, the solution quality is above 99\% when the optimal solution is not found. For the Beta distribution, the solution quality is above 99.9\% when the optimal solution is not found.

Figure \ref{solutionQuality1} shows additional results of SALDAE, PICS and CSG-UCT on other value distributions Beta, Agent-based Normal, Uniform, Agent-based Uniform, and Pascal.

\begin{figure}[h!]
\begin{center}
\small
\resizebox{0.32\columnwidth}{5.5cm}{%
     \begin{subfigure}{0.28\textwidth}
         \centering
         \begin{tikzpicture}
	\begin{axis}[	grid= major ,
            title=Beta ,
			width=0.8\textwidth ,
            xlabel = {Number of agents} ,
			ylabel = {Solution quality (\%)} ,
            width=5cm,height=5cm,
            xtick={4,6,8,10,12,14,16,18,20},
            xticklabels={4,6,8,10,12,14,16,18,20},
            ytick={70,80,90,95,99,100},
            yticklabels={70,80,90,95,99,100},
            ymin=90,
            ymax=102,
            label style={font=\large},
			title style={font=\Large},
			tick label style={font=\footnotesize},
            legend entries={SALDAE, CSG-UCT, PICS},
			legend style={at={(0,0.65)},anchor=north west,opacity=0.6,text opacity = 1}]
			\addplot+[only marks,color = blue,mark=square*,mark options={fill=cyan}] coordinates {(4,100) (5,100) (6,100) (7,100) (8,100) (9,100) (10,100) (11,100) (12,100) (13,100) (14,100) (15,100) (16,100) (17,99.97) (18,99.96) (19,99.99) (20,99.98) };//
			\addplot+[only marks,color = vert,mark=*,mark options={fill=vert}] coordinates {(4,100) (5,100) (6,100) (7,100) (8,100) (9,100) (10,100) (11,100) (12,100) (13,100) (14,99.9) (15,99.7) (16,99.2) (17,99.17) (18,98.76) (19,98.59) (20,98.68) };
            \addplot+[only marks,color = red,mark=triangle*,mark options={fill=red},ultra thin] coordinates {(4,100) (5,100) (6,100) (7,100) (8,100) (9,100) (10,100) (11,100) (12,100) (13,100) (14,100) (15,100) (16,99.9) (17,99.77) (18,99.76) (19,99.59) (20,99.68) };
	\end{axis}
    \end{tikzpicture}\\
         \label{fig:three sin x}
     \end{subfigure}
     }
\resizebox{0.32\columnwidth}{5.5cm}{%
     \begin{subfigure}{0.28\textwidth}
         \centering
         \begin{tikzpicture}
	\begin{axis}[	grid= major ,
            title=Agent-based Normal,
			width=0.6\textwidth ,
            xlabel = {Number of agents} ,
			ylabel = {Solution quality (\%)} ,
            width=5cm,height=5cm,
            xtick={4,6,8,10,12,14,16,18,20},
            xticklabels={4,6,8,10,12,14,16,18,20},
            ytick={70,80,90,95,99,100},
            yticklabels={70,80,90,95,99,100},
            ymin=94,
            ymax=101,
            label style={font=\large},
			title style={font=\Large},
			tick label style={font=\footnotesize},
            legend entries={SALDAE, CSG-UCT, PICS},
			legend style={at={(0,0.65)},anchor=north west,opacity=0.6,text opacity = 1}]
			\addplot+[only marks,color = blue,mark=square*,mark options={fill=cyan}] coordinates {(4,100) (5,100) (6,100) (7,100) (8,100) (9,100) (10,100) (11,100) (12,100) (13,100) (14,99.91) (15,99.97) (16,99.73) (17,99.47) (18,99.56) (19,99.44) (20,99.28) };//
			\addplot+[only marks,color = vert,mark=*,mark options={fill=vert}] coordinates {(4,100) (5,100) (6,100) (7,100) (8,100) (9,100) (10,100) (11,100) (12,100) (13,99.93) (14,99.94) (15,99.89) (16,99.66) (17,99.39) (18,99.46) (19,99.34) (20,99.18) };
            \addplot+[only marks,color = red,mark=triangle*,mark options={fill=red},ultra thin] coordinates {(4,100) (5,100) (6,100) (7,100) (8,100) (9,100) (10,100) (11,100) (12,100) (13,99.97) (14,99.92) (15,99.91) (16,99.63) (17,99.49) (18,99.46) (19,99.34) (20,99.38) };
	\end{axis}
    \end{tikzpicture}\\
         \label{fig:y equals x}
     \end{subfigure}
     }
     \resizebox{0.32\columnwidth}{5.5cm}{%
     \begin{subfigure}{0.28\textwidth}
         \centering
         \begin{tikzpicture}
	\begin{axis}[	grid= major ,
            title=Uniform ,
			width=0.6\textwidth ,
            xlabel = {Number of agents} ,
			ylabel = {Solution quality (\%)} ,
            width=5cm,height=5cm,
            xtick={4,6,8,10,12,14,16,18,20},
            xticklabels={4,6,8,10,12,14,16,18,20},
            ytick={70,80,90,95,99,100},
            yticklabels={70,80,90,95,99,100},
            ymin=90,
            ymax=102,
            label style={font=\large},
			title style={font=\Large},
			tick label style={font=\footnotesize},
            legend entries={SALDAE, CSG-UCT, PICS},
			legend style={at={(0,0.65)},anchor=north west,opacity=0.6,text opacity = 1}]
			\addplot+[only marks,color = blue,mark=square*,mark options={fill=cyan}] coordinates {(4,100) (5,100) (6,100) (7,100) (8,100) (9,100) (10,100) (11,100) (12,100) (13,100) (14,99.91) (15,99.98) (16,99.73) (17,99.47) (18,99.66) (19,99.64) (20,99.78) };//
			\addplot+[only marks,color = vert,mark=*,mark options={fill=vert}] coordinates {(4,100) (5,100) (6,100) (7,100) (8,100) (9,100) (10,99.8) (11,99.4) (12,99.1) (13,98.84) (14,98.61) (15,98.28) (16,98.43) (17,98.17) (18,98.06) (19,98.07) (20,98.03) };
            \addplot+[only marks,color = red,mark=triangle*,mark options={fill=red},ultra thin] coordinates {(4,100) (5,100) (6,100) (7,100) (8,100) (9,100) (10,100) (11,99.9) (12,99.8) (13,99.54) (14,99.11) (15,99.28) (16,99.13) (17,99.17) (18,99.06) (19,99.07) (20,99.03) };
	\end{axis}
    \end{tikzpicture}\\
         \label{fig:three sin x}
     \end{subfigure}
     }
     \resizebox{0.32\columnwidth}{5.5cm}{%
     \begin{subfigure}{0.25\textwidth}
         \centering
         \begin{tikzpicture}
	\begin{axis}[	grid= major ,
            title=(c) Pascal ,
			width=0.6\textwidth ,
            xlabel = {Number of agents} ,
			ylabel = {Solution quality (\%)} ,
            width=5cm,height=5cm,
            xtick={4,6,8,10,12,14,16,18,20},
            xticklabels={4,6,8,10,12,14,16,18,20},
            ytick={70,80,90,95,99,100},
            yticklabels={70,80,90,95,99,100},
            ymin=80,
            ymax=102,
            label style={font=\Large},
			title style={font=\Large},
			tick label style={font=\footnotesize},
            legend entries={SALDAE, CSG-UCT, PICS},
			legend style={at={(0,0.65)},anchor=north west,opacity=0.6,text opacity = 1}]
			\addplot+[only marks,color = blue,mark=square*,mark options={fill=cyan}] coordinates {(4,100) (5,100) (6,100) (7,100) (8,100) (9,100) (10,100) (11,100) (12,100) (13,100) (14,100) (15,100) (16,100) (17,100) (18,100) (19,100) (20,100) };//
			\addplot+[only marks,color = vert,mark=*,mark options={fill=vert}] coordinates {(4,100) (5,100) (6,100) (7,100) (8,96) (9,97) (10,94) (11,92) (12,88) (13,89) (14,86) (15,82.7) (16,83) (17,84) (18,82) (19,80.5) (20,81) };
            \addplot+[only marks,color = red,mark=triangle*,mark options={fill=red},ultra thin] coordinates {(4,100) (5,100) (6,100) (7,100) (8,100) (9,100) (10,100) (11,100) (12,100) (13,100) (14,100) (15,100) (16,100) (17,100) (18,100) (19,100) (20,100) };
	\end{axis}
    \end{tikzpicture}\\
         \label{fig:three sin x}
     \end{subfigure}
}
\:\:\:\:\:\:\:\:
\resizebox{0.32\columnwidth}{5.5cm}{%
     \begin{subfigure}{0.25\textwidth}
         \centering
         \begin{tikzpicture}
	\begin{axis}[	grid= major ,
            title=(d) Agent-based Uniform ,
			width=0.8\textwidth ,
            xlabel = {Number of agents} ,
			ylabel = {Solution quality (\%)} ,
            width=5cm,height=5cm,
            xtick={4,6,8,10,12,14,16,18,20},
            xticklabels={4,6,8,10,12,14,16,18,20},
            ytick={70,80,90,95,99,100},
            yticklabels={70,80,90,95,99,100},
            ymin=90,
            ymax=102,
            label style={font=\Large},
			title style={font=\Large},
			tick label style={font=\footnotesize},
            legend entries={SALDAE, CSG-UCT, PICS},
			legend style={at={(0,0.65)},anchor=north west,opacity=0.6,text opacity = 1}]
			\addplot+[only marks,color = blue,mark=square*,mark options={fill=cyan}] coordinates {(4,100) (5,100) (6,100) (7,100) (8,100) (9,100) (10,100) (11,100) (12,100) (13,100) (14,99.81) (15,99.27) (16,98.55) (17,97.87) (18,97.96) 
			(19,97.3) (20,97.13) };//
			\addplot+[only marks,color = vert,mark=*,mark options={fill=vert}] coordinates {(4,100) (5,100) (6,99.9) (7,99.8) (8,99.2) (9,98.7) (10,98.2) (11,96.9) (12,96.2) (13,95.6) (14,95.11) (15,94.27) (16,94.23) (17,93.77) (18,92.46) (19,92.54) (20,92.31) };
            \addplot+[only marks,color = red,mark=triangle*,mark options={fill=red},ultra thin] coordinates {(4,100) (5,100) (6,100) (7,100) (8,99.7) (9,99.4) (10,99.2) (11,98.9) (12,98.8) (13,98.2) (14,97.81) (15,97.27) (16,96.23) (17,96.77) (18,96.46) (19,96.54) (20,96.31) };
	\end{axis}
    \end{tikzpicture}\\
         \label{fig:three sin x}
     \end{subfigure}
     }

             \caption{Solution quality of SALDAE, PICS and CSG-UCT for sets of agents between 4 and 20.}
        \label{solutionQuality1}
\normalsize
\end{center}
\end{figure}
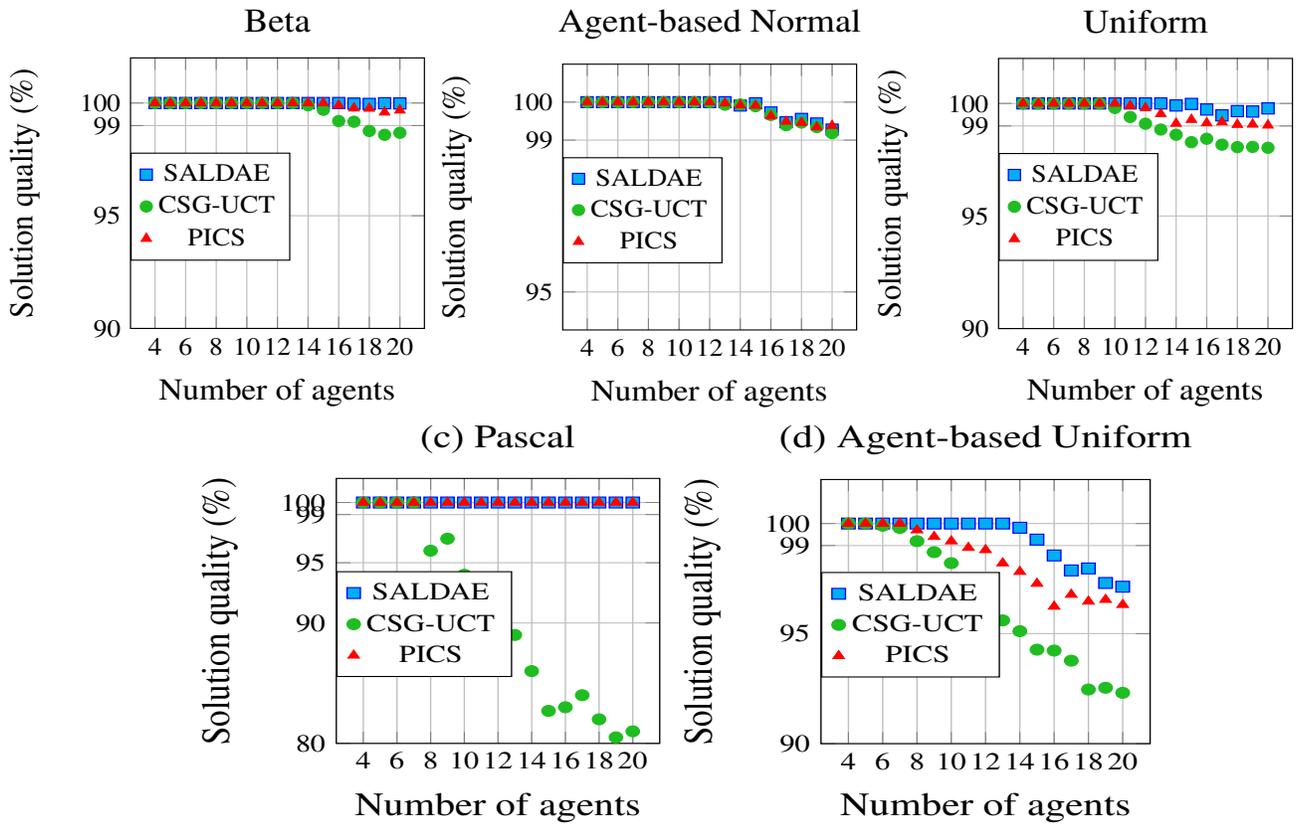

Figure \ref{largeScaleResults} shows additional results of SALDAE, PICS and CSG-UCT on other value distributions F, Uniform, Zipf, Exponential, and Normal.

\begin{figure}[t]
\begin{center}
\normalsize
     \resizebox{0.32\columnwidth}{5.5cm}{%
     \begin{subfigure}{0.33\textwidth}
         \centering
         \begin{tikzpicture}
	\begin{axis}[	grid= major ,
            title= F ,
			width=0.8\textwidth ,
            xlabel = {Number of agents} ,
			ylabel = {Gain Rate ($\%$)} ,
            width=6cm,height=5cm,
			ytick={50,80,90,100},
            yticklabels={50,80,90,100},
            xtick={1,2,3,4,5,6},
            xticklabels={30,50,100,500,1000,2000},
            label style={font=\large},
			title style={font=\Large},
			tick label style={font=\small},
            legend entries={SALDAE, PICS, FACS, CSG-UCT},
            legend style={at={(0,0.55)},anchor=north west,opacity=0.6,text opacity = 1}]
            \addplot[color = vert,mark=square*,mark options={fill=vert}] coordinates {(1,100) (2,99) (3,100) (4,100) (5,100) (6,100) };
            \addplot[color = blue,mark=*,mark options={fill=cyan}] coordinates {(1,99) (2,100) (3,99) (4,98) (5,99) (6,100) };
            \addplot[color = orange,mark=triangle*,mark options={fill=orange}] coordinates {(1,97) (2,98) (3,95) (4,94) (5,95) (6,93.5) 
            };
            \addplot[mark=pentagon*,mark options={fill=red},color=red] coordinates {(1,84) (2,79) (3,81) (4,71) (5,61) (6,62)
            };
	\end{axis}
    \end{tikzpicture}\\
         \label{fig:three sin x}
     \end{subfigure}
}
\resizebox{0.32\columnwidth}{5.5cm}{%
     \begin{subfigure}{0.33\textwidth}
         \centering
         \begin{tikzpicture}
	\begin{axis}[	grid= major ,
            title= Uniform ,
			width=0.6\textwidth ,
            xlabel = {Number of agents} ,
			ylabel = {Gain Rate ($\%$)} ,
            width=6cm,height=5cm,
			ytick={50,90,100},
            yticklabels={50,90,100},
            xtick={1,2,3,4,5,6},
            xticklabels={30,50,100,500,1000,2000},
            label style={font=\large},
			title style={font=\Large},
			tick label style={font=\small},
            legend entries={SALDAE, PICS, FACS, CSG-UCT},
            legend style={at={(0,0.55)},anchor=north west,opacity=0.4,text opacity = 1}]
            \addplot[color = vert,mark=square*,mark options={fill=vert}] coordinates {(1,100) (2,100) (3,100) (4,100) (5,100) (6,100) };
            \addplot[color = blue,mark=*,mark options={fill=cyan}] coordinates {(1,99.97) (2,100) (3,100) (4,100) (5,100) (6,100) };
            \addplot[color = orange,mark=triangle*,mark options={fill=orange}] coordinates {(1,100) (2,99.9) (3,99.7) (4,97.8) (5,97.1) (6,95.3) 
            };
            \addplot[mark=pentagon*,mark options={fill=red},color=red] coordinates {(1,100) (2,99.86) (3,99.4) (4,51.36) (5,51.08) (6,51.81) };
	\end{axis}
    \end{tikzpicture}\\
         \label{fig:three sin x}
     \end{subfigure}
     }
\resizebox{0.32\columnwidth}{5.5cm}{%
     \begin{subfigure}{0.33\textwidth}
         \centering
         \begin{tikzpicture}
	\begin{axis}[	grid= major ,
            title= Zipf ,
			width=0.8\textwidth ,
            xlabel = {Number of agents} ,
			ylabel = {Gain Rate ($\%$)} ,
            width=6cm,height=5cm,
			ytick={50,80,90,100},
            yticklabels={50,80,90,100},
            xtick={1,2,3,4,5,6},
            xticklabels={30,50,100,500,1000,2000},
            label style={font=\large},
			title style={font=\Large},
			tick label style={font=\small},
            legend entries={SALDAE, PICS, FACS, CSG-UCT},
            legend style={at={(0,0.55)},anchor=north west,opacity=0.4,text opacity = 1}]
            \addplot[color = vert,mark=square*,mark options={fill=vert}] coordinates {(1,99.89) (2,100) (3,100) (4,100) (5,100) (6,100) };
            \addplot[color = blue,mark=*,mark options={fill=cyan}] coordinates {(1,99) (2,99) (3,98) (4,98) (5,97) (6,99.6) 
            };
            \addplot[color = orange,mark=triangle*,mark options={fill=orange}] coordinates {(1,98.6) (2,98.7) (3,97) (4,96.7) (5,95) (6,97.6) 
            };
            \addplot[mark=pentagon*,mark options={fill=red},color=red] coordinates {(1,95) (2,91) (3,88) (4,75) (5,71) (6,73)
            };
	\end{axis}
    \end{tikzpicture}\\
         \label{fig:three sin x}
     \end{subfigure}
}
\resizebox{0.32\columnwidth}{5.5cm}{%
     \begin{subfigure}{0.295\textwidth}
         \centering
         \begin{tikzpicture}
	\begin{axis}[	grid= major ,
            title=(c) Exponential ,
			width=0.8\textwidth ,
            xlabel = {Number of agents} ,
			ylabel = {Gain Rate ($\%$)} ,
            width=6cm,height=5cm,
			ytick={60,80,90,100},
            yticklabels={60,80,90,100},
            xtick={1,2,3,4,5,6},
            xticklabels={30,50,100,500,1000,2000},
            label style={font=\Large},
			title style={font=\Large},
            legend entries={SALDAE, PICS, FACS, CSG-UCT},
            legend style={at={(0,0.65)},anchor=north west,opacity=0.6,text opacity = 1}]
            \addplot[color = vert,mark=square*,mark options={fill=vert}] coordinates {(1,100) (2,100) (3,100) (4,100) (5,100) (6,100) };
            \addplot[color = blue,mark=*,mark options={fill=cyan}] coordinates {(1,89) (2,92) (3,96) (4,97) (5,93) (6,92) 
            };
            \addplot[color = orange,mark=triangle*,mark options={fill=orange}] coordinates {(1,81) (2,91) (3,87) (4,87) (5,91) (6,86) 
            };
            \addplot[mark=pentagon*,mark options={fill=red},color=red] coordinates {(1,80.26) (2,85.01) (3,56.29) (4,17.59) (5,22.09) (6,17.86) };
	\end{axis}
    \end{tikzpicture}\\
         \label{fig:y equals x}
     \end{subfigure}
     }
     \:\:\:\:\:
     \resizebox{0.32\columnwidth}{5.5cm}{%
     \begin{subfigure}{0.295\textwidth}
         \centering
         \begin{tikzpicture}
	\begin{axis}[	grid= major ,
            title=(d) Normal ,
			width=0.8\textwidth ,
            xlabel = {Number of agents} ,
			ylabel = {Gain Rate ($\%$)} ,
            width=6cm,height=5cm,
			ytick={90,96,97,99,100},
            yticklabels={90,96,97,99,100},
            xtick={1,2,3,4,5,6},
            xticklabels={30,50,100,500,1000,2000},
            label style={font=\Large},
			title style={font=\Large},
            legend entries={SALDAE, PICS, FACS, CSG-UCT},
            legend style={at={(0,0.55)},anchor=north west,opacity=0.4,text opacity = 1}]
            \addplot[color = vert,mark=square*,mark options={fill=vert}] coordinates {(1,99.89) (2,100) (3,100) (4,100) (5,100) (6,100) };
            \addplot[color = blue,mark=*,mark options={fill=cyan}] coordinates {(1,99.2) (2,99.4) (3,99.5) (4,99.6) (5,99.8) (6,99.7) 
            };
            \addplot[color = orange,mark=triangle*,mark options={fill=orange}] coordinates {(1,99) (2,99) (3,98) (4,98.4) (5,99) (6,98.5) 
            };
            \addplot[mark=pentagon*,mark options={fill=red},color=red] coordinates {(1,97.14) (2,96.53) (3,97.36) (4,95.50) (5,95.74) (6,96.0) };
	\end{axis}
    \end{tikzpicture}\\
         \label{fig:three sin x}
     \end{subfigure}
     }
    
             \caption{Gain rate of SALDAE, PICS, FACS and CSG-UCT when run with large numbers of agents.
             }
        \label{largeScaleResults}
\normalsize
\end{center}
\vspace{15pt}
\end{figure}

Figure \ref{solutionQuality10} shows the results of SALDAE using only one search agent. We compared the results to PICS using 20 search agents, and CSG-UCT, which is sequential. As can be seen, SALDAE performs better than PICS and CSG-UCT, while using only one search agent.

\begin{table}
\begin{center}
\renewcommand{\arraystretch}{1.5}
\small
\begin{tabular}{c|c|c|c}
\hline
\textbf{Distribution} &  \textbf{SALDAE} & \textbf{PICS} & \textbf{CSG-UCT}\\
\hline
\textbf{Uniform} & 1580 (79\%) & 153 (7.6\%) & 109 (5.4\%) \\
\textbf{A-b Normal} & 582 (29.1\%) & 132 (6.6\%) & 158 (7.9\%) \\
\textbf{A-b Uniform} & 551 (27.5\%) & 87 (4.4\%) & 97 (4.8\%) \\
\textbf{Normal} & 519 (26\%) & 98 (4.9\%) & 77 (3.8\%) \\
\textbf{Beta} & 1485 (74.3\%) & 502 (25.1\%) & 512 (25.6\%) \\
\textbf{Exponential} & 1826 (91.3\%) & 141 (7\%) & 101 (5\%) \\
\textbf{Gamma} & 1628 (81.4\%) & 192 (9.6\%) & 79 (4\%) \\
\textbf{Pascal} & 2000 (100\%) & 1743 (87\%) & 19 (1\%) \\
\textbf{Zipf} & 1286 (64.3\%) & 261 (13\%) & 97 (4.9\%) \\
\textbf{Cauchy} & 1714 (85.7\%) & 162 (8.1\%) & 47 (2.3\%) \\
\hline
\end{tabular}
\normalsize
\end{center}
\caption{Number of successes of the SALDAE, PICS and CSG-UCT algorithms on 2000 executions per distribution.}\label{successrate}
\end{table}

\begin{figure*}[h!]
\begin{center}
\small
\resizebox{0.31\textwidth}{5.7cm}{%
     \begin{subfigure}{0.28\textwidth}
         \centering
         \begin{tikzpicture}
	\begin{axis}[	grid= major ,
            title=Normal ,
			width=0.8\textwidth ,
            xlabel = {Number of agents} ,
			ylabel = {Solution quality (\%)} ,
            width=5cm,height=5cm,
            xtick={4,6,8,10,12,14,16,18,20},
            xticklabels={4,6,8,10,12,14,16,18,20},
            ytick={70,80,90,95,99,100},
            yticklabels={70,80,90,95,99,100},
            ymin=90,
            ymax=102,
            label style={font=\large},
			title style={font=\Large},
			tick label style={font=\footnotesize},
            legend entries={SALDAE (1), CSG-UCT, PICS},
			legend style={at={(0.8,1.6)},anchor=north}]
			\addplot+[only marks,color = blue,mark=square*,mark options={fill=cyan}] coordinates {(4,100) (5,100) (6,100) (7,100) (8,100) (9,100) (10,100) (11,100) (12,100) (13,100) (14,99.9) (15,99.7) (16,99.15) (17,99.17) (18,99.16) (19,99.24) (20,99.18) };//
			\addplot+[only marks,color = vert,mark=*,mark options={fill=vert}] coordinates {(4,100) (5,100) (6,100) (7,100) (8,99.7) (9,99.4) (10,99.2) (11,98.9) (12,98.8) (13,98.2) (14,97.81) (15,97.27) (16,96.23) (17,96.77) (18,96.46) (19,96.54) (20,96.31) };
            \addplot+[only marks,color = gold,mark=triangle*,mark options={fill=gold},ultra thin] coordinates {(4,100) (5,100) (6,100) (7,100) (8,100) (9,100) (10,100) (11,100) (12,100) (13,100) (14,99.9) (15,99.7) (16,99.2) (17,99.17) (18,98.76) (19,98.59) (20,98.68) };
            
	\end{axis}
    \end{tikzpicture}\\
         \label{fig:three sin x}
     \end{subfigure}
     }
\resizebox{0.31\textwidth}{5.7cm}{%
     \begin{subfigure}{0.28\textwidth}
         \centering
         \begin{tikzpicture}
	\begin{axis}[	grid= major ,
            title=Agent-based Normal,
			width=0.6\textwidth ,
            xlabel = {Number of agents} ,
			ylabel = {Solution quality (\%)} ,
            width=5cm,height=5cm,
            xtick={4,6,8,10,12,14,16,18,20},
            xticklabels={4,6,8,10,12,14,16,18,20},
            ytick={70,80,90,95,99,100},
            yticklabels={70,80,90,95,99,100},
            ymin=94,
            ymax=101,
            label style={font=\large},
			title style={font=\Large},
			tick label style={font=\footnotesize},
            legend entries={.,.,.},
			legend style={at={(1.75,1.5)},anchor=north}]
			\addplot+[only marks,color = blue,mark=square*,mark options={fill=cyan}] coordinates {(4,100) (5,100) (6,100) (7,100) (8,100) (9,100) (10,100) (11,100) (12,100) (13,99.95) (14,99.90) (15,99.87) (16,99.71) (17,99.42) (18,99.46) (19,99.4) (20,99.22) };//
			\addplot+[only marks,color = vert,mark=*,mark options={fill=vert}] coordinates {(4,100) (5,100) (6,100) (7,100) (8,100) (9,100) (10,100) (11,100) (12,100) (13,99.93) (14,99.94) (15,99.89) (16,99.66) (17,99.39) (18,99.46) (19,99.34) (20,99.18) };
            \addplot+[only marks,color = gold,mark=triangle*,mark options={fill=gold},ultra thin] coordinates {(4,100) (5,100) (6,100) (7,100) (8,100) (9,100) (10,100) (11,100) (12,100) (13,99.97) (14,99.92) (15,99.91) (16,99.63) (17,99.49) (18,99.46) (19,99.34) (20,99.38) };
	\end{axis}
    \end{tikzpicture}\\
         \label{fig:y equals x}
     \end{subfigure}
     }
     \resizebox{0.31\textwidth}{5.7cm}{%
     \begin{subfigure}{0.28\textwidth}
         \centering
         \begin{tikzpicture}
	\begin{axis}[	grid= major ,
            title=Uniform ,
			width=0.6\textwidth ,
            xlabel = {Number of agents} ,
			ylabel = {Solution quality (\%)} ,
            width=5cm,height=5cm,
            xtick={4,6,8,10,12,14,16,18,20},
            xticklabels={4,6,8,10,12,14,16,18,20},
            ytick={70,80,90,95,99,100},
            yticklabels={70,80,90,95,99,100},
            ymin=90,
            ymax=102,
            label style={font=\large},
			title style={font=\Large},
			tick label style={font=\footnotesize},
            legend entries={SALDAE (1), CSG-UCT, PICS},
			legend style={at={(0.2,1.6)},anchor=north}]
			\addplot+[only marks,color = blue,mark=square*,mark options={fill=cyan}] coordinates {(4,100) (5,100) (6,100) (7,100) (8,100) (9,100) (10,100) (11,100) (12,99.97) (13,99.9) (14,99.85) (15,99.78) (16,99.5) (17,99.27) (18,99.46) (19,99.34) (20,99.58) };//
			\addplot+[only marks,color = vert,mark=*,mark options={fill=vert}] coordinates {(4,100) (5,100) (6,100) (7,100) (8,100) (9,100) (10,99.8) (11,99.4) (12,99.1) (13,98.84) (14,98.61) (15,98.28) (16,98.43) (17,98.17) (18,98.06) (19,98.07) (20,98.03) };
            \addplot+[only marks,color = gold,mark=triangle*,mark options={fill=gold},ultra thin] coordinates {(4,100) (5,100) (6,100) (7,100) (8,100) (9,100) (10,100) (11,99.9) (12,99.8) (13,99.54) (14,99.11) (15,99.28) (16,99.13) (17,99.17) (18,99.06) (19,99.07) (20,99.03) };
	\end{axis}
    \end{tikzpicture}\\
         \label{fig:three sin x}
     \end{subfigure}
     }
     \hfill
     \resizebox{0.31\textwidth}{5.7cm}{%
     \begin{subfigure}{0.28\textwidth}
         \centering
         \begin{tikzpicture}
	\begin{axis}[	grid= major ,
            title=Beta ,
			width=0.8\textwidth ,
            xlabel = {Number of agents} ,
			ylabel = {Solution quality (\%)} ,
            width=5cm,height=5cm,
            xtick={4,6,8,10,12,14,16,18,20},
            xticklabels={4,6,8,10,12,14,16,18,20},
            ytick={70,80,90,95,99,100},
            yticklabels={70,80,90,95,99,100},
            ymin=90,
            ymax=102,
            label style={font=\large},
			title style={font=\Large},
			tick label style={font=\footnotesize},
            legend entries={SALDAE (1), CSG-UCT, PICS},
			legend style={at={(0.8,1.6)},anchor=north}]
			\addplot+[only marks,color = blue,mark=square*,mark options={fill=cyan}] coordinates {(4,100) (5,100) (6,100) (7,100) (8,100) (9,100) (10,100) (11,100) (12,100) (13,100) (14,100) (15,100) (16,100) (17,99.94) (18,99.93) (19,99.91) (20,99.95) };//
			\addplot+[only marks,color = vert,mark=*,mark options={fill=vert}] coordinates {(4,100) (5,100) (6,100) (7,100) (8,100) (9,100) (10,100) (11,100) (12,100) (13,100) (14,99.9) (15,99.7) (16,99.2) (17,99.17) (18,98.76) (19,98.59) (20,98.68) };
            \addplot+[only marks,color = gold,mark=triangle*,mark options={fill=gold},ultra thin] coordinates {(4,100) (5,100) (6,100) (7,100) (8,100) (9,100) (10,100) (11,100) (12,100) (13,100) (14,100) (15,100) (16,99.9) (17,99.77) (18,99.76) (19,99.59) (20,99.68) };
	\end{axis}
    \end{tikzpicture}\\
         \label{fig:three sin x}
     \end{subfigure}
     }
     \resizebox{0.31\textwidth}{5.7cm}{%
     \begin{subfigure}{0.28\textwidth}
         \centering
         \begin{tikzpicture}
	\begin{axis}[	grid= major ,
            title=Zipf ,
			width=0.6\textwidth ,
            xlabel = {Number of agents} ,
			ylabel = {Solution quality (\%)} ,
            width=5cm,height=5cm,
            xtick={4,6,8,10,12,14,16,18,20},
            xticklabels={4,6,8,10,12,14,16,18,20},
            ytick={70,80,90,95,99,100},
            yticklabels={70,80,90,95,99,100},
            ymin=90,
            ymax=102,
            label style={font=\large},
			title style={font=\Large},
			tick label style={font=\footnotesize},
            legend entries={.,.,.},
			legend style={at={(1.75,1.5)},anchor=north}]
			\addplot+[only marks,color = blue,mark=square*,mark options={fill=cyan}] coordinates {(4,100) (5,100) (6,100) (7,100) (8,100) (9,100) (10,100) (11,100) (12,100) (13,99.9) (14,99.71) (15,99.57) (16,99.21) (17,99.17) (18,99.06) (19,98.84) (20,98.71) };//
			\addplot+[only marks,color = vert,mark=*,mark options={fill=vert}] coordinates {(4,100) (5,100) (6,100) (7,100) (8,99.7) (9,99.4) (10,99.2) (11,98.5) (12,98.3) (13,97.8) (14,97.31) (15,96.57) (16,96.03) (17,95.77) (18,95.26) (19,95.14) (20,95.11) };
            \addplot+[only marks,color = gold,mark=triangle*,mark options={fill=gold},ultra thin] coordinates {(4,100) (5,100) (6,100) (7,100) (8,100) (9,100) (10,99.8) (11,99.7) (12,99.3) (13,99.1) (14,99.31) (15,98.07) (16,97.93) (17,97.77) (18,97.26) (19,97.14) (20,97.11) };
	\end{axis}
    \end{tikzpicture}\\
         \label{fig:three sin x}
     \end{subfigure}
     }
     \resizebox{0.31\textwidth}{5.7cm}{%
     \begin{subfigure}{0.28\textwidth}
         \centering
         \begin{tikzpicture}
	\begin{axis}[	grid= major ,
            title=Pascal ,
			width=0.6\textwidth ,
            xlabel = {Number of agents} ,
			ylabel = {Solution quality (\%)} ,
            width=5cm,height=5cm,
            xtick={4,6,8,10,12,14,16,18,20},
            xticklabels={4,6,8,10,12,14,16,18,20},
            ytick={70,80,90,95,99,100},
            yticklabels={70,80,90,95,99,100},
            ymin=80,
            ymax=102,
            label style={font=\large},
			title style={font=\Large},
			tick label style={font=\footnotesize},
            legend entries={SALDAE (1), CSG-UCT, PICS},
			legend style={at={(0.2,1.6)},anchor=north}]
			\addplot+[only marks,color = blue,mark=square*,mark options={fill=cyan}] coordinates {(4,100) (5,100) (6,100) (7,100) (8,100) (9,100) (10,100) (11,100) (12,100) (13,100) (14,100) (15,100) (16,100) (17,100) (18,100) (19,100) (20,100) };//
			\addplot+[only marks,color = vert,mark=*,mark options={fill=vert}] coordinates {(4,100) (5,100) (6,100) (7,100) (8,96) (9,97) (10,94) (11,92) (12,88) (13,89) (14,86) (15,82.7) (16,83) (17,84) (18,82) (19,80.5) (20,81) };
            \addplot+[only marks,color = gold,mark=triangle*,mark options={fill=gold},ultra thin] coordinates {(4,100) (5,100) (6,100) (7,100) (8,100) (9,100) (10,100) (11,100) (12,100) (13,100) (14,100) (15,100) (16,100) (17,100) (18,100) (19,100) (20,100) };
	\end{axis}
    \end{tikzpicture}\\
         \label{fig:three sin x}
     \end{subfigure}
}
\hfill
\resizebox{0.31\textwidth}{5.7cm}{%
     \begin{subfigure}{0.28\textwidth}
         \centering
         \begin{tikzpicture}
	\begin{axis}[	grid= major ,
            title=Agent-based Uniform ,
			width=0.8\textwidth ,
            xlabel = {Number of agents} ,
			ylabel = {Solution quality (\%)} ,
            width=5cm,height=5cm,
            xtick={4,6,8,10,12,14,16,18,20},
            xticklabels={4,6,8,10,12,14,16,18,20},
            ytick={70,80,90,95,99,100},
            yticklabels={70,80,90,95,99,100},
            ymin=90,
            ymax=102,
            label style={font=\large},
			title style={font=\Large},
			tick label style={font=\footnotesize},
            legend entries={SALDAE (1), CSG-UCT, PICS},
			legend style={at={(0.8,1.6)},anchor=north}]
			\addplot+[only marks,color = blue,mark=square*,mark options={fill=cyan}] coordinates {(4,100) (5,100) (6,100) (7,100) (8,100) (9,100) (10,100) (11,100) (12,100) (13,100) (14,99.81) (15,99.27) (16,98.55) (17,97.47) (18,97.56) 
			(19,96.3) (20,96.13) };//
			\addplot+[only marks,color = vert,mark=*,mark options={fill=vert}] coordinates {(4,100) (5,100) (6,99.9) (7,99.8) (8,99.2) (9,98.7) (10,98.2) (11,96.9) (12,96.2) (13,95.6) (14,95.11) (15,94.27) (16,94.23) (17,93.77) (18,92.46) (19,92.54) (20,92.31) };
            \addplot+[only marks,color = gold,mark=triangle*,mark options={fill=gold},ultra thin] coordinates {(4,100) (5,100) (6,100) (7,100) (8,99.7) (9,99.4) (10,99.2) (11,98.9) (12,98.8) (13,98.2) (14,97.81) (15,97.27) (16,96.23) (17,96.77) (18,96.46) (19,96.54) (20,96.31) };
	\end{axis}
    \end{tikzpicture}\\
         \label{fig:three sin x}
     \end{subfigure}
     }
     \resizebox{0.31\textwidth}{5.7cm}{%
     \begin{subfigure}{0.28\textwidth}
         \centering
         \begin{tikzpicture}
	\begin{axis}[	grid= major ,
            title=Gamma ,
			width=0.6\textwidth ,
            xlabel = {Number of agents} ,
			ylabel = {Solution quality (\%)} ,
            width=5cm,height=5cm,
            xtick={4,6,8,10,12,14,16,18,20},
            xticklabels={4,6,8,10,12,14,16,18,20},
            ytick={70,80,90,95,99,100},
            yticklabels={70,80,90,95,99,100},
            ymin=60,
            ymax=102,
            label style={font=\large},
			title style={font=\Large},
			tick label style={font=\footnotesize},
            legend entries={.,.,.},
			legend style={at={(1.75,1.5)},anchor=north}]
			\addplot+[only marks,color = blue,mark=square*,mark options={fill=cyan}] coordinates {(4,100) (5,100) (6,100) (7,100) (8,100) (9,100) (10,100) (11,100) (12,100) (13,99.9) (14,99.6) (15,99.59) (16,99.13) (17,99.27) (18,99.36) (19,99.34) (20,99.96) };//
			\addplot+[only marks,color = vert,mark=*,mark options={fill=vert}] coordinates {(4,100) (5,100) (6,100) (7,99.55) (8,97.1) (9,94.2) (10,91.96) (11,92.91) (12,89.86) (13,83.65) (14,84) (15,78.79) (16,73.63) (17,74.97) (18,71.16) (19,69.94) (20,65.1) };
            \addplot+[only marks,color = gold,mark=triangle*,mark options={fill=gold},ultra thin] coordinates {(4,100) (5,100) (6,100) (7,100) (8,100) (9,100) (10,99.96) (11,99.91) (12,99.86) (13,99.65) (14,99) (15,98.79) (16,97.63) (17,96.97) (18,96.16) (19,96.94) (20,96) };
	\end{axis}
    \end{tikzpicture}\\
         \label{fig:three sin x}
     \end{subfigure}
     }
     \resizebox{0.31\textwidth}{5.7cm}{%
     \begin{subfigure}{0.28\textwidth}
         \centering
         \begin{tikzpicture}
	\begin{axis}[	grid= major ,
            title=Exponential ,
			width=0.6\textwidth ,
            xlabel = {Number of agents} ,
			ylabel = {Solution quality (\%)} ,
            width=5cm,height=5cm,
            xtick={4,6,8,10,12,14,16,18,20},
            xticklabels={4,6,8,10,12,14,16,18,20},
            ytick={70,80,90,95,99,100},
            yticklabels={70,80,90,95,99,100},
            ymin=65,
            ymax=102,
            label style={font=\large},
			title style={font=\Large},
			tick label style={font=\footnotesize},
            legend entries={SALDAE (1), CSG-UCT, PICS},
			legend style={at={(0.2,1.6)},anchor=north}]
			\addplot+[only marks,color = blue,mark=square*,mark options={fill=cyan}] coordinates {(4,100) (5,100) (6,100) (7,100) (8,100) (9,100) (10,100) (11,99.9) (12,99.8) (13,99.2) (14,99.11) (15,98.68) (16,98.13) (17,97.87)
			(18,97.16) (19,96.64) (20,96.1) };//
			\addplot+[only marks,color = vert,mark=*,mark options={fill=vert}] coordinates {(4,100) (5,100) (6,100) (7,99.55) (8,97.1) (9,94.2) (10,91.96) (11,92.91) (12,89.86) (13,83.65) (14,84) (15,78.79) (16,73.63) (17,74.97) (18,71.16) (19,71.94) (20,70.8) };
            \addplot+[only marks,color = gold,mark=triangle*,mark options={fill=gold},ultra thin] coordinates {(4,100) (5,100) (6,100) (7,100) (8,99) (9,97) (10,94) (11,94) (12,93) (13,90) (14,89) (15,88.7) (16,87) (17,86.4) (18,86) (19,84.5) (20,84) };
	\end{axis}
    \end{tikzpicture}\\
         \label{fig:three sin x}
     \end{subfigure}
}

             \caption{Solution quality of SALDAE using one search agent, PICS and CSG-UCT for sets of agents between 4 and 20.}
        \label{solutionQuality10}
\normalsize
\end{center}
\end{figure*}

\end{document}